\theoremstyle{definition}
\newtheorem{definition}{Definition}
\newtheorem{example}{Example}
\theoremstyle{plain}
\newtheorem{theorem}{Theorem}
\newtheorem{proposition}{Proposition}
\newtheorem{lemma}{Lemma}
\newtheorem{remark}{Remark}
\newtheorem{corollary}{Corollary}
\begin{document}
%
% paper title
% Titles are generally capitalized except for words such as a, an, and, as,
% at, but, by, for, in, nor, of, on, or, the, to and up, which are usually
% not capitalized unless they are the first or last word of the title.
% Linebreaks \\ can be used within to get better formatting as desired.
% Do not put math or special symbols in the title.
\title{Generalized rank weights of reducible codes, optimal cases and related properties}
%
%
% author names and IEEE memberships
% note positions of commas and nonbreaking spaces ( ~ ) LaTeX will not break
% a structure at a ~ so this keeps an author's name from being broken across
% two lines.
% use \thanks{} to gain access to the first footnote area
% a separate \thanks must be used for each paragraph as LaTeX2e's \thanks
% was not built to handle multiple paragraphs
%

\author{Umberto Mart{\'i}nez-Pe\~{n}as,~\IEEEmembership{Student Member,~IEEE,} % <-this % stops a space
\thanks{This work was supported by The Danish Council for Independent Research under Grant No. DFF-4002-00367 and Grant No. DFF-5137-00076B (EliteForsk-Rejsestipendium).

Parts of this paper were presented at the IEEE International Symposium on Information Theory, Barcelona, Spain, Jul 2016. \cite{grwreducible}

U. Mart{\'i}nez-Pe\~{n}as is with the Department of Mathematical Sciences, Aalborg University, Aalborg 9220, Denmark (e-mail: umberto@math.aau.dk).}}% <-this % stops a space
% \thanks{Manuscript received April 19, 2005; revised September 17, 2014.}}

% note the % following the last \IEEEmembership and also \thanks - 
% these prevent an unwanted space from occurring between the last author name
% and the end of the author line. i.e., if you had this:
% 
% \author{....lastname \thanks{...} \thanks{...} }
%                     ^------------^------------^----Do not want these spaces!
%
% a space would be appended to the last name and could cause every name on that
% line to be shifted left slightly. This is one of those "LaTeX things". For
% instance, "\textbf{A} \textbf{B}" will typeset as "A B" not "AB". To get
% "AB" then you have to do: "\textbf{A}\textbf{B}"
% \thanks is no different in this regard, so shield the last } of each \thanks
% that ends a line with a % and do not let a space in before the next \thanks.
% Spaces after \IEEEmembership other than the last one are OK (and needed) as
% you are supposed to have spaces between the names. For what it is worth,
% this is a minor point as most people would not even notice if the said evil
% space somehow managed to creep in.

% The paper headers
\markboth{}%
{Shell \MakeLowercase{\textit{et al.}}: Bare Demo of IEEEtran.cls for Journals}
% The only time the second header will appear is for the odd numbered pages
% after the title page when using the twoside option.
% 
% *** Note that you probably will NOT want to include the author's ***
% *** name in the headers of peer review papers.                   ***
% You can use \ifCLASSOPTIONpeerreview for conditional compilation here if
% you desire.

% If you want to put a publisher's ID mark on the page you can do it like
% this:
%\IEEEpubid{0000--0000/00\$00.00~\copyright~2014 IEEE}
% Remember, if you use this you must call \IEEEpubidadjcol in the second
% column for its text to clear the IEEEpubid mark.

% use for special paper notices
%\IEEEspecialpapernotice{(Invited Paper)}

% make the title area
\maketitle

% As a general rule, do not put math, special symbols or citations
% in the abstract or keywords.
\begin{abstract}
Reducible codes for the rank metric were introduced for cryptographic purposes. They have fast encoding and decoding algorithms, include maximum rank distance (MRD) codes and can correct many rank errors beyond half of their minimum rank distance, which makes them suitable for error-correction in network coding. In this paper, we study their security behaviour against information leakage on networks when applied as coset coding schemes, giving the following main results: 1) we give lower and upper bounds on their generalized rank weights (GRWs), which measure worst-case information leakage to the wire-tapper, 2) we find new parameters for which these codes are MRD (meaning that their first GRW is optimal), and use the previous bounds to estimate their higher GRWs, 3) we show that all linear (over the extension field) codes whose GRWs are all optimal for fixed packet and code sizes but varying length are reducible codes up to rank equivalence, and 4) we show that the information leaked to a wire-tapper when using reducible codes is often much less than the worst case given by their (optimal in some cases) GRWs. We conclude with some secondary related properties: Conditions to be rank equivalent to cartesian products of linear codes, conditions to be rank degenerate, duality properties and MRD ranks. \\
\end{abstract}

% Note that keywords are not normally used for peerreview papers.
\begin{IEEEkeywords}
Generalized rank weight, rank-metric codes, rank distance, rank equivalent codes, reducible codes, secure network coding.
\end{IEEEkeywords}

% For peer review papers, you can put extra information on the cover
% page as needed:
% \ifCLASSOPTIONpeerreview
% \begin{center} \bfseries EDICS Category: 3-BBND \end{center}
% \fi
%
% For peerreview papers, this IEEEtran command inserts a page break and
% creates the second title. It will be ignored for other modes.
\IEEEpeerreviewmaketitle

\section{Introduction}
% The very first letter is a 2 line initial drop letter followed
% by the rest of the first word in caps.
% 
% form to use if the first word consists of a single letter:
% \IEEEPARstart{A}{demo} file is ....
% 
% form to use if you need the single drop letter followed by
% normal text (unknown if ever used by IEEE):
% \IEEEPARstart{A}{}demo file is ....
% 
% Some journals put the first two words in caps:
% \IEEEPARstart{T}{his demo} file is ....
% 
% Here we have the typical use of a "T" for an initial drop letter
% and "HIS" in caps to complete the first word.

\IEEEPARstart{L}{inear} network coding was first studied in \cite{ahlswede, linearnetwork}, further formalized in \cite{Koetter2003}, and provides higher throughput than storing and forwarding messages on the network. Two of the \textit{main problems} in this context are error and erasure correction, and security against information leakage to a wire-tapper, which were first studied in \cite{cai-yeung} and \cite{secure-network}, respectively. 

Rank-metric codes were found to be universally suitable (meaning independently of the underlying network code) for error and erasure correction in linear network coding in \cite{on-metrics}, used as forward error-correcting codes, and they were found to be universally suitable against information leakage in \cite{silva-universal}, used in the form of coset coding. Both constructions can be treated separately and applied together in a concatenated way (see \cite[Sec. VII-B]{silva-universal}). 

On the security side, generalized rank weights (GRWs) of codes that are linear over the extension field were introduced in \cite{rgrw, oggier} to measure the worst-case information leakage for a given number of wire-tapped links. Later, GRWs were extended in \cite{ravagnaniweights} and \cite{allertonversion} to codes that are linear over the base field, where they are called Delsarte generalized weights and generalized matrix weights, respectively. We will use the term GRWs for the latter parameters, which were also found to measure the worst-case information leakage for codes that are linear over the base field \cite[Th. 3]{allertonversion}.

Gabidulin codes \cite{gabidulin} constitute a family of maximum rank distance (MRD) codes that cover all cases when the number of outgoing links $ n $ is not larger than the packet length $ m $, and all of their GRWs are optimal (meaning largest possible). 

Cartesian products of these codes are proposed in \cite[Sec. VII.C]{silva-universal} for the case $ n > m $ both for error correction and security against information leakage. A generalization of these codes, called reducible codes, were introduced earlier in \cite{reducible} as an alternative to Gabidulin codes \cite{gabidulin} to improve the security of rank-based public key cryptosystems \cite{ideals}. On the error correction side, it was shown in \cite{reducible} that reducible codes have fast encoding and rank error-correcting algorithms, their minimum rank distance is not worse than that of cartesian products of codes \cite[Sec. VII.C]{silva-universal}, being actually MRD in some cases, and they can correct many rank errors beyond half of their minimum rank distance (even in the MRD cases). Therefore they seem to be the best known codes for error correction in linear network coding when $ n > m $. 

However, on the security side, only the existence of codes with optimal first GRW (MRD codes) has been studied in the case $ n > m $ \cite[Sec. IV-B]{allertonversion}, but no bounds nor estimates of higher GRWs of rank-metric codes or other properties related to their worst-case information leakage are known when $ n > m $, except for cyclic codes with minimal GRWs \cite{oggiercyclic}.

In this paper, we study the security provided by reducible codes in linear network coding when used for coset coding as in \cite{silva-universal} by studying their GRWs and showing their optimality in several cases. In particular, we study for the first time the GRWs of a concrete family of rank-metric codes with $ n > m $, which moreover include MRD codes for several parameters.

\subsection{Main contributions}

Our main contributions are the following:
\begin{enumerate}
\item
We give lower and upper bounds on GRWs of reducible codes, and exact values for cartesian products, giving a first step in the open problem of estimating or bounding the GRWs of a family of rank-metric codes for $ n > m $. % Only Singleton-type upper bounds were known in general (see \cite[Cor. 2]{rgrw}, \cite[Sec. V-B]{similarities}, \cite[Th. 5]{allertonversion} or \cite[Th. 30]{ravagnaniweights}). 
\item
We give new families of parameters for which reducible codes are MRD (some were given in \cite{reducible}), meaning that their first GRW is optimal and thus they are optimal regarding zero information leakage among all linear (over the extension or the base field) codes, by \cite[Th. 3]{allertonversion}. Using the estimates and exact values of GRWs of these codes in the previous item, we also give a first step in the open problem of finding the GRWs of a family of MRD codes for $ n > m $.
\item  
We show that all linear (over the extension field) codes whose GRWs are all optimal for fixed packet and code sizes, but varying length, lie in the family of reducible codes from the previous item, up to rank equivalence. 
\item
Finally, we show that information leakage when using reducible codes is often much less than the worst case given by their GRWs. In particular, they often provide strictly higher security than the known security provided by other MRD codes \cite[Sec. IV-B]{allertonversion}.
\end{enumerate}

\subsection{Organization of the paper}

After some preliminaries in Section II, the paper is organized as follows: In Section III, we give lower and upper bounds on the GRWs of reducible codes, extending the lower bound on the minimum rank distance given in \cite{reducible}, and see that the given upper bound on the minimum rank distance can be reached by some reduction. In Section IV, we obtain new parameters for which reducible codes are MRD (or close to MRD) and with MRD components, and obtain explicit estimates on their GRWs, including those MRD codes found in \cite{reducible} and considered for secure network coding in \cite{silva-universal}. In Section V, we obtain all linear codes whose GRWs are all optimal, for all fixed packet and code sizes, up to rank equivalence. In Section VI, we see that the actual information leakage occuring when using reducible codes is often much less than the worst case given by their GRWs, providing higher security than other known MRD codes. Finally, in Section VII, we study secondary but related properties: Conditions to be rank equivalent to cartesian products and conditions to be rank degenerate. We study their duality properties and MRD ranks. Finally, we propose alternative constructions to the classical $ (\mathbf{u}, \mathbf{u} + \mathbf{v}) $ construction.

\section{Definitions and preliminaries}

\subsection{Rank-metric codes}

Fix a prime power $ q $ and positive integers $ m $ and $ n $, and let $ \mathbb{F}_q $ and $ \mathbb{F}_{q^m} $ denote the finite fields with $ q $ and $ q^m $ elements, respectively. We may identify vectors in $ \mathbb{F}_{q^m}^n $ with $ m \times n $ matrices over $ \mathbb{F}_q $: Fix a basis $ \alpha_1, \alpha_2, \ldots, \alpha_m $ of $ \mathbb{F}_{q^m} $ over $ \mathbb{F}_q $. If $ \mathbf{c} = (c_1, c_2, \ldots, c_n) \in \mathbb{F}_{q^m}^n $, $ c_j = \sum_{i = 1}^m \alpha_i c_{i,j} $, and $ c_{i,j} \in \mathbb{F}_q $, for $ i = 1,2, \ldots, m $ and $ j = 1,2, \ldots, n $, we may identify $ \mathbf{c} $ with the matrix
\begin{equation}
M(\mathbf{c}) = \left( c_{i,j} \right)_{1 \leq j \leq n}^{1 \leq i \leq m}.
\label{eq def matrix representation}
\end{equation}
The rank weight of a vector $ \mathbf{c} \in \mathbb{F}_{q^m}^n $ is defined as the rank of the matrix $ M(\mathbf{c}) $ and denoted by $ {\rm wt_R}(\mathbf{c}) $. In this paper, a code is a subset of $ \mathbb{F}_{q^m}^n $. The term \textit{rank-metric code} is used for codes with the rank metric.

\subsection{Universal secure linear network coding} \label{subsec universal secure}

We consider a network with several sources and several sinks as in \cite{ahlswede, linearnetwork}. In this model, a given source wants to transmit $ k $ packets in $ \mathbb{F}_q^m $ to one or several sink nodes, and does so by encoding them into a vector, $ \mathbf{c} \in \mathbb{F}_{q^m}^n $, which can be seen as $ n $ packets in $ \mathbb{F}_q^m $ by (\ref{eq def matrix representation}), being $ n $ the number of outgoing links from the source.

In \textit{linear network coding}, as considered in \cite{ahlswede} and \cite{linearnetwork}, the nodes in the network forward linear combinations of received packets (see \cite[Definition 1]{Koetter2003}), achieving higher throughput than just storing and forwarding. This means that a given sink is assumed to receive the vector
$$ \mathbf{y} = \mathbf{c} A^T \in \mathbb{F}_{q^m}^N, $$
for some matrix $ A \in \mathbb{F}_q^{N \times n} $, called a transfer matrix. 

Two of the \textit{main problems} in linear network coding considered in the literature are the following: 
\begin{enumerate}
\item
Error correction \cite{cai-yeung}: Several packets are injected on some links in the network, hence the sink receives 
$$ \mathbf{y} = \mathbf{c} A^T + \mathbf{e} \in \mathbb{F}_{q^m}^N, $$
for an error vector $ \mathbf{e} \in \mathbb{F}_{q^m}^N $. 
\item
Information leakage \cite{secure-network}: A wire-tapper listens to $ \mu > 0 $ links in the network, obtaining 
$$ \mathbf{z} = \mathbf{c} B^T \in \mathbb{F}_{q^m}^{\mu}, $$
for a matrix $ B \in \mathbb{F}_q^{\mu \times n} $. 
\end{enumerate}

In \cite{on-metrics}, it is proven that rank-metric codes are suitable for error correction when used as forward error-correcting codes, and in \cite{silva-universal}, it is proven that they are also suitable to protect messages from information leakage when used as coset coding schemes, which were introduced in \cite{wyner} and \cite{ozarow}. Both coding techniques can be treated separately and applied together in a concatenated way (see \cite[Sec. VII-B]{silva-universal}). 

Moreover, rank-metric codes are \textit{universal} \cite{silva-universal} in the sense that they correct a given number of errors and erasures, and protect against a given number of wire-tapped links, independently of the matrices $ A $ and $ B $, respectively. 

We consider the particular coding schemes in \cite[Sec. V-B]{silva-universal} with uniform distributions:

\begin{definition} [\textbf{\cite{silva-universal}}] \label{definition ozarow}
Given an $ \mathbb{F}_{q^m} $-linear code $ C \subseteq \mathbb{F}_{q^m}^n $ with generator matrix $ G \in \mathbb{F}_{q^m}^{k \times n} $, we define its coset coding scheme as follows: For $ \mathbf{x} \in \mathbb{F}_{q^m}^k $, its coset encoding is a vector $ \mathbf{c} \in \mathbb{F}_{q^m}^n $ chosen uniformly at random and such that $ \mathbf{x} = \mathbf{c} G^T $. 
\end{definition}

This type of encoding has been recently extended to $ \mathbb{F}_q $-linear codes in \cite[Sec. II-D]{allertonversion}.

In this paper we will focus on rank-metric codes used for security against information leakage in the form of coset coding.

\subsection{Generalized rank weights and information leakage} \label{subsec GRWs and leakage}

The information leaked to a wire-tapping adversary when using coset coding schemes was obtained in \cite[Lemma 6]{silva-universal}, then generalized in \cite[Lemma 7]{rgrw} to $ \mathbb{F}_{q^m} $-linear nested coset coding schemes \cite{zamir}, and in \cite[Prop. 4]{allertonversion} to $ \mathbb{F}_q $-linear coset coding schemes. 

We need the concept of Galois closed spaces \cite{stichtenoth}:

\begin{definition}[\textbf{\cite{stichtenoth}}]
Denote $ [i] = q^i $ for an integer $ i \geq 0 $. If $ C \subseteq \mathbb{F}_{q^m}^n $ is $ \mathbb{F}_{q^m} $-linear, we denote 
$$ C^{[i]} = \{ (c_1^{[i]}, c_2^{[i]}, \ldots, c_n^{[i]}) \mid (c_1, c_2, \ldots, c_n) \in C \}, $$
we define the Galois closure of $ C $ as $ C^* = \sum_{i=0}^{m-1} C^{[i]} $, and we say that it is Galois closed if $ C = C^* $.
\end{definition}

The next lemma is \cite[Lemma 6]{silva-universal}. Throughout the paper, $ I(X;Y) $ denotes the mutual information of the random variables $ X $ and $ Y $, taking logarithms with base $ q^m $.

\begin{lemma}[\textbf{\cite{silva-universal}}] \label{lemma silva}
Denote by $ S $ the uniform random variable in $ \mathbb{F}_{q^m}^k $, $ X $ its coset encoding using an $ \mathbb{F}_{q^m} $-linear code $ C \subseteq \mathbb{F}_{q^m}^n $ according to Definition \ref{definition ozarow}, and denote $ W = XB^T $, where $ B \in \mathbb{F}_q^{\mu \times n} $. Then
\begin{equation} \label{leaked info}
I(S;W) = \dim(C \cap V),
\end{equation}
where $ V \subseteq \mathbb{F}_{q^m}^n $ is the $ \mathbb{F}_{q^m} $-linear vector space with generator matrix $ B $.
\end{lemma}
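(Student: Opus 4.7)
The plan is to reduce the mutual information to a purely linear-algebraic quantity by exploiting that everything in sight is uniform on some $\mathbb{F}_{q^m}$-linear subspace, and then to extract $\dim(C \cap V)$ via Grassmann's identity. Throughout, entropies are in base $q^m$, so that a uniform random variable on an $\mathbb{F}_{q^m}$-subspace $U$ of $\mathbb{F}_{q^m}^N$ has entropy exactly $\dim_{\mathbb{F}_{q^m}} U$.

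First I would show that $X$ itself is uniform on $\mathbb{F}_{q^m}^n$. Since $G$ is a generator matrix of $C$ (so has full row rank $k$), the map $\mathbf{x}\mapsto \mathbf{x}G^T$ from $\mathbb{F}_{q^m}^n$ to $\mathbb{F}_{q^m}^k$ is surjective with kernel $C^\perp$ of size $q^{m(n-k)}$. By Definition~\ref{definition ozarow}, the conditional distribution of $X$ given $S=\mathbf{s}$ is uniform on the coset $\{\mathbf{c}\in\mathbb{F}_{q^m}^n : \mathbf{c}G^T=\mathbf{s}\}$. Combined with the uniform distribution of $S$, a direct computation gives $P(X=\mathbf{x})=q^{-mk}\cdot q^{-m(n-k)}=q^{-mn}$, so $X$ is uniform on $\mathbb{F}_{q^m}^n$.

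Next I would package $(S,W)$ as a single linear image of $X$. Stacking generator matrices, set
\begin{equation*}
M=\begin{pmatrix} G \\ B \end{pmatrix}\in\mathbb{F}_{q^m}^{(k+\mu)\times n},
\end{equation*}
so that $(S,W)=(XG^T,XB^T)=XM^T$. Since $X$ is uniform on $\mathbb{F}_{q^m}^n$, each of $S$, $W$, and $(S,W)$ is uniform on the image of the corresponding linear map. Therefore
\begin{equation*}
H(S)=\mathrm{rank}(G)=\dim C,\quad H(W)=\mathrm{rank}(B)=\dim V,\quad H(S,W)=\mathrm{rank}(M)=\dim(C+V),
\end{equation*}
where the last equality uses that the row space of $M$ is $C+V$.

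Finally I would combine these using the identity $I(S;W)=H(S)+H(W)-H(S,W)$ and Grassmann's formula $\dim(C+V)=\dim C+\dim V-\dim(C\cap V)$, which immediately yields $I(S;W)=\dim(C\cap V)$. I do not foresee any real obstacle: the only subtlety is being careful that $B$ need not have full row rank, which is automatically handled by writing the entropy of $W$ as $\mathrm{rank}(B)=\dim V$ rather than $\mu$, and that the whole argument lives over $\mathbb{F}_{q^m}$ even though $B$ has entries in $\mathbb{F}_q$, since the row space $V$ is taken over $\mathbb{F}_{q^m}$.
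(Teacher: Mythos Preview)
The paper does not actually prove this lemma: it is quoted verbatim as \cite[Lemma~6]{silva-universal} and stated without proof, so there is no in-paper argument to compare against. Your proposal is a correct, self-contained proof. The key steps---showing $X$ is uniform on $\mathbb{F}_{q^m}^n$, writing $(S,W)=XM^T$ for the stacked matrix $M$, reading off $H(S)$, $H(W)$, $H(S,W)$ as the $\mathbb{F}_{q^m}$-ranks of $G$, $B$, $M$, and finishing with Grassmann---are all valid, and you correctly handle the two delicate points (that $B$ may be row-rank-deficient, and that ranks and row spaces are taken over $\mathbb{F}_{q^m}$ throughout even though $B$ has entries in $\mathbb{F}_q$; this is harmless since rank is preserved under field extension).
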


Since Galois closed spaces in $ \mathbb{F}_{q^m}^n $ are those $ \mathbb{F}_{q^m} $-linear spaces with a generator matrix over $ \mathbb{F}_q $ \cite[Lemma 1]{stichtenoth}, the previous lemma motivates the definition of generalized rank weights, introduced independently in \cite{oggier} for $ n \leq m $, and in \cite[Def. 2]{rgrw} for the general case:

\begin{definition}[\textbf{\cite{rgrw}}]
Given an $ \mathbb{F}_{q^m} $-linear code $ C \subseteq \mathbb{F}_{q^m}^n $ of dimension $ k $, we define its $ r $-th generalized rank weight (GRW), for $ 1 \leq r \leq k $, as
\begin{equation*}
\begin{split}
d_{R,r}(C) = \min \{ & \dim (V) \mid V \subseteq \mathbb{F}_{q^m}^n, \mathbb{F}_{q^m} \textrm{-linear and} \\
 & V=V^*, \dim(C \cap V) \geq r \}.
\end{split}
\end{equation*}
We also define $ d_{R,0}(C) = 0 $ for convenience.
\end{definition}

Hence $ d_{R,r}(C) $ is the minimum number of links that a wire-tapper needs to listen to in order to obtain at least the amount of information contained in $ r $ packets. In other words, $ r-1 $ packets is the \textit{worst-case information leakage} when at most $ d_{R,r}(C)-1 $ links are wire-tapped. 

The next lemma corresponds to \cite[Th. 16, Cor. 17]{slides}:

\begin{lemma}[\textbf{\cite{slides}}] \label{equivalent definition slides}
Given an $ \mathbb{F}_{q^m} $-linear code $ C \subseteq \mathbb{F}_{q^m}^n $ of dimension $ k $ and $ 1 \leq r \leq k $, it holds that
\begin{equation*}
\begin{split}
d_{R,r}(C) = \min \{ & {\rm wt_R}(D) \mid D \subseteq C, \mathbb{F}_{q^m} \textrm{-linear and} \\
 & \dim(D) = r \},
\end{split}
\end{equation*}
where $ {\rm wt_R}(D) = \dim(D^*) $ for an $ \mathbb{F}_{q^m} $-linear $ D \subseteq \mathbb{F}_{q^m}^n $.
\end{lemma}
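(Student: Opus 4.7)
The plan is to prove both inequalities separately, using only two properties of the Galois closure operator: monotonicity ($D \subseteq V$ implies $D^{[i]} \subseteq V^{[i]}$, since the coordinate-wise $q^i$-power map preserves inclusion) and the fact that Galois closed spaces $V$ satisfy $V^{[i]} \subseteq V$ for every $i \geq 0$ (because $V = V^* = \sum_{i=0}^{m-1} V^{[i]}$ contains each summand).

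For the inequality $d_{R,r}(C) \leq \min\{\mathrm{wt_R}(D)\}$, fix any $\mathbb{F}_{q^m}$-linear subspace $D \subseteq C$ with $\dim(D) = r$. I would take the candidate subspace $V := D^*$, which is Galois closed by construction. Since $D \subseteq D^* = V$ and $D \subseteq C$, we get $D \subseteq C \cap V$, so $\dim(C \cap V) \geq r$. Therefore $V$ is admissible in the original definition of $d_{R,r}(C)$, giving $d_{R,r}(C) \leq \dim(V) = \dim(D^*) = \mathrm{wt_R}(D)$. Minimizing over $D$ yields the inequality.

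For the reverse inequality $d_{R,r}(C) \geq \min\{\mathrm{wt_R}(D)\}$, fix any Galois closed $V \subseteq \mathbb{F}_{q^m}^n$ with $\dim(C \cap V) \geq r$, and pick any $\mathbb{F}_{q^m}$-linear subspace $D \subseteq C \cap V$ of dimension exactly $r$. It suffices to prove $D^* \subseteq V$, for then $\mathrm{wt_R}(D) = \dim(D^*) \leq \dim(V)$ and minimizing over $V$ concludes the argument. But $D \subseteq V$ and monotonicity give $D^{[i]} \subseteq V^{[i]}$ for each $i$, while $V$ being Galois closed gives $V^{[i]} \subseteq V$; combining yields $D^{[i]} \subseteq V$ for all $i$, hence $D^* = \sum_{i=0}^{m-1} D^{[i]} \subseteq V$.

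I do not expect a real obstacle here: the statement is essentially a reformulation of the definition through the adjunction between Galois closure and inclusion into Galois closed spaces. The only subtle point worth verifying is that $D^{[i]}$ is itself $\mathbb{F}_{q^m}$-linear (so that $D^*$ is a well-defined subspace), which follows because for $\lambda \in \mathbb{F}_{q^m}$ and $d \in D$ one has $\lambda\, d^{[i]} = (\lambda^{[-i]} d)^{[i]} \in D^{[i]}$ since $\lambda^{[-i]} d \in D$.
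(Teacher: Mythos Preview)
Your proof is correct and is the standard argument. Note, however, that the paper does not actually prove this lemma: it is quoted verbatim from \cite{slides} (Jurrius--Pellikaan, Th.~16 and Cor.~17), so there is no in-paper proof to compare against. Your argument via the Galois-closure/inclusion adjunction is exactly the expected one and matches the spirit of the original reference.
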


In particular, it is shown in \cite[Cor. 1]{rgrw} that $ d_{R,1}(C) $ is the minimum rank distance of the code $ C $ (also denoted by $ d_R(C) $). Thus the minimum rank distance is of particular importance, since it gives the maximum number of wire-tapped links that guarantee zero information leakage, and we may evaluate the code's optimality among all rank-metric codes (linear and non-linear) in this sense using the Singleton bound \cite[Th. 6.3]{delsartebilinear}:
\begin{equation} \label{singleton bound}
\# C \leq q^{\max \{ m,n \} (\min \{ m,n \} - d_R(C) + 1)},
\end{equation}
where $ C \subseteq \mathbb{F}_{q^m}^n $ is an arbitrary rank-metric code. Codes attaining this bound are called maximum rank distance (MRD) codes.

\subsection{Existing MRD code constructions} \label{existing code constructions}

We briefly revisit two existing code constructions that have already been considered in the literature:

\begin{enumerate}
\item
Assume $ n \leq m $ and $ 1 \leq k \leq n $: Take elements $ \beta_1, \beta_2, \ldots, \beta_n \in \mathbb{F}_{q^m} $ that are linearly independent over $ \mathbb{F}_q $. The $ \mathbb{F}_{q^m} $-linear code $ C_{Gab} \subseteq \mathbb{F}_{q^m}^n $ generated by the matrix
\begin{displaymath}
\left( \begin{array}{cccc}
\beta_1 & \beta_2 & \ldots & \beta_n \\
\beta_1^{[1]} & \beta_2^{[1]} & \ldots & \beta_n^{[1]} \\
\vdots & \vdots & \ddots & \vdots \\
\beta_1^{[k-1]} & \beta_2^{[k-1]} & \ldots & \beta_n^{[k-1]} \\
\end{array} \right)
\end{displaymath}
has dimension $ k $ and minimum rank distance $ d_{R}(C_{Gab}) = n - k + 1 $, and hence is MRD. These codes are known as Gabidulin codes and were introduced in \cite{gabidulin}. Their GRWs were given in \cite[Cor. 2]{rgrw}: 
$$ d_{R,r}(C_{Gab}) = n - k + r. $$
\item
Assume $ n = lm $ and $ k = lk^\prime $, for some positive integers $ l $ and $ k^\prime \leq m $: The $ \mathbb{F}_{q^m} $-linear code $ C \subseteq \mathbb{F}_{q^m}^n $ defined as $ C = C_1 \times C_2 \times \cdots \times C_l $, where each $ C_i \subseteq \mathbb{F}_{q^m}^m $ is a $ k^\prime $-dimensional Gabidulin code, has dimension $ k $ and minimum rank distance $ d_{R}(C) = m - k^\prime + 1 $, and hence is also MRD. These codes were introduced in \cite[Cor. 1]{reducible} and considered in \cite[Sec. VII-C]{silva-universal} for secure network coding. In contrast with Gabidulin codes, although a first analysis of these codes is given in \cite{silva-universal}, their GRWs are still not known. We will find all of them in Section \ref{estimates on their}.
\end{enumerate}

The two previous constructions are particular cases of reducible codes, introduced in \cite{reducible}, which we will study in the rest of the paper.

In \cite{allertonversion}, MRD $ \mathbb{F}_q $-linear codes obtained by transposing the matrix representations of codewords in a Gabidulin code are proposed for the case $ n > m $. However no exact values or lower bounds are known for any code in this case ($ n > m $).

\subsection{Reducible codes and reductions} \label{subsec reducible}

Consider positive integers $ l, n_1, n_2, \ldots, n_l $ and $ \mathbb{F}_{q^m} $-linear codes $ C_1 \subseteq \mathbb{F}_{q^m}^{n_1}, C_2 \subseteq \mathbb{F}_{q^m}^{n_2}, \ldots, C_l \subseteq \mathbb{F}_{q^m}^{n_l} $ of dimensions $ k_1 $, $ k_2, \ldots, k_l $, respectively. Consider matrices $ G_{i,j} \in \mathbb{F}_{q^m}^{k_i \times n_j} $, for $ i =1,2, \ldots, l $ and $ j = i,i+1, \ldots, l $, where $ G_{i,i} $ generates $ C_i $. 

\begin{definition} [\textbf{\cite{reducible}}] \label{def reducible}
We say that an $ \mathbb{F}_{q^m} $-linear code $ C \subseteq \mathbb{F}_{q^m}^n $ is reducible with reduction $ \mathcal{R} = (G_{i,j})_{1 \leq i \leq l}^{i \leq j \leq l} $ if it has a generator matrix of the form
\begin{displaymath}
G = \left(
\begin{array}{cccccc}
G_{1,1} & G_{1,2} & G_{1,3} & \ldots & G_{1, l-1} & G_{1,l} \\
0 & G_{2,2} & G_{2,3} & \ldots & G_{2, l-1} & G_{2,l} \\
0 & 0 & G_{3,3} & \ldots & G_{3, l-1} & G_{3,l} \\
\vdots & \vdots & \vdots & \ddots & \vdots & \vdots \\
0 & 0 & 0 & \ldots & G_{l-1,l-1} & G_{l-1,l} \\
0 & 0 & 0 & \ldots & 0 & G_{l,l} \\
\end{array} \right).
\end{displaymath}
\end{definition}

The length of the code $ C $ is $ n = n_1 + n_2 + \cdots + n_l $ and its dimension is $ k = k_1 + k_2 + \cdots + k_l $. $ C $ is the cartesian product of the codes $ C_1, C_2, \ldots, C_l $ if $ G_{i,j} = 0 $, for all $ j > i $.

\begin{definition}
For a given reduction $ \mathcal{R} $ as in the previous definition, we define its main components as the codes $ C_1, C_2, \ldots, C_l $, its row components as the $ \mathbb{F}_{q^m} $-linear codes $ C_i^\prime \subseteq \mathbb{F}_{q^m}^n $ with generator matrices
\begin{equation} \label{row components}
G_i^\prime = (0, \ldots, 0, G_{i,i}, G_{i,i+1}, \ldots, G_{i,l}),
\end{equation}
for $ i = 1,2, \ldots, l $, and its column components as the $ \mathbb{F}_{q^m} $-linear codes $ \widehat{C}_j \subseteq \mathbb{F}_{q^m}^{n_j} $ generated by the matrices
\begin{equation} \label{column components}
\widehat{G}_j = (G_{1,j}, G_{2,j}, \ldots, G_{j,j})^T,
\end{equation}
for $ j = 1,2, \ldots, l $, which need not have full rank. 
\end{definition}

It holds that $ k_i = \dim(C_i^\prime) $, $ \widehat{k}_j = \dim(\widehat{C}_j) \geq k_j $, $ C = C_1^\prime \oplus C_2^\prime \oplus \cdots \oplus C_l^\prime $ and $ C \subseteq \widehat{C} = \widehat{C}_1 \times \widehat{C}_2 \times \cdots \times \widehat{C}_l $.

Different reductions always have the same main components if their block sizes are the same. See Appendix \ref{app 1} for a discussion on the uniqueness of reductions of a reducible code.

\section{Bounds on GRWs of reducible codes and exact values} \label{sec bounds on GRWs}

With notation as in Subsection \ref{subsec reducible}, it is proven in \cite[Lemma 2]{reducible} that 
\begin{equation} \label{bound min distance}
d_{R,1}(C) \geq \min \{ d_{R,1}(C_1), d_{R,1}(C_2), \ldots, d_{R,1}(C_l) \}. 
\end{equation}
We now present the main result of this section, which generalizes (\ref{bound min distance}) to higher GRWs and also gives upper bounds. As observed below, it gives the exact values for cartesian products.

\begin{theorem} \label{grw bounds}
With notation as in Subsection \ref{subsec reducible}, for every $ r = 1,2, \ldots, k $, we have that
\begin{equation} \label{lower bound}
\begin{split}
d_{R,r}(C) \geq \min \{ & d_{R,r_1}(C_1) + d_{R,r_2}(C_2) + \cdots + d_{R,r_l}(C_l) \\
 & \mid r = r_1 + r_2 + \cdots + r_l, 0 \leq r_i \leq k_i \},
\end{split}
\end{equation} 
and
\begin{equation} \label{upper bound}
\begin{split}
d_{R,r}(C) \leq \min \{ & d_{R,r_1}(C_1^\prime) + d_{R,r_2}(C_2^\prime) + \cdots + d_{R,r_l}(C_l^\prime) \\
 & \mid r = r_1 + r_2 + \cdots + r_l, 0 \leq r_i \leq k_i \}.
\end{split}
\end{equation}
\end{theorem}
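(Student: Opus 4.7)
The plan is to handle the two bounds by essentially dual arguments: the upper bound by constructing an explicit subspace of $C$ of dimension $r$ and controlled rank weight using the direct sum $C = C_1^\prime \oplus C_2^\prime \oplus \cdots \oplus C_l^\prime$, and the lower bound by induction on $l$, using that Galois closure commutes with coordinate projections and that $\ker \pi_1$ is itself Galois closed. Throughout I would use Lemma \ref{equivalent definition slides} to reformulate $d_{R,r}$ as a minimum of $\mathrm{wt}_R(D) = \dim(D^*)$ over $r$-dimensional subspaces $D$.

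For the upper bound (\ref{upper bound}), fix a decomposition $r = r_1 + \cdots + r_l$ with $0 \leq r_i \leq k_i$. For each $i$ choose an $\mathbb{F}_{q^m}$-linear subspace $D_i \subseteq C_i^\prime$ with $\dim D_i = r_i$ achieving $\mathrm{wt}_R(D_i) = d_{R,r_i}(C_i^\prime)$. Since $C = C_1^\prime \oplus \cdots \oplus C_l^\prime$, the sum $D = D_1 + \cdots + D_l \subseteq C$ is direct and has dimension exactly $r$. Because Galois closure commutes with $\mathbb{F}_{q^m}$-linear sums ($ (V+W)^* = V^* + W^* $, immediate from $C^* = \sum_i C^{[i]}$), we have
\begin{equation*}
\mathrm{wt}_R(D) = \dim(D^*) = \dim\bigl( D_1^* + \cdots + D_l^* \bigr) \leq \sum_{i=1}^l \dim(D_i^*) = \sum_{i=1}^l d_{R,r_i}(C_i^\prime).
\end{equation*}
Taking the minimum over all admissible decompositions yields (\ref{upper bound}).

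For the lower bound (\ref{lower bound}), I would argue by induction on $l$, the case $l=1$ being trivial. For the inductive step, let $\pi_1 : \mathbb{F}_{q^m}^n \to \mathbb{F}_{q^m}^{n_1}$ be the projection onto the first block, and let $C^{\prime\prime} \subseteq \mathbb{F}_{q^m}^{n-n_1}$ denote the code obtained from $C \cap \ker \pi_1$ by dropping the first $n_1$ (zero) coordinates; by the upper triangular structure of $G$, $C^{\prime\prime}$ is reducible with main components $C_2, \ldots, C_l$, so the induction hypothesis applies. Now let $D \subseteq C$ be $r$-dimensional with $\mathrm{wt}_R(D) = d_{R,r}(C)$, set $D_1 = \pi_1(D) \subseteq C_1$, $r_1 = \dim D_1$, and $D^\prime = D \cap \ker \pi_1$, of dimension $r - r_1$. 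Two facts drive the estimate: (i) $\pi_1$ commutes with $(\cdot)^{[i]}$, hence $\pi_1(D^*) = \pi_1(D)^* = D_1^*$; (ii) $\ker \pi_1$ is Galois closed (cut out by $\mathbb{F}_q$-linear equations), so $(D^\prime)^* \subseteq D^* \cap \ker \pi_1$. Combining these,
\begin{equation*}
\dim(D^*) = \dim(\pi_1(D^*)) + \dim(D^* \cap \ker \pi_1) \geq \dim(D_1^*) + \dim((D^\prime)^*) = \mathrm{wt}_R(D_1) + \mathrm{wt}_R(D^\prime).
\end{equation*}
Since $D_1 \subseteq C_1$ has dimension $r_1$ and $D^\prime$ corresponds to a subspace of $C^{\prime\prime}$ of dimension $r - r_1$, applying the definition of $d_{R,r_1}(C_1)$ and the inductive hypothesis to $C^{\prime\prime}$ gives the claimed bound.

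The main obstacle I anticipate is the careful bookkeeping around Galois closures in the lower bound: one must verify that projection commutes with $*$, that $\ker \pi_1$ is Galois closed, and that $(D^\prime)^*$ embeds into $D^* \cap \ker \pi_1$ (the reverse inclusion is false in general, but only one direction is needed). Once these are in place, the dimension formula $\dim(D^*) = \dim(\pi_1(D^*)) + \dim(D^* \cap \ker \pi_1)$ is just the rank-nullity theorem applied to $\pi_1$ restricted to the Galois-closed space $D^*$, and the induction goes through cleanly. The boundary cases $r_i = 0$ are covered by the convention $d_{R,0}(C_i) = 0$.
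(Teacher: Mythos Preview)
Your upper-bound argument is essentially identical to the paper's.

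For the lower bound, your approach is correct but takes a different route from the paper. The paper does not induct on $l$: instead it proves an auxiliary decomposition lemma showing that any $\mathbb{F}_{q^m}$-linear subspace $D \subseteq \mathbb{F}_{q^m}^n$ splits as a direct sum $D = \bigoplus_i D_i'$, where each $D_i'$ is contained in the span of the vectors of $D$ whose first nonzero block sits in position $i$; it then projects each $D_i'$ injectively onto the $i$-th main component $C_i$ and argues, by lifting bases of the $D_i^*$ back into $D^*$, that ${\rm wt_R}(D) \geq \sum_i {\rm wt_R}(D_i)$. Your inductive scheme instead peels off one block at a time via $\pi_1$, replacing that preimage-lifting step by the transparent combination of rank--nullity for $\pi_1|_{D^*}$, the identity $\pi_1(D^*) = (\pi_1 D)^*$, and the inclusion $(D \cap \ker\pi_1)^* \subseteq D^* \cap \ker\pi_1$. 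Both arguments exploit the same block-triangular structure; yours is arguably cleaner and self-contained (no auxiliary lemma, and the Galois-closure bookkeeping is made explicit), while the paper's one-shot decomposition is more symmetric in the $l$ components and is reused later in the paper for the stronger-security results.
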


The proof can be found at the end of the section. We now elaborate on some particular cases of interest.

First, observe that the bound (\ref{lower bound}) gives the bound (\ref{bound min distance}) for the minimum rank distance (the case $ r=1 $), and the bound (\ref{upper bound}) gives the following (immediate) upper bound:
\begin{equation} \label{upper bound min distance}
d_{R,1}(C) \leq \min \{ d_{R,1}(C_1^\prime), d_{R,1}(C_2^\prime), \ldots, d_{R,1}(C_l^\prime) \}.
\end{equation}

The previous theorem also gives the following corollary for cartesian products:

\begin{corollary} \label{exact grw product}
If $ C = C_1 \times C_2 \times \cdots \times C_l $ and $ 1 \leq r \leq k $, with notation as before, then
\begin{equation} 
\begin{split}
d_{R,r}(C) = \min \{ & d_{R,r_1}(C_1) + d_{R,r_2}(C_2) + \cdots + d_{R,r_l}(C_l) \\
 & \mid r = r_1 + r_2 + \cdots + r_l \}.
\end{split}
\end{equation} \label{grw product}
\end{corollary}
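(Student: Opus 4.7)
The plan is to derive the corollary directly from Theorem~\ref{grw bounds} by showing that, for a cartesian product, the lower bound in (\ref{lower bound}) and the upper bound in (\ref{upper bound}) coincide. Since Theorem~\ref{grw bounds} gives
\[
d_{R,r}(C) \geq \min\{d_{R,r_1}(C_1)+\cdots+d_{R,r_l}(C_l)\} \leq d_{R,r}(C) \leq \min\{d_{R,r_1}(C_1^\prime)+\cdots+d_{R,r_l}(C_l^\prime)\},
\]
it suffices to prove that $d_{R,r_i}(C_i^\prime)=d_{R,r_i}(C_i)$ for every $i$ and every admissible $r_i$.

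First I would identify the row components for a cartesian product: when $G_{i,j}=0$ for all $j>i$, the generator matrix $G_i^\prime$ from (\ref{row components}) consists of zeros except for $G_{i,i}$ in the $i$-th block, so $C_i^\prime = \{0\}^{n_1}\times\cdots\times\{0\}^{n_{i-1}}\times C_i \times \{0\}^{n_{i+1}}\times\cdots\times\{0\}^{n_l}$. In other words, $C_i^\prime$ is just $C_i$ zero-padded into the full ambient space $\mathbb{F}_{q^m}^n$.

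Next, I would use the characterization of GRWs in Lemma~\ref{equivalent definition slides}: $d_{R,r_i}(C_i^\prime)$ is the minimum of $\dim(D^*)$ over $\mathbb{F}_{q^m}$-linear subspaces $D\subseteq C_i^\prime$ of dimension $r_i$. Any such $D$ corresponds bijectively to an $\mathbb{F}_{q^m}$-linear subspace $\widetilde D\subseteq C_i$ of the same dimension via the coordinate embedding, and the Galois closure commutes with this embedding (applying the Frobenius componentwise preserves the zero coordinates, and summing the shifts preserves them as well). Therefore $\dim(D^*)=\dim(\widetilde D^{\,*})$, so the minima agree: $d_{R,r_i}(C_i^\prime)=d_{R,r_i}(C_i)$. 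Plugging this into the upper bound (\ref{upper bound}) makes it match the lower bound (\ref{lower bound}), yielding the claimed equality.

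The argument is essentially routine once Theorem~\ref{grw bounds} is in hand; the only nontrivial point is verifying that zero-padding preserves the Galois closure dimension, which is the step to be careful with since GRWs involve Galois closed spaces rather than just ordinary dimensions.
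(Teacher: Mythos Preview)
Your proposal is correct and follows essentially the same approach as the paper: the paper presents Corollary~\ref{exact grw product} as an immediate consequence of Theorem~\ref{grw bounds}, the point being that for a cartesian product the row components $C_i^\prime$ are just zero-padded copies of the main components $C_i$, so $d_{R,r_i}(C_i^\prime)=d_{R,r_i}(C_i)$ and the two bounds in Theorem~\ref{grw bounds} coincide (cf.\ also Remark~\ref{remark exact grw product}, which makes this reasoning explicit). Your careful verification via Lemma~\ref{equivalent definition slides} that zero-padding preserves the Galois-closure dimension is exactly the detail the paper leaves implicit.
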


Now we illustrate Theorem \ref{grw bounds} with the following example that includes the MRD $ \mathbb{F}_{q^m} $-linear codes in Subsection \ref{existing code constructions}, item 2, for $ l = 2 $:

\begin{example} \label{example computation}
With notation as in Theorem \ref{grw bounds}, assume that $ l = 2 $, $ n_1, n_2 \leq m $, $ k_1 \leq k_2 $ and take $ C_1 $ and $ C_2 $ as MRD codes (the matrix $ G_{1,2} $ can be arbitrary). In particular, $ d_{R,r_i}(C_i) = n_i - k_i + r_i $ \cite{rgrw} as in Subsection \ref{existing code constructions}, $ 1 \leq r_i \leq k_i $, $ i = 1,2 $. We estimate $ d_{R,r}(C) $ considering three cases:
\begin{enumerate}
\item
Assume $ 1 \leq r \leq k_1 $: The bounds (\ref{lower bound}) and (\ref{upper bound}) give
$$ \min \{ n_1 - k_1, n_2 - k_2 \} + r \leq d_{R,r}(C) \leq n_2 - k_2 + r. $$
\item
Assume $ k_1 < r \leq k_2 $ (if $ k_1 < k_2 $): In this case, in both bounds in Theorem \ref{grw bounds}, it is necessary that $ r_2 > 0 $. Hence, these bounds coincide and give the value $ d_{R,r}(C) = n_2 - k_2 + r $.
\item
Assume $ k_2 < r \leq k $: As in the previous case, now it is necessary that $ r_1 > 0 $ and $ r_2 > 0 $, and thus Theorem \ref{grw bounds} gives the value $ d_{R,r}(C) = n - k +r $, which is optimal by the Singleton bound \cite[Proposition 1]{rgrw}.
\end{enumerate}
\end{example}

Finally, it is natural to ask whether different reductions (see Definition \ref{def reducible}) may give different bounds in Theorem \ref{grw bounds}. In Appendix \ref{app 1}, we show that all reductions have the same main components, thus (\ref{lower bound}) remains unchanged. We now show that (\ref{upper bound min distance}) can always be attained by some particular reduction. Other cases where (\ref{upper bound}) may be attained by some reduction are open.

\begin{proposition} \label{proposition exact for one red}
With notation as in Subsection \ref{subsec reducible}, there exists a reduction $ \overline{\mathcal{R}} = (\overline{G}_{i,j})_{1 \leq i \leq l}^{i \leq j \leq l} $ of $ C $ such that the bound (\ref{upper bound min distance}) is an equality.
\end{proposition}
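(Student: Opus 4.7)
The plan is to construct a reduction $ \overline{\mathcal{R}} $ in which one row component $ \overline{C}_{i_0}' $ contains a fixed minimum weight codeword of $ C $; this will force the upper bound (\ref{upper bound min distance}) to be tight for $ \overline{\mathcal{R}} $, since every $ \overline{C}_i' \subseteq C $ automatically satisfies $ d_{R,1}(\overline{C}_i') \geq d_{R,1}(C) $.

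First, I would fix a codeword $ \mathbf{c} \in C $ with $ {\rm wt_R}(\mathbf{c}) = d_{R,1}(C) $, write it blockwise as $ \mathbf{c} = (\mathbf{c}^{(1)}, \mathbf{c}^{(2)}, \ldots, \mathbf{c}^{(l)}) $ according to the partition $ n = n_1 + n_2 + \cdots + n_l $, and let $ i_0 $ be the smallest index with $ \mathbf{c}^{(i_0)} \neq 0 $. Next, I would introduce the filtration
$$ C_{\geq i} = \{ \mathbf{d} \in C \mid \mathbf{d}^{(j)} = 0 \ \textrm{for all} \ j < i \}, $$
with $ C_{\geq 1} = C $ and $ C_{\geq l+1} = \{0\} $. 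Using the block-upper-triangular shape of any existing reduction of $ C $ together with the fact that each $ G_{i,i} $ has full row rank $ k_i $, a short induction on $ i $ shows that the projection of $ C_{\geq i} $ onto the $ i $-th coordinate block is exactly the main component $ C_i $, with kernel $ C_{\geq i+1} $; in particular $ \mathbf{c}^{(i_0)} \in C_{i_0} $ and $ \dim(C_{\geq i} / C_{\geq i+1}) = k_i $.

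Then I would build $ \overline{\mathcal{R}} $ by picking, for each $ i = 1, 2, \ldots, l $, an $ \mathbb{F}_{q^m} $-linear complement $ \overline{C}_i' $ of $ C_{\geq i+1} $ inside $ C_{\geq i} $. For $ i \neq i_0 $ the choice is arbitrary, while for $ i = i_0 $ I would exploit the fact that $ \mathbf{c} \in C_{\geq i_0} \setminus C_{\geq i_0+1} $ to select a complement containing $ \mathbf{c} $: extend the nonzero class of $ \mathbf{c} $ in $ C_{\geq i_0}/C_{\geq i_0+1} $ to a basis and lift it to a $ k_{i_0} $-dimensional subspace through $ \mathbf{c} $. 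Telescoping the filtration gives $ C = \overline{C}_1' \oplus \overline{C}_2' \oplus \cdots \oplus \overline{C}_l' $, each $ \overline{C}_i' $ projects isomorphically onto $ C_i $ on its $ i $-th block and vanishes on the earlier blocks, so taking a basis of each $ \overline{C}_i' $ assembles into a generator matrix of $ C $ of the shape in Definition \ref{def reducible}, i.e., a reduction $ \overline{\mathcal{R}} $ with the (canonical) main components $ C_1, \ldots, C_l $ and with row components $ \overline{C}_1', \ldots, \overline{C}_l' $.

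Finally, since $ \mathbf{c} $ is a nonzero element of $ \overline{C}_{i_0}' $, we get $ d_{R,1}(\overline{C}_{i_0}') \leq {\rm wt_R}(\mathbf{c}) = d_{R,1}(C) $, and combined with the opposite inequality coming from $ \overline{C}_i' \subseteq C $ this forces the right-hand side of (\ref{upper bound min distance}) for $ \overline{\mathcal{R}} $ to equal $ d_{R,1}(C) $. I do not foresee a serious obstacle: the conceptual point is that specifying a reduction is the same as choosing a splitting of the filtration $ C = C_{\geq 1} \supseteq C_{\geq 2} \supseteq \cdots \supseteq C_{\geq l+1} = \{0\} $, and any such splitting can be adjusted so that a prescribed $ \mathbf{c} \in C_{\geq i_0} \setminus C_{\geq i_0+1} $ lies in the $ i_0 $-th graded piece.
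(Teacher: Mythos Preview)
Your proof is correct and shares the paper's core idea: pick a minimum-rank-weight codeword $\mathbf{c}\in C$ and arrange for it to sit inside a single row component of some reduction, which immediately forces equality in (\ref{upper bound min distance}).

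The execution, however, is packaged differently. The paper starts from a fixed reduction $\mathcal{R}$, writes $\mathbf{c}=\sum_i\mathbf{x}_iG_i'$, assumes (WLOG) $\mathbf{x}_1\neq 0$, and then explicitly solves the linear systems $\mathbf{x}_1A_{1,j}=-\mathbf{x}_j$ to build a block change-of-basis matrix $A$ with $G=A\overline{G}$; this keeps $\overline{C}_i'=C_i'$ for $i\geq 2$ and only replaces the first row component so that $\mathbf{c}\in\overline{C}_1'$. Your argument instead identifies the intrinsic filtration $C=C_{\geq 1}\supseteq\cdots\supseteq C_{\geq l+1}=\{0\}$, observes that reductions are exactly the splittings of this filtration, and simply chooses the $i_0$-th complement through $\mathbf{c}$. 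The two are equivalent (your $i_0$ coincides with the paper's WLOG index since $\mathbf{c}^{(j)}=0$ for $j<i$ iff $\mathbf{x}_j=0$ for $j<i$), but your filtration viewpoint makes the structural reason transparent and avoids the matrix bookkeeping; the paper's version, in exchange, is more hands-on and would let one write down $\overline{G}$ explicitly from $G$.
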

\begin{proof}
Assume that the minimum rank distance is attained by $ {\rm wt_R}(\mathbf{c}) = d_{R,1}(C) $, for $ \mathbf{c} \in C $. It holds that $ \mathbf{c} = \mathbf{c}_1^\prime + \mathbf{c}_2^\prime + \cdots + \mathbf{c}_l^\prime $, with $ \mathbf{c}_i^\prime \in C_i^\prime $, and $ \mathbf{c}_i^\prime = \mathbf{x}_i G_{i,i}^\prime $ (recall (\ref{row components})), for some $ \mathbf{x}_i \in \mathbb{F}_{q^m}^{k_i} $ and all $ i = 1,2, \ldots, l $. 

We may assume without loss of generality that $ \mathbf{x}_1 \neq \mathbf{0} $. We just need to define $ \overline{G}_{i,i} = G_{i,i} $ and choose matrices $ A_{1,j} \in \mathbb{F}_{q^m}^{k_1 \times k_j} $ and $ \overline{G}_{i,j} \in \mathbb{F}_{q^m}^{k_i \times n_j} $, for $ 1 \leq i \leq l-1 $ and $ i+1 \leq j \leq l $, such that the $ k \times k $ matrix
\begin{displaymath}
A = \left(
\begin{array}{cccccc}
I & A_{1,2} & A_{1,3} & \ldots & A_{1, l-1} & A_{1,l} \\
0 & I & 0 & \ldots & 0 & 0 \\
0 & 0 & I & \ldots & 0 & 0 \\
\vdots & \vdots & \vdots & \ddots & \vdots & \vdots \\
0 & 0 & 0 & \ldots & I & 0 \\
0 & 0 & 0 & \ldots & 0 & I \\
\end{array} \right)
\end{displaymath}
satisfies that $ G = A \overline{G} $, where $ \overline{G} $ is the generator matrix of $ C $ corresponding to $ \overline{\mathcal{R}} = (\overline{G}_{i,j})_{1 \leq i \leq l}^{i \leq j \leq l} $, and
$$ \mathbf{x}_1 A_{1,j} = - \mathbf{x}_j, $$
for $ j = 2,3, \ldots, l $. It is possible to choose such matrices $ A_{1,j} $ because $ \mathbf{x}_1 \neq \mathbf{0} $. Then $ \mathbf{c} = (\mathbf{x} A) \overline{G} $ lies in the first row component of the reduction $ \overline{\mathcal{R}} $ and hence $ d_{R,1}(C) = {\rm wt_R}(\mathbf{c}) \geq d_{R,1}(\overline{C}_1^\prime) $, implying the result.
\end{proof}

We conclude the section with the proof of Theorem \ref{grw bounds}. We need the following lemma:

\begin{lemma} \label{lemma D generators}
With notation as in Subsection \ref{subsec reducible}, define the sets
$$ A_i = \{ 0 \}^{n_1} \times \cdots \times \{ 0 \}^{n_{i-1}} \times (\mathbb{F}_{q^m}^{n_i} \setminus \{ \mathbf{0} \}) \times \mathbb{F}_{q^m}^{n_{i+1}} \times \cdots \times \mathbb{F}_{q^m}^{n_l}, $$
for $ i = 1,2, \ldots, l $. For an $ \mathbb{F}_{q^m} $-linear vector space $ D \subseteq \mathbb{F}_{q^m}^n $, there exist subspaces $ D_i^\prime \subseteq \langle D \cap A_i \rangle $, for $ i $ satisfying $ D \cap A_i \neq \varnothing $, such that $ D = \bigoplus_{D \cap A_i \neq \varnothing} D^\prime_i $ and $ D^\prime_i \cap A_j = \varnothing $ for $ j > i $. 
\end{lemma}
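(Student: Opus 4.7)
The idea is to exploit the natural block-filtration of $D$ coming from the partition $n = n_1 + n_2 + \cdots + n_l$. For each $i = 1, 2, \ldots, l+1$, set
$$ D_{\geq i} = \{ \mathbf{d} \in D \mid \mathbf{d}_1 = \mathbf{d}_2 = \cdots = \mathbf{d}_{i-1} = \mathbf{0} \}, $$
where $\mathbf{d}_j \in \mathbb{F}_{q^m}^{n_j}$ denotes the $j$-th block of $\mathbf{d}$, with the conventions $D_{\geq 1} = D$ and $D_{\geq l+1} = 0$. This yields a decreasing chain of $\mathbb{F}_{q^m}$-linear subspaces $D = D_{\geq 1} \supseteq D_{\geq 2} \supseteq \cdots \supseteq D_{\geq l+1} = 0$. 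The first observation is that $D \cap A_i = D_{\geq i} \setminus D_{\geq i+1}$, and therefore $D \cap A_i \neq \varnothing$ if and only if $D_{\geq i} \neq D_{\geq i+1}$.

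Let $I = \{ i \mid D \cap A_i \neq \varnothing \}$. For each $i \in I$, I would choose an $\mathbb{F}_{q^m}$-linear complement $D_i'$ of $D_{\geq i+1}$ inside $D_{\geq i}$, so that $D_{\geq i} = D_i' \oplus D_{\geq i+1}$; for $i \notin I$, set $D_i' = 0$. Iterating the decomposition along the filtration then gives $D = D_{\geq 1} = \bigoplus_{i \in I} D_i'$, establishing the direct-sum property.

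The two remaining conditions are verified as follows. For $D_i' \subseteq \langle D \cap A_i \rangle$ when $i \in I$, note that $D_i' \subseteq D_{\geq i}$, so it suffices to show $\langle D \cap A_i \rangle = D_{\geq i}$. Since $D_{\geq i} \neq D_{\geq i+1}$, pick any $\mathbf{v} \in D_{\geq i} \setminus D_{\geq i+1} = D \cap A_i$; then any $\mathbf{w} \in D_{\geq i+1}$ satisfies $\mathbf{v} + \mathbf{w} \in D \cap A_i$ too (its $i$-th block still equals $\mathbf{v}_i \neq \mathbf{0}$), so $\mathbf{w} = (\mathbf{v}+\mathbf{w}) - \mathbf{v}$ lies in $\langle D \cap A_i \rangle$, and together with $D \cap A_i \subseteq \langle D \cap A_i \rangle$ this gives the equality. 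For the condition $D_i' \cap A_j = \varnothing$ when $j > i$, suppose $\mathbf{u} \in D_i' \cap A_j$; membership in $A_j$ forces $\mathbf{u}_1 = \cdots = \mathbf{u}_{j-1} = \mathbf{0}$, so $\mathbf{u} \in D_{\geq j} \subseteq D_{\geq i+1}$, hence $\mathbf{u} \in D_i' \cap D_{\geq i+1} = 0$, contradicting $\mathbf{u} \in A_j$.

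The only step requiring any real thought is the equality $\langle D \cap A_i \rangle = D_{\geq i}$, since $D \cap A_i$ is not a subspace (it excludes vectors whose $i$-th block cancels); the translation trick $\mathbf{w} = (\mathbf{v}+\mathbf{w}) - \mathbf{v}$ above is what bridges this gap, and everything else is a routine consequence of choosing complements along a filtration.
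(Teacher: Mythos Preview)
Your proof is correct and essentially the same as the paper's. The paper argues by induction on $\#\{i \mid D \cap A_i \neq \varnothing\}$, at each step taking the smallest such index $i$, setting $\widetilde{D} = \sum_{j>i} \langle D \cap A_j \rangle$, and choosing $D_i'$ as a complement of $\widetilde{D}$ in $D$; since $i$ is minimal one has $D = D_{\geq i}$ and $\widetilde{D} = D_{\geq i+1}$ in your notation, so both constructions produce the same complements along the same filtration. Your filtration formulation is arguably cleaner, and you spell out explicitly (via the translation trick $\mathbf{w} = (\mathbf{v}+\mathbf{w}) - \mathbf{v}$) the identity $\langle D \cap A_i \rangle = D_{\geq i}$ that the paper leaves implicit.
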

\begin{proof}
We may prove it by induction on the number of indices $ i $ such that $ D \cap A_i \neq \varnothing $. If such number is $ 1 $, the result is trivial by taking $ D^\prime_i = D $, since $ D = \langle D \cap A_i \rangle $. 

Assume that it is larger than $ 1 $ and $ i $ is the smallest index such that $ D \cap A_i \neq \varnothing $. Define $ \widetilde{D} = \sum_{j = i+1}^l \langle D \cap A_j \rangle \neq \{ \mathbf{0} \} $, and let $ D^\prime_i \neq \{ \mathbf{0} \} $ by one of its complementaries in $ D $. It follows that $ D^\prime_i \subseteq \langle D \cap A_i \rangle $ and $ D^\prime_i \cap A_j = \varnothing $, for $ j > i $.

Now, by induction hypothesis, $ \widetilde{D} $ has a decomposition as in the theorem, which together with $ D^\prime_i $ gives the desired decomposition of $ D $.
\end{proof}

\begin{proof} [Proof of Theorem \ref{grw bounds}]
We first prove (\ref{lower bound}). Take an $ r $-dimensional $ \mathbb{F}_{q^m} $-linear subspace $ D \subseteq C $. With notation as in Lemma \ref{lemma D generators}, define $ D_i \subseteq C_i $ as the projection of $ D^\prime_i $ onto the $ i $-th main component, for $ i $ such that $ D \cap A_i \neq \varnothing $. We see that $ \dim(D_i) = \dim(D^\prime_i) $, since $ D_i^\prime \subseteq \langle D \cap A_i \rangle $ and $ D^\prime_i \cap A_j = \varnothing $ for $ j > i $, and by collecting the preimages in $ D^* $ by the projection map of bases of $ D_i^* $, for $ i $ such that $ D \cap A_i \neq \varnothing $, we see that
$$ {\rm wt_R}(D) \geq \sum_{D \cap A_i \neq \varnothing} {\rm wt_R}(D_i), $$
and the result follows by Lemma \ref{equivalent definition slides}.

To prove (\ref{upper bound}), take a decomposition $ r = r_1 + r_2 + \cdots + r_l $, with $ 0 \leq r_i \leq k_i $, for $ i = 1,2, \ldots, l $, and take $ \mathbb{F}_{q^m} $-linear subspaces $ D_i \subseteq C_i^\prime $ with $ \dim(D_i) = r_i $ and $ {\rm wt_R}(D_i) = d_{R,r_i}(C_i^\prime) $. Then define the $ \mathbb{F}_{q^m} $-linear subspace $ D = D_1 \oplus D_2 \oplus \cdots \oplus D_l \subseteq C $, which satisfies $ \dim(D) = r $. By definition, it holds that $ D^* = D_1^* + D_2^* + \cdots + D_l^* $. Hence
\begin{equation*} 
\begin{split}
{\rm wt_R}(D) & \leq {\rm wt_R}(D_1) + {\rm wt_R}(D_2) + \cdots + {\rm wt_R}(D_l) \\
 & = d_{R,r_1}(C_1^\prime) + d_{R,r_2}(C_2^\prime) + \cdots + d_{R,r_l}(C_l^\prime),
\end{split}
\end{equation*} 
and the result follows again by Lemma \ref{equivalent definition slides}.
\end{proof}

\begin{remark}
Observe that the bound (\ref{upper bound}) is valid with the same proof for a general $ \mathbb{F}_{q^m} $-linear code that can be decomposed as a direct sum of $ \mathbb{F}_{q^m} $-linear subcodes $ C = C_1^\prime \oplus C_2^\prime \oplus \cdots \oplus C_l^\prime $.
\end{remark}

\begin{remark} \label{remark exact grw product}
In the general setting of Theorem \ref{grw bounds}, the same result as in Corollary \ref{exact grw product} holds whenever $ C_i $ and $ C_i^\prime $ are rank equivalent (see Section \ref{sec all optimal}), for each $ i = 1,2, \ldots, l $, since in that case it holds that $ d_{R,r}(C_i) = d_{R,r}(C_i^\prime) $ for all $ i =1,2, \ldots, l $ and all $ r = 1,2, \ldots, k_i $.
\end{remark}

\section{MRD reducible codes with MRD main components, and their GRWs}

Among all GRWs, the first weight (the minimum rank distance) is of special importance, as explained at the end of Subsection \ref{subsec GRWs and leakage}. Therefore, it is of interest to study the GRWs of a family of MRD codes, that is, codes that are already optimal for the first weight.

In this section, we find new parameters for which reducible codes are MRD or close to MRD when $ n > m $, extending the family of MRD codes in \cite{reducible} (see Subsection \ref{existing code constructions}), and then give bounds on their GRWs and exact values in the cartesian product case, using the results in the previous section. Hence we give for the first time estimates and exact values of the GRWs of a family of MRD codes with $ n > m $. We will also compare the performance of these codes with those $ \mathbb{F}_q $-linear MRD codes obtained by transposing the matrix representations of codewords in a Gabidulin code \cite[Sec. IV-B]{allertonversion}.

\subsection{Definition of the codes} \label{sec MRD reducible}

Assume $ n > m $ and fix an integer $ 1 \leq k \leq n $. In view of the bound (\ref{bound min distance}), we will consider a reducible code $ C_{red} \subseteq \mathbb{F}_{q^m}^n $ of dimension $ k $ whose main components $ C_1, C_2, \ldots, C_l $ (with notation as in Subsection \ref{subsec reducible}) have as similar parameters as possible. This will allow to obtain reducible codes with minimum rank distance as large as allowed by (\ref{singleton bound}).

First we need the following parameters:
\begin{enumerate}
\item
There exist unique $ l > 0 $ and $ 0 \leq t \leq m-1 $ such that 
$$ n = lm - t. $$
\item
There exist unique $ k^\prime > 0 $ and $ 0 \leq s \leq l-1 $ such that 
$$ k = lk^\prime - s. $$
\item
Define then 
$$ a = \left\lceil \frac{km}{n} \right\rceil - k^\prime, \quad \textrm{and} \quad b = \left\lceil \frac{t}{l} \right\rceil - 1. $$
\item
Finally, define 
$$ t^\prime = l(m-b) - n, $$ 
which satisfies $ 0 < t^\prime \leq l $.
\end{enumerate}

We need the next inequalities to define the desired codes:

\begin{lemma}
It holds that $ k^\prime \leq m-b $ if $ b \geq 0 $, and $ k^\prime \leq m $ if $ b=-1 $.
\end{lemma}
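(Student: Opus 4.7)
The plan is to prove both inequalities simultaneously by converting the constraint $k \le n$ into a statement about $k'$, then analyze the ceiling that arises.

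First I would start from the single inequality $k \leq n$, which, substituting $k = lk' - s$ and $n = lm - t$, rearranges to $lk' \leq lm - t + s$, i.e., $k' \leq m - (t-s)/l$. Since $k'$ is an integer, I would immediately upgrade this to
\begin{equation*}
k' \leq \left\lfloor m - \frac{t-s}{l} \right\rfloor = m - \left\lceil \frac{t-s}{l} \right\rceil.
\end{equation*}
So the whole proof reduces to lower-bounding the ceiling $\lceil (t-s)/l \rceil$ in the two cases.

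Next I would split into the two cases. If $b = -1$ then by definition $\lceil t/l \rceil = 0$, forcing $t = 0$; since $0 \leq s \leq l-1$, we have $-s/l \in (-1, 0]$, so $\lceil -s/l \rceil = 0$, which yields $k' \leq m$ directly. If $b \geq 0$, so $t \geq 1$, I need to show $\lceil (t-s)/l \rceil \geq b = \lceil t/l \rceil - 1$. Using $s \leq l-1$, the inequality $t - s \geq t - (l-1)$ gives $(t-s)/l \geq (t+1)/l - 1$, and then a short case check on whether $t$ is a multiple of $l$ shows $\lceil (t+1)/l \rceil = \lceil t/l \rceil$, so $\lceil (t-s)/l \rceil \geq \lceil t/l \rceil - 1 = b$. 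Combined with the displayed bound, this yields $k' \leq m - b$.

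I do not expect a serious obstacle here: the argument is purely arithmetic, and the only mildly subtle point is the manipulation $\lceil (t+1)/l \rceil = \lceil t/l \rceil$, which just says that bumping $t$ by $1$ does not cross a multiple of $l$ except when $t$ was already one below such a multiple (in which case equality still holds by direct computation). Everything else is a one-line rearrangement of $k \leq n$ together with the definitions of $k', s, t, b$.
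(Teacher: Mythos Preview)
Your approach matches the paper's: both reduce to $k \leq n$ together with $s \leq l-1$ (the paper packages these as $k+s \leq n+l$ and reads off $k' \leq m+1-t/l$, hence $k' \leq m-b$; you instead isolate $k' \leq m - \lceil (t-s)/l \rceil$ and then bound the ceiling). One correctable slip: your claimed identity $\lceil (t+1)/l \rceil = \lceil t/l \rceil$ fails when $l \mid t$ (e.g.\ $t=l$ gives $2 \neq 1$), but you only need the inequality $\lceil (t+1)/l \rceil \geq \lceil t/l \rceil$, which is immediate by monotonicity, so the argument stands.
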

\begin{proof}
For $ b = -1 $, we have that $ t = 0 $ and $ k = lk^\prime - s \leq n = lm $ implies that $ k^\prime \leq m + s/l $. Since $ s < l $, the result holds in this case.

Now assume that $ b \geq 0 $. We have that $ k+s \leq n+l $. Writing $ k $ and $ n $ as above, this inequality reads 
$$ (lk^\prime - s) + s \leq (lm - t) +l, $$
that is, $ lk^\prime + t \leq l(m+1) $ and, dividing by $ l $, it is equivalent to 
$$ k^\prime + \frac{t}{l} - 1 \leq m. $$
The result follows by the definition of $ b $.
\end{proof}

Finally, we give the construction, distinguishing three cases:

\begin{definition} \label{definition tilde code}
Define the reducible code $ C_{red} \subseteq \mathbb{F}_{q^m}^n $ of dimension $ k $ with MRD main components $ C_1, C_2, \ldots, C_l $ as follows:
\begin{enumerate}
\item
If $ t = 0 $ (i.e. $ b = -1 $): Choose $ C_1, C_2, \ldots, C_l $ such that $ l-s $ of them have length $ m $ and dimension $ k^\prime $, and $ s $ of them have length $ m $ and dimension $ k^\prime -1 $. By (\ref{bound min distance}), we have that
$$ d_{R,1}(C_{red}) \geq m - k^\prime + 1. $$
\item
If $ t > 0 $ and $ t^\prime \leq s $: Choose $ C_1, C_2, \ldots, C_l $ such that $ l-s $ of them have length $ m-b $ and dimension $ k^\prime $, $ s-t^\prime $ of them have length $ m-b $ and dimension $ k^\prime -1 $, and $ t^\prime $ of them have length $ m-b-1 $ and dimension $ k^\prime -1 $. By (\ref{bound min distance}), we have that
$$ d_{R,1}(C_{red}) \geq m - b - k^\prime + 1. $$
\item
If $ t > 0 $ and $ t^\prime > s $: Choose $ C_1, C_2, \ldots, C_l $ such that $ l-t^\prime $ of them have length $ m-b $ and dimension $ k^\prime $, $ t^\prime-s $ of them have length $ m-b-1 $ and dimension $ k^\prime $, and $ s $ of them have length $ m-b-1 $ and dimension $ k^\prime -1 $. By (\ref{bound min distance}), we have that
$$ d_{R,1}(C_{red}) \geq m - b - k^\prime. $$
\end{enumerate}
\end{definition}

The next theorem is the first main result of this section, and it gives families of parameters $ m $, $ n $ and $ k $ such that $ C_{red} $ is MRD or almost MRD:

\begin{theorem} \label{new found MRD}
Assume that $ 0 \leq t \leq l $ or $ n \geq m^2 $. The following holds:
\begin{enumerate}
\item
If $ t \leq s $ or $ tk^\prime > ms $, then
$$ d_{R,1}(C_{red}) = \left\lfloor \frac{m}{n}(n-k) + 1 \right\rfloor, $$
attaining (\ref{singleton bound}) if $ n $ divides $ mk $.
\item
If $ t > s $ and $ tk^\prime \leq ms $, then 
$$ d_{R,1}(C_{red}) \geq \left\lfloor \frac{m}{n}(n-k) \right\rfloor. $$
\end{enumerate}
\end{theorem}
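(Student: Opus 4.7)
The plan is to match the lower bounds on $d_{R,1}(C_{red})$ recorded in Definition \ref{definition tilde code} (which come from (\ref{bound min distance}) applied to the MRD main components) against the Singleton bound (\ref{singleton bound}). Since $n > m$ and $\#C_{red} = q^{mk}$, the Singleton bound reads
\[
d_{R,1}(C_{red}) \leq m + 1 - \lceil mk/n \rceil = \lfloor (m/n)(n-k)+1 \rfloor,
\]
and the quantity appearing in Part 2 is simply this value minus one. So the whole question reduces to pinning down the integer $a = \lceil mk/n \rceil - k'$ and comparing it with $b$ in each of the three cases of Definition \ref{definition tilde code}.

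A short rearrangement gives $mk/n = k' + u/n$ with $u = tk' - ms$, hence $a = \lceil u/n \rceil$. I would first observe that under either hypothesis $b \in \{-1, 0\}$: indeed $0 \leq t \leq l$ gives $\lceil t/l \rceil \leq 1$, and $n \geq m^2$ forces $l \geq m$ so again $\lceil t/l \rceil \leq 1$. Combined with the lemma just before Definition \ref{definition tilde code} (which gives $k' \leq m-b$, hence $k' \leq m$), the three cases of Definition \ref{definition tilde code} collapse to the two lower bounds $m-k'+1$ (Cases 1, 2) and $m-k'$ (Case 3). Next I would bound $u$ against $n$: the easy estimate $ms \leq (l-1)m < n$ yields $u > -n$, while $u \leq n$ amounts to $t(k'+1) \leq m(l+s)$. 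This latter step is the delicate one and is where the two alternative hypotheses do their work. Under $0 \leq t \leq l$, in Case 3 the inequality $k \leq n$ forces $k' \leq m-1$ (since $k'=m$ would imply $t \leq s$, contradicting Case 3), so $t(k'+1) \leq lm \leq m(l+s)$; under $n \geq m^2$, one has $l \geq m$, so $m(l+s) \geq m^2 > (m-1)(m+1) \geq t(k'+1)$. In either case $a \in \{0, 1\}$.

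The remainder is bookkeeping on the sign of $u$. The hypothesis $t \leq s$ of Part 1 places us in Case 1 or 2 and, combined with $k' \leq m$, forces $u \leq 0$, so $a = 0$ and the Singleton bound $m-k'+1$ matches the lower bound from Definition \ref{definition tilde code}. The hypothesis $tk' > ms$ gives $u > 0$, so $a = 1$, and (since $k' \leq m$ rules this out in Cases 1 and 2) it necessarily puts us in Case 3, where the Singleton bound $m-k'$ again matches. This yields Part 1. Part 2 is precisely Case 3 with $tk' \leq ms$: here $u \leq 0$ gives $a = 0$, so the Singleton bound is $m-k'+1$ while the Definition-2 lower bound is $m-k'$, one unit smaller and equal to $\lfloor (m/n)(n-k) \rfloor$. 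The divisibility addendum is immediate: $\lfloor (m/n)(n-k)+1 \rfloor = m+1-mk/n$ exactly when $n \mid mk$, and then $\#C_{red} = q^{n(m-d_{R,1}(C_{red})+1)}$ realizes equality in (\ref{singleton bound}). The main obstacle is therefore the verification $u \leq n$ in Case 3, which is precisely where the two alternative hypotheses (small $t$ vs.\ large $n$) split the argument.
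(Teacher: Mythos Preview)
Your proof is correct and mirrors the paper's: both compute $a = \lceil (tk'-ms)/n \rceil$, note $b \in \{-1,0\}$ under either hypothesis, and match the Definition~\ref{definition tilde code} lower bounds against the Singleton upper bound $m-a-k'+1$. The paper is slightly leaner in two places: it observes at the outset that $n \geq m^2$ already forces $l \geq m > t$ (so the two hypotheses need not be handled separately), and it never verifies $u \leq n$ directly---once $a \geq 1$ and the construction gives $d_{R,1}(C_{red}) \geq m-k'$, the Singleton inequality $d_{R,1}(C_{red}) \leq m-a-k'+1$ forces equality and $a=1$ automatically, so what you flag as the ``main obstacle'' in fact dissolves.
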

\begin{proof}
First we see that we only need to assume $ 0 \leq t \leq l $. Assume that $ n \geq m^2 $. Since $ n = lm - t \geq m^2 $ and $ t \geq 0 $, it holds that $ l \geq m $. Therefore $ t \leq m-1 \leq l-1 $. 

Next we observe that
\begin{equation}
\left\lfloor \frac{m}{n} \left(n - k \right) + 1 \right\rfloor = m - a - k^\prime + 1.
\label{eq first observe}
\end{equation}

Before considering the different cases, we will see that $ a \geq 0 $, and $ a = 0 $ if and only if $ k^\prime t \leq sm $. 

First it holds that $ -1 < km/n - k^\prime $ if and only if
$$ (k^\prime - 1)n < km. $$
Using that $ n = lm - t $ and $ k = lk^\prime -s $, and rearranging terms, this inequality reads
$$ sm + (k^\prime - 1)t < lm + n, $$
which is always true since $ s < l $ and $ k^\prime t \leq k \leq n $. Hence $ a \geq 0 $. On the other hand, $ km/n - k^\prime \leq 0 $ if and only if
$$ nk^\prime \geq km. $$
Using again that $ n = lm - t $ and $ k = lk^\prime -s $, and rearranging terms, this inequality reads $ k^\prime t \leq sm $. This is then the case when $ a = 0 $.

Now we prove item 1 in the theorem:

Assume first that $ t=0 $, then $ d_{R,1}(C_{red}) \geq m - k^\prime + 1 $ and $ a = 0 $, hence the result follows in this case by (\ref{eq first observe}).

Now assume that $ 0 < t \leq s $. Then $ d_{R,1}(C_{red}) \geq m - k^\prime + 1 $ (since $ b=0 $) and $ k^\prime t \leq sm $ holds, since $ k^\prime \leq m $. Then $ a = 0 $ and the result follows in this case by (\ref{eq first observe}).

Next assume that $ tk^\prime > ms $. Then we know that $ a \geq 1 $ and 
$$ \left\lfloor \frac{m}{n} \left(n - k \right) + 1 \right\rfloor \leq m - k^\prime. $$
Since $ b=0 $, we know that $ d_{R,1}(C_{red}) \geq m-k^\prime $, hence the result follows in this case by (\ref{eq first observe}).

Finally, we prove item 2: 

Assume that $ t > s $ and $ tk^\prime \leq ms $. Then we know that $ a = b = 0 $ and $ d_{R,1}(C_{red}) \geq m - b -k^\prime $. Therefore the result follows also in this case by (\ref{eq first observe}) and we are done. 
\end{proof}

\begin{remark}
Observe that the MRD reducible codes in Subsection \ref{existing code constructions}, item 2, are the subfamily of the codes $ C_{red} $ obtained by choosing $ t=s=0 $, and hence are particular cases of the codes in the previous theorem.
\end{remark}

\begin{remark}
Observe that the conditions $ 0 \leq t \leq l $ and $ n \geq m^2 $ only depend on $ m $ and $ n $, but not on $ k $. Hence, for the previous families of values of $ n $ and $ m $, we have obtained MRD or almost MRD codes for all dimensions.
\end{remark}

\begin{remark}
In general, the difference $ b-a $ will be big if $ t $ is much bigger than $ l $. As $ n $ grows, the fact $ t > l $ happens for fewer values of $ t $. Hence the codes $ C_{red} $ are far from optimal when $ n $ is small compared to $ m $ (still $ n > m $) and $ t $ is much bigger than $ l $. 
\end{remark}

\subsection{Estimates and exact values of their GRWs} \label{estimates on their}

The next theorem is the second main result in this section, and it gives estimates of the GRWs of the MRD (or almost MRD) reducible codes $ C_{red} $ from Theorem \ref{new found MRD}, using the lower bound (\ref{lower bound}). 

\begin{theorem} \label{estimates on their theorem}
Let the parameters be as in Theorem \ref{new found MRD}.

Assume first that $ t \leq s $.
\begin{enumerate}
\item
If $ 1 \leq j \leq l- s $ and $ (j-1)k^\prime < r \leq jk^\prime $, or if $ l-s < j \leq l-s+t $ and $ (j-1)(k^\prime-1)+l-s < r \leq j(k^\prime - 1) + l - s $, then 
$$ d_{R,r}(C_{red}) \geq j(m - k^\prime) + r. $$
\item
If $ l - s +t < j \leq l $ and $ (j-1)(k^\prime-1)+l-s < r \leq j(k^\prime - 1) + l - s $, then
$$ d_{R,r}(C_{red}) \geq j(m - k^\prime) + r + (j-l+s-t). $$
\end{enumerate}

Assume now that $ t > s $. 
\begin{enumerate}
\item
If $ 1 \leq j \leq t-s $ and $ (j-1)k^\prime < r \leq jk^\prime $, then
$$ d_{R,r}(C_{red}) \geq j(m - k^\prime-1) + r. $$
\item
If $ t-s < j \leq l-s $ and $ (j-1)k^\prime < r \leq jk^\prime $, or if $ l-s < j \leq l $ and $ (j-1)(k^\prime - 1) +l-s < r \leq j(k^\prime - 1) + l -s $, then
$$ d_{R,r}(C_{red}) \geq j(m - k^\prime) + r -t+s. $$
\end{enumerate}
These cases cover all $ r = 1,2, \ldots, k $ and moreover, if $ C_{red} $ is the cartesian product of its main components $ C_1, C_2, \ldots, C_l $, then all the previous lower bounds are equalities. 
\end{theorem}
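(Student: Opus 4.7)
The proof will substitute the MRD formula $d_{R,r_i}(C_i) = n_i - k_i + r_i$ (valid for $1 \leq r_i \leq k_i$, together with the convention $d_{R,0}(C_i) = 0$) into the lower bound~(\ref{lower bound}) of Theorem~\ref{grw bounds}. Setting $J = \{i : r_i > 0\}$ and $j = |J|$, the right-hand side of~(\ref{lower bound}) collapses to
\[ r \;+\; \min_J \sum_{i \in J}(n_i - k_i), \]
where $J$ ranges over index sets with $|J| \leq r$ and $\sum_{i \in J} k_i \geq r$. What remains is a greedy analysis of this integer optimisation problem.

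\textbf{Classification and greedy step.} First, I would read off the structure of the main components from Definition~\ref{definition tilde code}. Under the hypotheses of Theorem~\ref{new found MRD} one has $b \leq 0$, so in the regime $t \leq s$ (covering Case~(1) at $t = 0$ and Case~(2) at $0 < t \leq s$, since $t' = t$ when $b = 0$) the components split into $l - s + t$ of cost $n_i - k_i = m - k'$, of which $l - s$ have dimension $k'$ and $t$ have dimension $k' - 1$, plus $s - t$ of cost $m - k' + 1$ and dimension $k' - 1$. In the regime $t > s$ (Case~(3)), they split into $t - s$ of cost $m - k' - 1$ and dimension $k'$, plus $l - t + s$ of cost $m - k'$, of which $l - t$ have dimension $k'$ and $s$ have dimension $k' - 1$. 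For a fixed $j$ the cost $\sum_{i \in J}(n_i - k_i)$ is minimised by taking the $j$ cheapest components, and within a tier of equal cost by preferring dimension-$k'$ members over dimension-$(k' - 1)$ ones, which simultaneously maximises the attainable $\sum_{i \in J} k_i$. A short bookkeeping then shows that the minimum cost equals $j(m - k')$ for $j \leq l - s + t$ and $j(m - k') + (j - l + s - t)$ for $l - s + t < j \leq l$ in the first regime, while it equals $j(m - k' - 1)$ for $j \leq t - s$ and $j(m - k') - t + s$ for $t - s < j \leq l$ in the second; the maximum attainable $r$ is $j k'$ as long as only dimension-$k'$ members are involved (i.e., $j \leq l - s$) and $j(k' - 1) + (l - s)$ once dimension-$(k' - 1)$ members enter. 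Reading off the smallest feasible $j$ for each target $r$ then reproduces exactly the four $(j, r)$-ranges of the statement, and adding $r$ to the minimum cost reproduces the four lower bounds.

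\textbf{Cartesian products and main obstacle.} Summing the widths of the stated $r$-intervals across all four cases totals $k = lk' - s$, which confirms that every $r \in \{1, \ldots, k\}$ is covered. When $C_{red} = C_1 \times \cdots \times C_l$, Corollary~\ref{exact grw product} upgrades~(\ref{lower bound}) to an equality, so the same greedy analysis immediately delivers the exact values of $d_{R,r}(C_{red})$ in place of lower bounds. The main obstacle is the bookkeeping in the greedy step: as $j$ crosses the tier thresholds $l - s$ and $l - s + t$ (respectively $t - s$ and $l - s$) one must confirm that no asymmetric swap between tiers -- such as trading a dimension-$k'$ member of a higher-cost tier for a dimension-$(k' - 1)$ member of a lower-cost tier -- produces a smaller objective for some $r$, and that each $r$-interval attaches to the correct $j$. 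Once the tier structure is fixed, the residual algebraic identities, such as $(l - s + t)(m - k') + (j - l + s - t)(m - k' + 1) = j(m - k') + (j - l + s - t)$, are routine.
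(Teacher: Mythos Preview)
Your proposal is correct and follows essentially the same approach as the paper: substitute the MRD formula $d_{R,r_i}(C_i)=n_i-k_i+r_i$ into the lower bound~(\ref{lower bound}) of Theorem~\ref{grw bounds}, reduce to minimising $\sum_{i\in J}(n_i-k_i)$ over feasible supports $J$, and read off the optimum with a greedy choice; the Cartesian-product equality then comes from Corollary~\ref{exact grw product}. The paper's own proof is a two-line sketch that simply invokes Theorem~\ref{grw bounds}, the MRD formula, and Example~\ref{example computation}, whereas you spell out the tier structure (which is accurate: under the hypotheses of Theorem~\ref{new found MRD} one has $b\le 0$, hence $t'=t$ when $b=0$) and the resulting $(j,r)$-ranges explicitly. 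The ``asymmetric swap'' you flag as the main obstacle is in fact automatically ruled out here, since in the $t\le s$ regime the higher-cost tier contains no dimension-$k'$ members and in the $t>s$ regime the lower-cost tier contains no dimension-$(k'-1)$ members; so the greedy choice simultaneously minimises cost and maximises $\sum_{i\in J}k_i$, and no further case analysis is needed.
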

\begin{proof}
The result follows from Theorem \ref{grw bounds}. To see it, we just have to use that $ d_{R,r_i}(C_i) = n_i - k_i + r_i $ and see in which way we have to choose $ r_i = 0 $ or $ r_i > 0 $ to obtain the minimum in the bound (\ref{lower bound}), for $ i = 1,2, \ldots, l $. This is a straightforward extension of the calculations in Example \ref{example computation}.
\end{proof}

\subsection{Comparison with other MRD codes} \label{subsec comparison with other MRD}

In this subsection, we will compare the codes $ C_{red} \subseteq \mathbb{F}_{q^m}^n $ from Definition \ref{definition tilde code} with the $ \mathbb{F}_q $-linear MRD codes $ C_{Gab}^T \subseteq \mathbb{F}_{q^m}^n $ obtained by transposing the matrix represenations (see (\ref{eq def matrix representation})) of the codewords in a given $ \mathbb{F}_{q^n} $-linear Gabidulin code $ C_{Gab} \subseteq \mathbb{F}_{q^n}^m $ (see Subsection \ref{existing code constructions}), when $ n > m $. 

The codes $ C_{Gab}^T $ were obtained previously by Delsarte \cite[Th. 6]{delsartebilinear} and have been recently considered for universal secure linear network coding in \cite[Sec. IV-B]{allertonversion}.

We next argue the advantages of the codes $ C_{red} $ over the codes $ C_{Gab}^T $:

\begin{enumerate}
\item
\textit{Generalized rank weights}: Although GRWs have recently been extended to $ \mathbb{F}_q $-linear codes \cite{ravagnaniweights, allertonversion} and its connection to worst-case information leakage has been obtained \cite[Th. 3]{allertonversion}, little is known about them for codes that are not linear over $ \mathbb{F}_{q^m} $. In particular, the GRWs of the codes $ C_{Gab}^T $ are not known yet, except for their minimum rank distance.
\item
\textit{Encoding and decoding complexity}: The complexity of coset encoding and decoding with an $ \mathbb{F}_{q^m} $-linear code, as in Definition \ref{definition ozarow}, is equivalent to the complexity of encoding with one of its generator matrices. 

If $ k_{red} $ denotes the dimension of $ C_{red} $ over $ \mathbb{F}_q $, then the complexity of encoding with a generator matrix coming from one of its reductions is $ O(k_{red} m^2) $ operations over $ \mathbb{F}_{q^m} $, whereas if $ k_{Gab} $ denotes the dimension of $ C_{Gab} $ over $ \mathbb{F}_q $, then the complexity of encoding with one of its generator matrices is $ O(k_{Gab} n^2) $ operations over $ \mathbb{F}_{q^n} $. Therefore it is a higher complexity since $ n > m $, and the difference between both complexities becomes higher the bigger $ n $ is with respect to $ m $.
\item
\textit{Possible parameters obtained}: Since the codes $ C_{Gab}^T $ are obtained from $ \mathbb{F}_{q^n} $-linear codes, their sizes are of the form $ q^N $, where $ N $ is some multiple of $ n $, whereas the sizes of the codes $ C_{red} $ are of the form $ q^M $, where $ M $ is some multiple of $ m $.

Since we are assuming $ n > m $, in a given interval of positive integers, there are more possible parameters attained by the codes $ C_{red} $ than by the codes $ C_{Gab}^T $.
\item
\textit{Stronger security}: The information leakage for a given number of wire-tapped links when using the codes $ C_{red} $ is often much less than the worst case given by their GRWs, as we will see in Section \ref{stronger section}. In particular, looking at their first GRW, we will see that more links can be wire-tapped and still guarantee zero information leakage when using $ C_{red} $ than when using $ C_{Gab}^T $.
\end{enumerate}

\section{All $ \mathbb{F}_{q^m} $-linear codes with optimal GRWs for all fixed packet and code sizes} \label{sec all optimal}

In this section, we obtain all $ \mathbb{F}_{q^m} $-linear codes whose GRWs are all optimal for fixed packet and code sizes ($ m $ and $ k $, respectively), but varying length, $ n $, up to rank equivalence. These codes are particular cases of the codes $ C_{red} $ in the previous section. 

\begin{definition} \label{def optimal all grws}
For fixed $ k $ and $ m $, and for a basis $ \alpha_1, \alpha_2, \ldots, \alpha_m $ of $ \mathbb{F}_{q^m} $ over $ \mathbb{F}_q $, define the $ \mathbb{F}_{q^m} $-linear code $ C_{opt} = C_1 \times C_2 \times \cdots \times C_k \subseteq \mathbb{F}_{q^m}^{km} $, where all $ C_i $ are equal and generated by the vector $ (\alpha_1, \alpha_2, \ldots, \alpha_m) \in \mathbb{F}_{q^m}^m $.
\end{definition}

To claim the above mentioned optimality of these codes, we need the following bounds given in \cite[Lemma 6]{similarities}:

\begin{lemma}[\textbf{\cite{similarities}}] \label{mono}
Given an $ \mathbb{F}_{q^m} $-linear code $ C \subseteq \mathbb{F}_{q^m}^n $ of dimension $ k $, for each $ r = 1,2, \ldots, k-1 $, it holds that
\begin{equation} \label{eq general upper bound monotonicity}
1 \leq d_{R,r+1}(C) - d_{R,r}(C) \leq m. 
\end{equation}
As a consequence, for each $ r = 1,2, \ldots, k $, it holds that 
\begin{equation} \label{eq general upper bound}
d_{R,r}(C) \leq rm.
\end{equation}
\end{lemma}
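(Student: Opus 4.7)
The plan is to use the characterization from Lemma~\ref{equivalent definition slides}, namely $d_{R,r}(C) = \min\{\dim(D^*) \mid D \subseteq C \text{ is } \mathbb{F}_{q^m}\text{-linear}, \dim(D) = r\}$, combined with the structural description of Galois closed spaces from \cite[Lemma~1]{stichtenoth}: a subspace $V \subseteq \mathbb{F}_{q^m}^n$ is Galois closed if and only if $V = W \otimes_{\mathbb{F}_q} \mathbb{F}_{q^m}$ for some $\mathbb{F}_q$-linear $W \subseteq \mathbb{F}_q^n$, in which case $\dim_{\mathbb{F}_{q^m}}(V) = \dim_{\mathbb{F}_q}(W)$.

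For the upper bound $d_{R,r+1}(C) - d_{R,r}(C) \leq m$, I would pick an $r$-dimensional subspace $D \subseteq C$ attaining $d_{R,r}(C) = \dim(D^*)$ and any vector $\mathbf{v} \in C \setminus D$, and set $D' = D + \langle \mathbf{v} \rangle$, which has dimension $r+1$. Then $D'^* = D^* + \langle \mathbf{v}\rangle^*$, and since $\langle \mathbf{v}\rangle^*$ is generated by $\mathbf{v}, \mathbf{v}^{[1]}, \ldots, \mathbf{v}^{[m-1]}$ it has $\mathbb{F}_{q^m}$-dimension equal to ${\rm wt_R}(\mathbf{v}) \leq m$ (the rank of an $m \times n$ matrix is at most $m$). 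Hence $d_{R,r+1}(C) \leq \dim(D'^*) \leq \dim(D^*) + m = d_{R,r}(C) + m$.

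For the strict monotonicity $d_{R,r+1}(C) - d_{R,r}(C) \geq 1$, I would pick an $(r+1)$-dimensional $D \subseteq C$ with $w := \dim(D^*) = d_{R,r+1}(C)$. Writing $D^* = W \otimes_{\mathbb{F}_q} \mathbb{F}_{q^m}$ as above, I would choose any $\mathbb{F}_q$-hyperplane $W' \subset W$ and set $V = W' \otimes_{\mathbb{F}_q} \mathbb{F}_{q^m}$. Then $V$ is Galois closed and is an $\mathbb{F}_{q^m}$-hyperplane of $D^*$, so by the dimension formula $\dim(D \cap V) \geq \dim(D) + \dim(V) - \dim(D^*) = r$. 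Choosing any $r$-dimensional $D' \subseteq D \cap V$, the inclusion $D' \subseteq V$ combined with the Galois-closedness of $V$ yields $D'^* \subseteq V$, hence $\dim(D'^*) \leq w - 1$. By Lemma~\ref{equivalent definition slides}, $d_{R,r}(C) \leq w - 1 = d_{R,r+1}(C) - 1$.

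The consequence~(\ref{eq general upper bound}) then follows immediately by induction on $r$: the base case $d_{R,1}(C) \leq m$ holds since every nonzero codeword $\mathbf{c} \in C$ satisfies ${\rm wt_R}(\mathbf{c}) \leq m$, and the inductive step is precisely the upper bound proved above. The only subtle point is the strict monotonicity, and this hinges entirely on producing a Galois closed hyperplane inside $D^*$; the \cite{stichtenoth} characterization makes this essentially free, since $\mathbb{F}_q$-hyperplanes of $W$ always exist whenever $w \geq 1$.
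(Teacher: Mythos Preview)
Your argument is correct. The paper does not actually prove Lemma~\ref{mono}; it is quoted from \cite[Lemma~6]{similarities} without proof, so there is nothing in the present paper to compare against. That said, your approach is the standard one: the upper bound comes from enlarging an optimal $r$-dimensional $D$ by a single vector and using $\dim(\langle \mathbf{v}\rangle^*) = {\rm wt_R}(\mathbf{v}) \leq m$, and the lower bound comes from intersecting an optimal $(r{+}1)$-dimensional $D$ with a Galois closed hyperplane of $D^*$, which exists because Galois closed spaces are exactly the $\mathbb{F}_{q^m}$-spans of $\mathbb{F}_q$-subspaces. The only detail worth recording explicitly is that $D' \subseteq D \subseteq C$, so the bound $\dim(D'^*)\leq w-1$ indeed controls $d_{R,r}(C)$; you use this implicitly but it is the point that ties the hyperplane argument back to the code.
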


Observe that these bounds only depend on the packet and code sizes ($ m $ and $ k $, respectively), and they do not depend on the length $ n $.

We first show that the codes $ C_{opt} $ attain the previous bounds, and then prove that they are the only ones with this property:

\begin{proposition} \label{product gabidulin one dime}
Let $ C_{opt} \subseteq \mathbb{F}_{q^m}^{km} $ be the $ \mathbb{F}_{q^m} $-linear code in Definition \ref{def optimal all grws} for given $ k $ and $ m $. Then $ \dim(C_{opt}) = k $ and $ d_{R,r}(C_{opt}) = rm $, for $ r = 1,2, \ldots, k $.
\end{proposition}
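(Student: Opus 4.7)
The plan is to combine the monotonicity upper bound from Lemma \ref{mono} with the exact cartesian-product formula from Corollary \ref{exact grw product}, after checking the basic parameters of the component codes.

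First, I would verify the dimension and set up the component data. Since each $C_i \subseteq \mathbb{F}_{q^m}^m$ is the $\mathbb{F}_{q^m}$-span of the single nonzero vector $(\alpha_1, \ldots, \alpha_m)$, each $C_i$ is one-dimensional, so $\dim(C_{opt}) = \sum_{i=1}^{k} \dim(C_i) = k$. Moreover, since $\alpha_1, \alpha_2, \ldots, \alpha_m$ is a basis of $\mathbb{F}_{q^m}$ over $\mathbb{F}_q$, the matrix representation $M((\alpha_1, \ldots, \alpha_m))$ equals the identity $I_m$, so $\mathrm{wt}_R((\alpha_1, \ldots, \alpha_m)) = m$. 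Because $C_i$ is spanned by this single vector over $\mathbb{F}_{q^m}$, every nonzero codeword is a nonzero scalar multiple of it, and multiplication by a nonzero scalar in $\mathbb{F}_{q^m}$ preserves rank weight; hence $d_{R,1}(C_i) = m$ for every $i$.

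Next I would derive the upper bound $d_{R,r}(C_{opt}) \leq rm$ directly from the general inequality (\ref{eq general upper bound}) in Lemma \ref{mono}. The substance of the proof is the matching lower bound, which I would obtain by invoking Corollary \ref{exact grw product} applied to $C_{opt} = C_1 \times C_2 \times \cdots \times C_k$:
\begin{equation*}
d_{R,r}(C_{opt}) = \min \left\{ \sum_{i=1}^{k} d_{R,r_i}(C_i) \,\Big|\, r = r_1 + \cdots + r_k,\ 0 \leq r_i \leq 1 \right\}.
\end{equation*}
Since each $C_i$ has dimension $1$, the constraint forces $r_i \in \{0,1\}$, and exactly $r$ of the indices satisfy $r_i = 1$. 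Using the convention $d_{R,0}(C_i) = 0$ together with the computation $d_{R,1}(C_i) = m$ above, every admissible decomposition yields the same sum, namely $rm$. Therefore $d_{R,r}(C_{opt}) \geq rm$, matching the upper bound.

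There is no real obstacle here: all the hard work has already been done in Corollary \ref{exact grw product} and Lemma \ref{mono}. The only point that deserves explicit mention is why $d_{R,1}(C_i) = m$, which rests on the fact that $\alpha_1, \ldots, \alpha_m$ form an $\mathbb{F}_q$-basis of $\mathbb{F}_{q^m}$ (making the matrix representation the identity), together with the invariance of rank weight under multiplication by nonzero elements of $\mathbb{F}_{q^m}$. With that, combining the two bounds gives the claimed equality $d_{R,r}(C_{opt}) = rm$ for $r = 1, 2, \ldots, k$.
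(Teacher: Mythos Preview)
Your proof is correct and uses the same two ingredients as the paper (Corollary \ref{exact grw product} and Lemma \ref{mono}); the only cosmetic difference is that the paper applies Corollary \ref{exact grw product} just once at $r=k$ and then backfills the remaining weights via the gap bound (\ref{eq general upper bound monotonicity}), whereas you invoke the corollary directly for every $r$. Note that since Corollary \ref{exact grw product} already gives an equality, your appeal to the upper bound (\ref{eq general upper bound}) is in fact redundant.
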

\begin{proof}
It holds that $ d_{R,1}(C_i) = m $, for $ i = 1,2, \ldots, k $, since these codes are one-dimensional Gabidulin codes in $ \mathbb{F}_{q^m}^m $ (see Subsection \ref{existing code constructions}). Hence, by Corollary \ref{exact grw product}, we have that
$$ d_{R,k}(C_{opt}) = \sum_{i=1}^k d_{R,1}(C_i) = km. $$
By (\ref{eq general upper bound monotonicity}), it holds that $ d_{R,r}(C_{opt}) = rm $, for $ r = 1,2, \ldots, k $.
\end{proof}

We will use the definition of rank equivalences from \cite[Def. 8]{similarities}, which are stronger than vector space isomorphisms that preserve rank weights:

\begin{definition} [\textbf{\cite{similarities}}]
If $ V \subseteq \mathbb{F}_{q^m}^n $ and $ V^\prime \subseteq \mathbb{F}_{q^m}^{n^\prime} $ are $ \mathbb{F}_{q^m} $-linear Galois closed spaces, we say that a map $ \phi : V \longrightarrow V^\prime $ is a rank equivalence if it is a vector space isomorphism and $ {\rm wt_R}(\phi(\mathbf{c})) = {\rm wt_R}(\mathbf{c}) $, for all $ \mathbf{c} \in V $. 

We say that two codes $ C $ and $ C^\prime $ are rank equivalent if there exists a rank equivalence between $ \mathbb{F}_{q^m} $-linear Galois closed spaces $ V $ and $ V^\prime $ that contain $ C $ and $ C^\prime $, respectively, and mapping bijectively $ C $ to $ C^\prime $.
\end{definition}

Finally, we show that the codes $ C_{opt} $ are the only $ \mathbb{F}_{q^m} $-linear codes attaining (\ref{eq general upper bound}) for fixed packet and code sizes up to rank equivalence:

\begin{theorem} \label{uniqueness}
Let $ C \subseteq \mathbb{F}_{q^m}^n $ be an $ \mathbb{F}_{q^m} $-linear code of dimension $ k $ such that $ d_{R,r}(C) = rm $, for every $ r = 1,2, \ldots, k $. 

Then, for every basis $ \alpha_1, \alpha_2, \ldots, \alpha_m $ of $ \mathbb{F}_{q^m} $ over $ \mathbb{F}_q $, the code $ C $ is rank equivalent to the code $ C_{opt} \subseteq \mathbb{F}_{q^m}^{km} $ in Definition \ref{def optimal all grws}. Moreover, the rank equivalence can be explicitly constructed in polynomial time from any basis of $ C $.
\end{theorem}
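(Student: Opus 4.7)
The plan is to exhibit an explicit rank equivalence by decomposing $C^{*}$ into the Galois closures of a chosen basis of $C$ and reading off each component via its matrix representation. First I would extract two consequences of the hypothesis $d_{R,r}(C)=rm$. Applying Lemma \ref{equivalent definition slides} with $D=C$ (the unique $k$-dimensional subspace of $C$) gives $\dim_{\mathbb{F}_{q^m}} C^{*} = \mathrm{wt_R}(C) = d_{R,k}(C) = km$. Simultaneously, $d_{R,1}(C)=m$ together with the trivial bound $\mathrm{wt_R}(\mathbf{c})\le m$ (the matrix $M(\mathbf{c})$ has only $m$ rows) forces every nonzero $\mathbf{c}\in C$ to have rank weight exactly $m$, so $\dim\langle \mathbf{c}\rangle^{*}=m$.

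Next, I would fix an $\mathbb{F}_{q^m}$-basis $\mathbf{v}_1,\ldots,\mathbf{v}_k$ of $C$ and observe that $C^{[i]}$ is $\mathbb{F}_{q^m}$-spanned by $\mathbf{v}_1^{[i]},\ldots,\mathbf{v}_k^{[i]}$, hence $C^{*}=\sum_{j=1}^{k}\langle \mathbf{v}_j\rangle^{*}$. Since each summand has dimension $m$ and the total has dimension $km$, the sum is direct. Now for each $j$, let $\mathbf{w}_{j,1},\ldots,\mathbf{w}_{j,m}\in\mathbb{F}_q^{n}$ denote the rows of $M(\mathbf{v}_j)$; these are $\mathbb{F}_q$-linearly independent (because $\mathrm{wt_R}(\mathbf{v}_j)=m$) and span the $\mathbb{F}_q$-subspace $W_j$ that satisfies $\langle\mathbf{v}_j\rangle^{*}=\mathbb{F}_{q^m}W_j$. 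Moreover, by expanding in the basis $\alpha_1,\ldots,\alpha_m$ we have the crucial identity
\[
\mathbf{v}_j=\sum_{p=1}^{m}\alpha_p\,\mathbf{w}_{j,p}.
\]
Since $\mathbb{F}_{q^m}(W_1+\cdots+W_k)=C^{*}$ has $\mathbb{F}_{q^m}$-dimension $km$ and $\dim_{\mathbb{F}_q}(\mathbb{F}_{q^m}W)=\dim_{\mathbb{F}_q}W$ for Galois-closed spaces (by \cite{stichtenoth}), the sum $W_1+\cdots+W_k$ is direct and has $\mathbb{F}_q$-dimension $km$.

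I would then define $\phi\colon C^{*}\to \mathbb{F}_{q^m}^{km}$ as the $\mathbb{F}_{q^m}$-linear extension of the $\mathbb{F}_q$-linear isomorphism sending $\mathbf{w}_{j,i}\mapsto \mathbf{e}_{(j-1)m+i}$, where $\mathbf{e}_1,\ldots,\mathbf{e}_{km}$ is the standard basis. By the identity above, $\phi(\mathbf{v}_j)=\sum_{p=1}^{m}\alpha_p\,\mathbf{e}_{(j-1)m+p}$, which is precisely the $j$-th generator of $C_{opt}$, so $\phi(C)=C_{opt}$. It remains to verify that $\phi$ is a rank equivalence between the Galois closed spaces $C^{*}$ and $\mathbb{F}_{q^m}^{km}$; this is the only delicate point. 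The idea is that for any $\mathbf{c}\in\mathbb{F}_{q^m}W$, writing $\mathbf{c}=\sum_p\alpha_p\mathbf{r}_p$ with $\mathbf{r}_p\in W$ (where the $\mathbf{r}_p$ are exactly the rows of $M(\mathbf{c})$), one has $\mathrm{wt_R}(\mathbf{c})=\dim_{\mathbb{F}_q}\langle\mathbf{r}_1,\ldots,\mathbf{r}_m\rangle$. The extension $\phi$ sends $\mathbf{c}$ to $\sum_p\alpha_p\,\phi|_{W}(\mathbf{r}_p)$, so the rows of $M(\phi(\mathbf{c}))$ are the images under the $\mathbb{F}_q$-isomorphism $\phi|_{W}$, whose rank is preserved. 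Thus $\mathrm{wt_R}(\phi(\mathbf{c}))=\mathrm{wt_R}(\mathbf{c})$.

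The hardest part will be convincing the reader cleanly of this rank-preservation step, since it conflates the $\mathbb{F}_q$-structure of the defining subspace with the $\mathbb{F}_{q^m}$-structure of the ambient ground field; the row-space identification above is the cleanest way to handle it. For the polynomial-time claim, each step is constructive: computing the matrices $M(\mathbf{v}_j)$, extracting their rows as an $\mathbb{F}_q$-basis of $W_1\oplus\cdots\oplus W_k$, and writing down the linear map $\phi$ on these basis vectors all require only a polynomial number of arithmetic operations in $\mathbb{F}_{q^m}$ starting from any basis of $C$.
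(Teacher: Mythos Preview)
Your proof is correct and in fact constructs the same rank equivalence as the paper, but reaches it by a more elementary route. The paper first develops a dual basis $\beta_1,\ldots,\beta_m$ of $\alpha_1,\ldots,\alpha_m$ (Lemmas~\ref{matrix gabidulin} and~\ref{two bases}) and defines the $\mathbb{F}_q$-basis of $C^*$ via traces, $\mathbf{v}_{s,i}=\mathrm{Tr}(\beta_i\mathbf{b}_s)$; you instead read these vectors off directly as the rows of $M(\mathbf{v}_j)$, and the paper's identity $\mathbf{b}_s=\sum_i\alpha_i\mathbf{v}_{s,i}$ confirms these are literally the same vectors. For the rank-equivalence step, the paper invokes the characterization from \cite{similarities} (Lemma~\ref{rank equivalences}), whereas you reprove what is needed by tracking how $\phi$ transports the rows of the matrix representation. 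Your approach is more self-contained and avoids the auxiliary dual-basis lemmas; the paper's approach makes the trace-duality structure explicit. One small slip: where you write $\dim_{\mathbb{F}_q}(\mathbb{F}_{q^m}W)=\dim_{\mathbb{F}_q}W$, the left-hand dimension should be over $\mathbb{F}_{q^m}$.
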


We need some preliminary lemmas to prove this result. We start by the following characterization of rank equivalences, which is a particular case of \cite[Th. 5]{similarities}:

\begin{lemma}[\textbf{\cite{similarities}}] \label{rank equivalences}
Let $ \phi : V \longrightarrow V^\prime $ be an $ \mathbb{F}_{q^m} $-linear vector space isomorphism, where $ V \subseteq \mathbb{F}_{q^m}^n $ and $ V^\prime \subseteq \mathbb{F}_{q^m}^{n^\prime} $ are $ \mathbb{F}_{q^m} $-linear Galois closed spaces.

It is a rank equivalence if and only if there exist bases $ \mathbf{v}_1, \mathbf{v}_2, \ldots, \mathbf{v}_t \in \mathbb{F}_q^n $ and $ \mathbf{w}_1, \mathbf{w}_2, \ldots, \mathbf{w}_t \in \mathbb{F}_q^{n^\prime} $ of $ V $ and $ V^\prime $, respectively, and a non-zero element $ \beta \in \mathbb{F}_{q^m} $, such that $ \phi (\mathbf{v}_i) = \beta \mathbf{w}_i $, for $ i = 1,2, \ldots, t $.
\end{lemma}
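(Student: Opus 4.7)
The plan is to prove the two directions of the biconditional separately, with the ``if'' direction being a short computation and the ``only if'' direction carrying the real content. The linchpin of both is the following formula: if $\mathbf{u}_1, \ldots, \mathbf{u}_t \in \mathbb{F}_q^n$ are $\mathbb{F}_q$-linearly independent and $a_1, \ldots, a_t \in \mathbb{F}_{q^m}$, then
\begin{equation*}
{\rm wt_R}\!\left(\sum_{i=1}^t a_i \mathbf{u}_i\right) = \dim_{\mathbb{F}_q}\langle a_1, \ldots, a_t \rangle_{\mathbb{F}_q}.
\end{equation*}
This follows immediately from (\ref{eq def matrix representation}): $M(\sum a_i \mathbf{u}_i)$ factors as $AB$, where $B \in \mathbb{F}_q^{t \times n}$ has the $\mathbf{u}_i$ as its rows (and hence rank $t$), while the $i$-th column of $A \in \mathbb{F}_q^{m \times t}$ is the coordinate vector of $a_i$ in the basis $\alpha_1, \ldots, \alpha_m$.

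Granted this formula, the ``if'' direction is immediate. Given bases $\mathbf{v}_i$, $\mathbf{w}_i$ and $\beta$ as in the statement, any $\mathbf{c} = \sum a_i \mathbf{v}_i \in V$ has image $\phi(\mathbf{c}) = \beta \sum a_i \mathbf{w}_i$; multiplication by the unit $\beta$ corresponds to left multiplication of $M(\cdot)$ by an invertible matrix in $\mathbb{F}_q^{m \times m}$ (the matrix of $\mathbb{F}_q$-linear multiplication by $\beta$ on $\mathbb{F}_{q^m}$), so it preserves rank weight, and the formula above shows that both $\sum a_i \mathbf{v}_i$ and $\sum a_i \mathbf{w}_i$ have rank weight equal to $\dim_{\mathbb{F}_q}\langle a_1, \ldots, a_t\rangle$.

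For the ``only if'' direction I would proceed in three steps. First, I characterize the rank-one vectors in a Galois closed space: any $\mathbf{c} \in V$ with ${\rm wt_R}(\mathbf{c}) = 1$ has the form $\mathbf{c} = a \mathbf{v}$ with $a \in \mathbb{F}_{q^m}^*$ and $\mathbf{v} \in \mathbb{F}_q^n \setminus \{\mathbf{0}\}$; since the Galois closure of $\langle \mathbf{c}\rangle_{\mathbb{F}_{q^m}}$ equals $\mathbb{F}_{q^m}\cdot \mathbf{v}$ and must lie in $V = V^*$, in fact $\mathbf{v} \in V \cap \mathbb{F}_q^n$. Because $\phi$ preserves ranks, for every nonzero $\mathbf{v} \in V \cap \mathbb{F}_q^n$ I can therefore write $\phi(\mathbf{v}) = a'_\mathbf{v}\, \mathbf{w}_\mathbf{v}$ with $a'_\mathbf{v} \in \mathbb{F}_{q^m}^*$ and $\mathbf{w}_\mathbf{v} \in V' \cap \mathbb{F}_q^{n'}$. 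The central step is to show that these scalars are all equal up to $\mathbb{F}_q^*$: for any two $\mathbb{F}_q$-linearly independent $\mathbf{v}_1, \mathbf{v}_2 \in V \cap \mathbb{F}_q^n$, the sum $\mathbf{v}_1+\mathbf{v}_2$ again lies in $V \cap \mathbb{F}_q^n$ and has rank one, so
$$\phi(\mathbf{v}_1+\mathbf{v}_2) = a'_{\mathbf{v}_1}\mathbf{w}_{\mathbf{v}_1} + a'_{\mathbf{v}_2}\mathbf{w}_{\mathbf{v}_2}$$
must have rank one; since $\mathbf{w}_{\mathbf{v}_1}$ and $\mathbf{w}_{\mathbf{v}_2}$ are $\mathbb{F}_{q^m}$-linearly independent (their $\mathbb{F}_{q^m}^*$-scalings $\phi(\mathbf{v}_1), \phi(\mathbf{v}_2)$ are, as $\phi$ is injective) and hence $\mathbb{F}_q$-linearly independent, the formula above forces $\dim_{\mathbb{F}_q}\langle a'_{\mathbf{v}_1}, a'_{\mathbf{v}_2}\rangle = 1$, i.e.\ $a'_{\mathbf{v}_1}/a'_{\mathbf{v}_2} \in \mathbb{F}_q^*$.

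With this ratio step in hand, the final construction is mechanical: pick any $\mathbb{F}_q$-basis $\mathbf{v}_1, \ldots, \mathbf{v}_t \in \mathbb{F}_q^n$ of $V$ (which exists because $V$ is Galois closed), set $\beta = a'_{\mathbf{v}_1}$, $\gamma_i = a'_{\mathbf{v}_i}/\beta \in \mathbb{F}_q^*$, and $\mathbf{w}_i = \gamma_i \mathbf{w}_{\mathbf{v}_i} \in \mathbb{F}_q^{n'}$; then $\phi(\mathbf{v}_i) = \beta \mathbf{w}_i$, and because $\phi$ is a vector space isomorphism the $\mathbf{w}_i$ are an $\mathbb{F}_{q^m}$-basis of $V'$ and, by a dimension count, also an $\mathbb{F}_q$-basis of $V'$ contained in $\mathbb{F}_q^{n'}$. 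The principal obstacle is precisely this ``$\mathbb{F}_q^*$-ratio'' step, which is where one upgrades pointwise information about $\phi$ on individual rank-one orbits to a single global scalar $\beta$; the rank-one formula above is the tool that makes this upgrade possible, and everything else in the argument is bookkeeping around it.
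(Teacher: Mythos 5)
Your proof is correct, but note that the paper does not actually prove this lemma: it is imported verbatim from \cite{similarities} as a particular case of a more general classification of rank equivalences proved there, so there is no internal argument to compare against. As a standalone argument, yours holds up. The workhorse identity $ {\rm wt_R}(\sum_i a_i \mathbf{u}_i) = \dim_{\mathbb{F}_q}\langle a_1, \ldots, a_t\rangle $ for $\mathbb{F}_q$-linearly independent $\mathbf{u}_i \in \mathbb{F}_q^n$ is exactly right (the factorization $M(\mathbf{c}) = AB$ with $B$ of full row rank gives $\mathrm{rank}(AB)=\mathrm{rank}(A)$), and it correctly drives both directions: the ``if'' direction reduces to the invariance of $\dim_{\mathbb{F}_q}\langle a_1,\ldots,a_t\rangle$ under the global scalar $\beta$, and the ``only if'' direction hinges on your $\mathbb{F}_q^*$-ratio step, which is indeed the crux. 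Two points you use implicitly but legitimately: that a Galois closed space admits an $\mathbb{F}_{q^m}$-basis of vectors in $\mathbb{F}_q^n$ (this is \cite[Lemma 1]{stichtenoth}, quoted in Section II of the paper), and that $\mathbb{F}_q$-linear independence of vectors of $\mathbb{F}_q^n$ coincides with their $\mathbb{F}_{q^m}$-linear independence (rank of a matrix over $\mathbb{F}_q$ is unchanged by field extension), which is what lets you pass from ``$\mathbf{w}_{\mathbf{v}_1}, \mathbf{w}_{\mathbf{v}_2}$ are $\mathbb{F}_{q^m}$-independent'' to applying the rank-one formula over $\mathbb{F}_q$. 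Your step 1 (rank-one vectors of a Galois closed space are $\mathbb{F}_{q^m}^*$-multiples of its $\mathbb{F}_q$-rational vectors) is also sound, since $\langle a\mathbf{v}\rangle^* = \langle \mathbf{v}\rangle_{\mathbb{F}_{q^m}} \subseteq V^* = V$. In short: a correct, more elementary and self-contained route than the paper's citation-only treatment.
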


We now introduce some notation. For a given vector $ \mathbf{c} = (c_1, c_2, \ldots, c_n) \in \mathbb{F}_{q^m}^n $, define $ \mathbf{c}^{[i]} = (c_1^{[i]}, c_2^{[i]}, \ldots, c_n^{[i]}) $, for all integers $ i \geq 0 $. Then define the trace map $ {\rm Tr} : \mathbb{F}_{q^m}^n \longrightarrow \mathbb{F}_q^n $ of the extension $ \mathbb{F}_q \subseteq \mathbb{F}_{q^m} $ as follows
$$ {\rm Tr}(\mathbf{c}) = \mathbf{c} + \mathbf{c}^{[1]} + \mathbf{c}^{[2]} + \cdots + \mathbf{c}^{[m-1]}, $$
for all $ \mathbf{c} \in \mathbb{F}_{q^m}^n $. We have the following two lemmas:

\begin{lemma} \label{matrix gabidulin}
For a basis $ \alpha_1, \alpha_2, \ldots, \alpha_m $ of $ \mathbb{F}_{q^m} $ over $ \mathbb{F}_q $, the matrix $ A = (\alpha_i^{[j-1]})_{1 \leq i,j \leq m} $ over $ \mathbb{F}_{q^m} $ is invertible.
\end{lemma}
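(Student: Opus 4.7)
The plan is to recognize $A$ as a Moore matrix and prove its invertibility by a root-counting argument on $q$-linearized polynomials. Concretely, I would argue by contrapositive: suppose $A$ is singular, produce a non-trivial vector $\mathbf{c} = (c_1, c_2, \ldots, c_m)^T \in \mathbb{F}_{q^m}^m$ with $A\mathbf{c} = \mathbf{0}$, and derive a contradiction with the linear independence of $\alpha_1, \ldots, \alpha_m$ over $\mathbb{F}_q$.

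First I would translate the linear dependence $A\mathbf{c} = \mathbf{0}$ into the system of equations
$$ \sum_{j=1}^{m} c_j \, \alpha_i^{[j-1]} = 0, \qquad i = 1,2,\ldots,m, $$
and then encode this system by introducing the $q$-linearized polynomial
$$ L(x) = \sum_{j=1}^{m} c_j \, x^{[j-1]} \in \mathbb{F}_{q^m}[x], $$
so that the condition becomes $L(\alpha_i) = 0$ for every $i = 1, 2, \ldots, m$. The key properties of $L$ that I would then invoke are: (i) $L$ is $\mathbb{F}_q$-linear as a map from the algebraic closure of $\mathbb{F}_q$ to itself, because $(x+y)^{[k]} = x^{[k]} + y^{[k]}$ and $a^{[k]} = a$ for $a \in \mathbb{F}_q$; and (ii) $L$ has degree at most $q^{m-1}$ as a polynomial in $x$.

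Using (i), the vanishing of $L$ at each $\alpha_i$ extends to vanishing of $L$ on the entire $\mathbb{F}_q$-span of $\{\alpha_1, \ldots, \alpha_m\}$, which by the hypothesis that $\alpha_1, \ldots, \alpha_m$ is a basis contains exactly $q^m$ elements. Hence $L$ has at least $q^m$ distinct roots, while by (ii) its degree as a polynomial is at most $q^{m-1} < q^m$. The only way to reconcile this is $L \equiv 0$, which forces $c_1 = c_2 = \cdots = c_m = 0$, contradicting the choice of $\mathbf{c}$. Thus $A$ is invertible.

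The main obstacle is minor and essentially bookkeeping: one must be careful that $L$ is regarded as a polynomial (of degree $q^{m-1}$) for the root-counting bound, while simultaneously exploiting its $\mathbb{F}_q$-linearity as an additive map in order to enlarge the zero set from the basis $\{\alpha_i\}$ to all of their $\mathbb{F}_q$-combinations. No other part of the argument should require more than the standard properties of the Frobenius. (Equivalently, one could cite Ore's theorem on Moore determinants, but the root-counting proof above is entirely self-contained and short.)
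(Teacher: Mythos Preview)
Your argument is correct and is precisely the standard proof that a Moore matrix built from $\mathbb{F}_q$-linearly independent elements is nonsingular: translating a hypothetical nontrivial kernel vector into a nonzero $q$-linearized polynomial $L(x)=\sum_{j=1}^m c_j x^{[j-1]}$ of degree at most $q^{m-1}$ that vanishes on all of $\mathbb{F}_{q^m}$, and then invoking the degree bound. The bookkeeping you flag (polynomial degree versus $\mathbb{F}_q$-linearity of the associated map) is handled exactly as you describe.

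By way of comparison, the paper does not actually prove this lemma; its proof reads in full ``Well-known. See for instance \cite{gabidulin}.'' So your proposal supplies a self-contained argument where the paper only gives a citation. What you wrote is essentially the proof one finds behind that citation (or, equivalently, Ore's Moore-determinant result), so there is no divergence in spirit---you have simply unpacked what the paper leaves as a reference.
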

\begin{proof}
Well-known. See for instance \cite{gabidulin}.
\end{proof}

\begin{lemma} \label{two bases}
For a basis $ \alpha_1, \alpha_2, \ldots, \alpha_m $ of $ \mathbb{F}_{q^m} $ over $ \mathbb{F}_q $ and the matrix $ A = (\alpha_i^{[j-1]})_{1 \leq i,j \leq m} $, define 
$$ (\beta_1, \beta_2, \ldots, \beta_m) = \mathbf{e}_1 A^{-1}, $$
where $ \mathbf{e}_1 \in \mathbb{F}_{q^m}^m $ is the first vector in the canonical basis. Then $ \beta_1, \beta_2, \ldots, \beta_m $ is also a basis of $ \mathbb{F}_{q^m} $ over $ \mathbb{F}_q $.

Moreover, if $ B = (\beta_i^{[j-1]})_{1 \leq i,j \leq m} $, then 
$$ (\alpha_1, \alpha_2, \ldots, \alpha_m) = \mathbf{e}_1 B^{-1}. $$
\end{lemma}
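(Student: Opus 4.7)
The plan is to establish the single matrix identity $B^{T} A = I$, from which both assertions of the lemma follow at once. I would start by rewriting $(\beta_1, \ldots, \beta_m) = \mathbf{e}_1 A^{-1}$ in the equivalent form $(\beta_1, \ldots, \beta_m) A = \mathbf{e}_1$, which unpacks to the scalar identities
\[
\sum_{j=1}^m \beta_j \, \alpha_j^{[k-1]} = \delta_{1,k}, \qquad k = 1, 2, \ldots, m.
\]

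Next, I apply the Frobenius power $[i-1]$ to each of these equations. Since $\delta_{1,k} \in \mathbb{F}_q$ is fixed by Frobenius, this produces $\sum_j \beta_j^{[i-1]} \alpha_j^{[k+i-2]} = \delta_{1,k}$ for every $i, k \in \{1, \ldots, m\}$. Reindexing $l = i + k - 1$ and using the cyclicity $\alpha^{[m]} = \alpha$ on $\mathbb{F}_{q^m}$, as $k$ ranges over $\{1, \ldots, m\}$ the exponent $[k+i-2]$ ranges over $[0], [1], \ldots, [m-1]$, and $\delta_{1,k}$ becomes $\delta_{i,l}$. The identities therefore become $\sum_j \beta_j^{[i-1]} \alpha_j^{[l-1]} = \delta_{i,l}$ for all $i, l$. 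The left-hand side is precisely the $(i,l)$-entry of $B^{T} A$, so $B^{T} A = I$, i.e., $B^{-1} = A^{T}$.

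Both conclusions then fall out directly. The invertibility of $B$ forces $\beta_1, \ldots, \beta_m$ to be linearly independent over $\mathbb{F}_q$: any non-trivial relation $\sum_i c_i \beta_i = 0$ with $c_i \in \mathbb{F}_q$ would, after raising to successive Frobenius powers and using $c_i^{[j-1]} = c_i$, give $(c_1, \ldots, c_m) B = 0$, contradicting $\det B \neq 0$. This proves the first claim. For the second, $\mathbf{e}_1 B^{-1} = \mathbf{e}_1 A^{T}$ is the first row of $A^{T}$, equivalently the first column of $A$, which by the very definition of $A$ equals $(\alpha_1, \ldots, \alpha_m)$.

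The main obstacle is not conceptual but notational: one has to keep the index shifts straight when passing from the equations $\sum_j \beta_j \alpha_j^{[k-1]} = \delta_{1,k}$ to the matrix identity $B^T A = I$, and in particular use the cyclicity of Frobenius on $\mathbb{F}_{q^m}$ to ensure that the shifted equations cover every entry of $B^T A$ exactly once. Once this bookkeeping is clean, the rest is routine linear algebra and the identification of the first column of $A$.
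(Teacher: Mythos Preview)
Your proof is correct and follows essentially the same route as the paper: both start from $\boldsymbol\beta A = \mathbf{e}_1$, apply Frobenius powers and reindex via cyclicity to obtain the full family of relations $\sum_j \beta_j^{[i-1]}\alpha_j^{[l-1]} = \delta_{i,l}$, and then deduce both claims from these. The only difference is packaging: you name the identity $B^T A = I$ (equivalently $B^{-1} = A^T$) explicitly and read off both conclusions from it, whereas the paper keeps the relations in scalar form and argues linear independence via $\boldsymbol\beta^{[l-1]} A = \mathbf{e}_l$ and the second claim via $(\alpha_1,\ldots,\alpha_m)B = \mathbf{e}_1$; the underlying computations are the same.
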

\begin{proof}
Write $ \boldsymbol\beta = (\beta_1, \beta_2, \ldots, \beta_m) $. Then $ \boldsymbol\beta A = \mathbf{e}_1 $, which means that $ \sum_{i=1}^m \beta_i \alpha_i^{[j-1]} = \delta_{j,1} $, where $ \delta $ is the Kronecker delta. By raising this equation to the power $ [l-1] = q^{l-1} $ and using that $ \delta_{j,l} $ is $ 0 $ or $ 1 $, we see that $ \sum_{i=1}^m \beta_i^{[l-1]} \alpha_i^{[j-1]} = \delta_{j,l} $, that is, $ \boldsymbol\beta^{[l-1]} A = \mathbf{e}_l $, for $ l = 1,2, \ldots, m $.

Let $ \boldsymbol\lambda \in \mathbb{F}_q^m $ be such that $ \boldsymbol\lambda \cdot \boldsymbol\beta = 0 $. By raising this equation to the power $ [l-1] $, for $ l = 1,2, \ldots, m $, we see that $ \boldsymbol\lambda \cdot \boldsymbol\beta^{[l-1]} = 0 $ or, equivalently, $ \boldsymbol\lambda \cdot (\mathbf{e}_l A^{-1}) = 0 $, since $ \boldsymbol\lambda \in \mathbb{F}_q^m $.

Write $ \boldsymbol\mu = (\mu_1, \mu_2, \ldots, \mu_m) = \boldsymbol\lambda (A^{-1})^T $. It holds that
$$ 0 = \boldsymbol\lambda \cdot (\mathbf{e}_l A^{-1}) = (\boldsymbol\lambda (A^{-1})^T) \cdot \mathbf{e}_l = \boldsymbol\mu \cdot \mathbf{e}_l = \mu_l, $$
for $ l = 1,2, \ldots, m $. Therefore, $ \boldsymbol\mu = \mathbf{0} $, thus $ \boldsymbol\lambda = \mathbf{0} $. Hence the elements $ \beta_1, \beta_2, \ldots, \beta_m $ are linearly independent over $ \mathbb{F}_q $.

Finally, since $ \sum_{i=1}^m \beta_i^{[l-1]} \alpha_i^{[j-1]} = \delta_{j,l} $, it holds that $ \sum_{i=1}^m \alpha_i \beta_i^{[j-1]} = \delta_{1,j} = \delta_{j,1} $, which means that $ (\alpha_1, \alpha_2, \ldots, \alpha_m)B = \mathbf{e}_1 $, and we are done.
\end{proof}

We may now prove Theorem \ref{uniqueness}:

\begin{proof}[Proof of Theorem \ref{uniqueness}]
Choose any basis $ \mathbf{b}_1, \mathbf{b}_2, \ldots, \mathbf{b}_k $ of $ C $. Since $ \dim (C^*) = km $ and $ C^* $ is generated by the elements $ \mathbf{b}_s^{[j-1]} $, for $ s = 1,2, \ldots, k $ and $ j = 1,2, \ldots, m $, it follows that these elements are linearly independent over $ \mathbb{F}_{q^m} $.

Define the vector $ \boldsymbol\beta = (\beta_1, \beta_2, \ldots, \beta_m) = \mathbf{e}_1 A^{-1} $, with notation as in the previous lemma. By that lemma, $ \beta_1, \beta_2, \ldots, \beta_m $ constitute a basis of $ \mathbb{F}_{q^m} $ over $ \mathbb{F}_q $, and $ (\alpha_1, \alpha_2, \ldots, \alpha_m) = \mathbf{e}_1 B^{-1} $.

Consider the vectors $ \mathbf{v}_{s,i} = {\rm Tr}(\beta_i \mathbf{b}_s) \in \mathbb{F}_q^n $, for $ s = 1,2, \ldots, k $ and $ i = 1,2, \ldots, m $. Assume that there exist $ \lambda_{s,i} \in \mathbb{F}_q $ such that $ \sum_{s=1}^k \sum_{i=1}^m \lambda_{s,i} \mathbf{v}_{s,i} = \mathbf{0} $. Then it holds that
$$ \sum_{j=1}^{m} \sum_{s=1}^k \left( \sum_{i=1}^m \lambda_{s,i} \beta_i^{[j-1]} \right) \mathbf{b}_s^{[j-1]} = \mathbf{0}. $$
Hence $ \sum_{i=1}^m \lambda_{s,i} \beta_i^{[j-1]} = 0 $, for $ s = 1,2, \ldots, k $ and $ j = 1,2, \ldots, m $, which implies that $ \lambda_{s,i} = 0 $, for $ s = 1,2, \ldots, k $ and $ i = 1,2, \ldots, m $.

Therefore, the elements $ \mathbf{v}_{s,i} $, for $ s = 1,2, \ldots, k $ and $ i = 1,2, \ldots, m $, constitute a basis of $ C^* $ and are vectors in $ \mathbb{F}_q^n $. Now define the $ \mathbb{F}_{q^m} $-linear vector space isomorphism $ \psi : C^* \longrightarrow \mathbb{F}_{q^m}^{km} $ by $ \psi(\mathbf{v}_{s,i}) = \mathbf{e}_{(s-1)m + i} $, for $ s = 1,2, \ldots, k $ and $ i = 1,2, \ldots, m $. By Lemma \ref{rank equivalences}, $ \psi $ is a rank equivalence and, moreover,
$$ \mathbf{b}_s = \sum_{j=1}^m \sum_{i=1}^m \alpha_i \beta_i^{[j-1]} \mathbf{b}_s^{[j-1]} = \sum_{i=1}^m \alpha_i {\rm Tr}(\beta_i \mathbf{b}_s) = \sum_{i=1}^m \alpha_i \mathbf{v}_{s,i}. $$
It follows that $ \mathbf{v}_s = \psi (\mathbf{b}_s) = \sum_{i=1}^m \alpha_i \mathbf{e}_{(s-1)m + i} $, and the vectors $ \mathbf{v}_s $, for $ s = 1,2, \ldots, k $, constitute a basis of $ \psi(C) $. Finally, this means that $ \psi(C) = C_{opt} $ and we are done.
\end{proof}

\begin{remark}
As explained in Subsection \ref{subsec universal secure}, given an $ \mathbb{F}_{q^m} $-linear code $ C \subseteq \mathbb{F}_{q^m}^n $ of dimension $ k $, the parameter $ m $ represents the packet length, $ k $ represents the number of linearly independent packets that we may send using $ C $, or its size, and $ n $ represents the number of outgoing links from the source.

Due to the bounds (\ref{eq general upper bound}), if $ m $ and $ k $ are fixed and $ n $ is not restricted, then the code $ C_{opt} $ is the only $ \mathbb{F}_{q^m} $-linear code whose GRWs are all optimal, and hence is the only $ \mathbb{F}_{q^m} $-linear optimal code regarding information leakage in the network, up to rank equivalence.
\end{remark}

%
%\begin{remark}
%The fact that the rank equivalence can be obtained in polynomial time (last part of the previous theorem) implies that all of these optimal codes have polynomial-time decoding algorithms using any of their bases, since cartesian products of Gabidulin codes have such algorithms \cite{gabidulin, reducible}.
%
%The algorithm would be as follows, where all steps can be done in polynomial time, as shown in the previous theorem: First find the vectors $ \mathbf{v}_{s,i} $. Then define the matrix $ M $ which is the inverse of the matrix with rows $ \mathbf{v}_{s,i} $ (in lexicographic order). The linear map $ \psi $ consists then in multiplying by $ M $. 
%
%If $ \mathbf{x} G + \mathbf{e} $ is the received vector, where the rows of $ G $ are the basis vectors $ \mathbf{b}_1, \mathbf{b}_2, \ldots, \mathbf{b}_r $ and $ \mathbf{e} $ is the error vector, then $ \mathbf{x}(GM) + (\mathbf{e}M) $ is an encoded vector with errors using the code in Proposition \ref{product gabidulin one dime} with the generator matrix $ GM $, which is the block diagonal matrix with $ (\alpha_1, \alpha_2, \ldots, \alpha_m) $ in the diagonal blocks, and $ {\rm wt_R}(\mathbf{e}) = {\rm wt_R}(\mathbf{e}M) $. 
%
%We can now use a fast rank error-correcting algorithm for such code, which is given in \cite{reducible}, and obtain $ \mathbf{x} $.
%\end{remark}

\begin{remark}
The codes $ C_{opt} \subseteq \mathbb{F}_{q^m}^{km} $ do not only have optimal GRWs, but the difference between two consecutive weights is the largest possible by (\ref{eq general upper bound monotonicity}):
$$ d_{R,r+1}(C_{opt}) = d_{R,r}(C_{opt}) + m, $$
for $ r = 1,2, \ldots, k-1 $. However, for a Gabidulin code $ C_{Gab} $ as in Subsection \ref{existing code constructions}, the difference between two consecutive weights is the smallest possible by (\ref{eq general upper bound monotonicity}):
$$ d_{R,r+1}(C_{Gab}) = d_{R,r}(C_{Gab}) + 1, $$
for $ r = 1,2, \ldots, k-1 $. 

Therefore, when using $ C_{opt} $, an adversary that obtains $ r $ packets of information, by listening to the smallest possible number of links, needs to listen to at least $ m $ more links in order to obtain one more packet of information. However, when using $ C_{Gab} $, the adversary only needs to listen to one more link to obtain one more packet of information. 
\end{remark}

\section{Stronger security of reducible codes} \label{stronger section}

On the error correction side, it is well-known that reducible codes can correct a substantial amount of rank errors beyond half of their minimum rank distance \cite[Sec. III.A]{reducible}.

The aim of this section is to show that, on the security side, when using a reducible code $ C $, an eavesdropper may in many cases obtain less than $ r $ packets of information even if he or she wire-taps at least $ d_{R,r}(C) $ links in the network (see Subsection \ref{subsec GRWs and leakage}). 

Setting $ r=1 $ and using an MRD reducible code (as in Section \ref{sec MRD reducible}), this means that the eavesdropper obtains no information even when wire-tapping strictly more links than those allowed by other MRD codes ($ \mathbb{F}_{q^m} $-linear or $ \mathbb{F}_q $-linear), by \cite[Th. 3]{allertonversion}. 

The above mentioned stronger security is obtained by upper bounding the dimensions of the code intersected with Galois closed spaces, due to Equation (\ref{leaked info}). We explain this in the remarks at the end of the section.

The following is the main result of this section, where we denote by $ \pi_i : \mathbb{F}_{q^m}^n \longrightarrow \mathbb{F}_{q^m}^{n_i} $ the projection map onto the coordinates corresponding to the $ i $-th main component $ C_i \subseteq \mathbb{F}_{q^m}^{n_i} $, for $ i = 1,2, \ldots, l $, with notation as in Subsection \ref{subsec reducible}.

\begin{theorem} \label{theorem main stronger security}
Let $ V \subseteq \mathbb{F}_{q^m}^n $ be an $ \mathbb{F}_{q^m} $-linear Galois closed space and assume that, for each $ i = 1,2, \ldots, l $, there exists $ 0 \leq r_i \leq k_i $ such that $ \dim(\pi_i(V)) \leq d_{R,r_i}(C_i) $, with notation as in Subsection \ref{subsec reducible}. Then 
$$ \dim(C \cap V) \leq \left( \sum_{i=1}^l r_i \right) - \# \{ i \mid \dim(\pi_i(V)) < d_{R,r_i}(C_i) \}. $$
In particular, if $ \dim(\pi_i(V)) < d_{R,1}(C_i) $, for $ i = 1,2, \ldots, l $, then
$$ \dim(C \cap V) = 0. $$
\end{theorem}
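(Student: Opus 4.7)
My plan is to prove this by induction on the number of main components $l$. The base case $l = 1$ amounts to the defining property of the generalized rank weight: if $V$ is a Galois closed subspace with $\dim V \leq d_{R,r_1}(C_1)$, then $\dim(C_1 \cap V) \leq r_1$, with strict inequality on $\dim V$ forcing strict inequality on $\dim(C_1 \cap V)$. Both facts follow from the definition of $d_{R,r_1}(C_1)$ together with the strict monotonicity $d_{R,r+1}(C_1) > d_{R,r}(C_1)$ recorded in Lemma \ref{mono}: were $\dim(C_1 \cap V) \geq r_1 + 1$, the defining property would give $d_{R,r_1+1}(C_1) \leq \dim V \leq d_{R,r_1}(C_1)$, contradicting monotonicity.

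For the inductive step, I would project onto the first main component. The shape of the generator matrix in Definition \ref{def reducible} makes this projection especially clean: $\pi_1(C) = C_1$ (because $G_{1,1}$ has full row rank and appears in the top-left corner with zeros below it), and $\ker(\pi_1|_C)$ is naturally isomorphic to the reducible code $C' \subseteq \mathbb{F}_{q^m}^{n-n_1}$ whose main components are $C_2,\ldots,C_l$, obtained by deleting the first row block and first column block of $G$. Applying rank--nullity to $\pi_1|_{C\cap V}$ yields
$$\dim(C \cap V) = \dim(\pi_1(C\cap V)) + \dim(\ker(\pi_1|_{C\cap V})).$$
The image is contained in $C_1 \cap \pi_1(V)$, and the kernel is isomorphic to $C' \cap V_1$, where $V_1 = \{\mathbf{v}' \in \mathbb{F}_{q^m}^{n-n_1} : (0,\mathbf{v}') \in V\}$. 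The base case bounds the first summand by $r_1$, sharpened to $r_1-1$ when $\dim(\pi_1(V)) < d_{R,r_1}(C_1)$, and the induction hypothesis, once its preconditions are checked, bounds the second summand by $\sum_{i=2}^l r_i$ minus the number of strict inequalities among indices $i \geq 2$.

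The key technical point I expect to need is that both $\pi_1(V)$ and $V_1$ inherit the Galois closed property from $V$, and that $\pi_j(V_1) \subseteq \pi_j(V)$ for $j \geq 2$. Galois closedness of $\pi_1(V)$ is immediate because coordinate projections commute with the coordinate-wise Frobenius map; for $V_1$, one observes that $V$ Galois closed forces $V^{[1]}=V$ (same $\mathbb{F}_{q^m}$-dimension and inclusion), whence applying Frobenius to any $(0,\mathbf{v}') \in V$ gives $(0,\mathbf{v}'^{[1]}) \in V$, so $V_1^{[1]} \subseteq V_1$, hence equality. The inclusion $\pi_j(V_1) \subseteq \pi_j(V)$ is clear from the definition of $V_1$, and crucially it guarantees that any strict inequality $\dim(\pi_j(V)) < d_{R,r_j}(C_j)$ carries over to $\pi_j(V_1)$, so the count of strict inequalities provided by induction is at least the count appearing in the target bound.

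The ``in particular'' statement then follows by specializing $r_i = 1$ for all $i$: the sum equals $l$, and by hypothesis all $l$ strict inequalities are in force, so the bound collapses to $\dim(C\cap V) \leq 0$. I expect the main obstacle to be the Galois-closedness bookkeeping when passing to $V_1$; the rest of the argument is a direct application of rank--nullity combined with the inductive decomposition of $C$ into $C_1$ and $C'$.
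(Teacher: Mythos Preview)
Your argument is correct. The induction is sound, the Galois-closedness of $\pi_1(V)$ and of $V_1$ is checked properly, and you correctly observe that $\pi_j(V_1)\subseteq\pi_j(V)$ forces every strict inequality at index $j\geq2$ to persist, so the count obtained from the inductive hypothesis is at least as large as needed. One small edge case you do not mention explicitly is $r_1=k_1$, where $d_{R,r_1+1}(C_1)$ is undefined; but there the bound $\dim(C_1\cap\pi_1(V))\leq k_1$ is trivial, so this causes no trouble.

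The paper proceeds differently. Rather than inducting on $l$, it first proves in one stroke the inequality
\[
\dim(C\cap V)\ \leq\ \sum_{i=1}^{l}\dim\bigl(C_i\cap\pi_i(V)\bigr),
\]
by applying the decomposition Lemma~\ref{lemma D generators} (already used for Theorem~\ref{grw bounds}) to $D=C\cap V$ and noting that each piece $D_i'$ injects into $C_i\cap\pi_i(V)$. It then bounds each summand separately by $r_i$ (or $r_i-1$ under the strict hypothesis), using exactly the monotonicity argument you give for the base case. Your induction effectively reproves that lemma one coordinate block at a time via rank--nullity, introducing the auxiliary space $V_1$ and its Galois closedness in place of the simultaneous decomposition. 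The paper's route is slightly more economical in context because it reuses Lemma~\ref{lemma D generators}; your route is more self-contained and avoids that lemma entirely.
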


Before proving this theorem, we give two consequences of interest. In the first, we give a sufficient condition for the eavesdropper to obtain less than $ r $ packets of information, for a given $ r $, as in the second paragraph of this section:

\begin{corollary} \label{corollary stronger security 2}
Let the notation be as in Subsection \ref{subsec reducible}, let $ 1 \leq r \leq k $ and let $ V \subseteq \mathbb{F}_{q^m}^n $ be an $ \mathbb{F}_{q^m} $-linear Galois closed space. Assume that $ r = \sum_{i=1}^l r_i $, where $ 1 \leq r_i \leq k_i $ and $ \dim(\pi_i(V)) \leq d_{R,r_i}(C_i) $, for $ i = 1,2, \ldots, l $, and for some $ j $ it holds that $ \dim(\pi_j(V)) < d_{R,r_j}(C_j) $. Then 
$$ \dim(C \cap V) < r. $$
\end{corollary}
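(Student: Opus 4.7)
The plan is to deduce this corollary as a direct consequence of Theorem \ref{theorem main stronger security}, which has already been stated and will be proved just before the corollary. The hypotheses of the theorem match the hypotheses of the corollary almost verbatim: the $r_i$ satisfy $0 \leq r_i \leq k_i$ (in fact the stronger $1 \leq r_i \leq k_i$) and $\dim(\pi_i(V)) \leq d_{R,r_i}(C_i)$ holds for every $i=1,2,\ldots,l$. So the theorem applies without any further work.

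Applying the theorem yields
$$ \dim(C \cap V) \leq \left( \sum_{i=1}^l r_i \right) - \# \{ i \mid \dim(\pi_i(V)) < d_{R,r_i}(C_i) \} = r - \# \{ i \mid \dim(\pi_i(V)) < d_{R,r_i}(C_i) \}. $$
The final step is purely combinatorial: the hypothesis guarantees that there exists at least one index $j$ with $\dim(\pi_j(V)) < d_{R,r_j}(C_j)$, so the set $\{ i \mid \dim(\pi_i(V)) < d_{R,r_i}(C_i) \}$ is non-empty and has cardinality at least $1$. Substituting gives $\dim(C \cap V) \leq r - 1 < r$.

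There is no substantive obstacle here beyond bookkeeping. The only thing worth double-checking is that the hypothesis $1 \leq r_i$ (rather than the theorem's $0 \leq r_i$) is indeed what is needed to make the decomposition $r = \sum_i r_i$ meaningful together with the strict inequality at some $j$; this is automatic, since the theorem's conclusion is stated for any valid choice of $r_i$'s and we are simply specializing.
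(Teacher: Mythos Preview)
Your proposal is correct and matches the paper's approach: the paper presents this corollary as an immediate consequence of Theorem \ref{theorem main stronger security} without giving a separate proof, and your argument is exactly the intended one-line specialization of that theorem.
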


The second consequence is just the previous theorem applied to the codes in Definition \ref{def optimal all grws}:

\begin{corollary} \label{corollary stronger security 1}
Let $ C_{opt} \subseteq \mathbb{F}_{q^m}^{km} $ be the code in Definition \ref{def optimal all grws}, and let $ V \subseteq \mathbb{F}_{q^m}^{km} $ be an $ \mathbb{F}_{q^m} $-linear Galois closed space. Then
$$ \dim(C_{opt} \cap V) \leq \# \{ i \mid \pi_i(V) = \mathbb{F}_{q^m}^m \}. $$
\end{corollary}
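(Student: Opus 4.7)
The plan is to derive this corollary directly from Theorem \ref{theorem main stronger security} by making the canonical choice of parameters $r_i$. Recall that $C_{opt}=C_1\times C_2\times\cdots\times C_k$, where each $C_i\subseteq\mathbb{F}_{q^m}^m$ is a one-dimensional Gabidulin code, so $k_i=1$ and $d_{R,1}(C_i)=m$ for every $i=1,2,\ldots,k$.

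First I would choose $r_i=1$ for each $i$. Since $\pi_i(V)\subseteq\mathbb{F}_{q^m}^m$, the inequality $\dim(\pi_i(V))\leq m=d_{R,r_i}(C_i)$ holds automatically, so the hypothesis of Theorem \ref{theorem main stronger security} is satisfied for this choice. Then the theorem yields
$$\dim(C_{opt}\cap V)\leq k-\#\{i\mid \dim(\pi_i(V))<m\}.$$
Next I would observe that since $\pi_i(V)$ is an $\mathbb{F}_{q^m}$-linear subspace of $\mathbb{F}_{q^m}^m$, either $\dim(\pi_i(V))<m$ or $\pi_i(V)=\mathbb{F}_{q^m}^m$. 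Consequently,
$$\#\{i\mid \dim(\pi_i(V))<m\}=k-\#\{i\mid \pi_i(V)=\mathbb{F}_{q^m}^m\},$$
and substituting this into the previous bound gives exactly the claimed inequality.

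There is no real obstacle here: the proof is a one-line specialization of the more general theorem, and the only matter of care is to note that $k_i=1$ forces $r_i\in\{0,1\}$, and that the choice $r_i=1$ is the one that produces the sharpest conclusion (the choice $r_i=0$ would require $\pi_i(V)=\{\mathbf{0}\}$, which is too restrictive and would not cover all Galois closed $V$). Everything else is bookkeeping based on the fact that $d_{R,1}(C_i)=m$ coincides with the ambient dimension.
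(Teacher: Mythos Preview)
Your proposal is correct and is exactly the approach the paper takes: the corollary is stated as an immediate specialization of Theorem~\ref{theorem main stronger security} to $C_{opt}$, using $k_i=1$, $d_{R,1}(C_i)=m$, and the choice $r_i=1$ for all $i$. Your bookkeeping converting $k-\#\{i\mid\dim(\pi_i(V))<m\}$ into $\#\{i\mid\pi_i(V)=\mathbb{F}_{q^m}^m\}$ is precisely what is needed.
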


Finally, we prove Theorem \ref{theorem main stronger security}. We need the following lemma:

\begin{lemma}
Let $ V \subseteq \mathbb{F}_{q^m}^n $ be an $ \mathbb{F}_{q^m} $-linear Galois closed space, and let the notation be as in Subsection \ref{subsec reducible}. It holds that
$$ \dim(C \cap V) \leq \sum_{i=1}^l \dim(C_i \cap \pi_i(V)). $$
\end{lemma}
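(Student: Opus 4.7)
My plan is to prove the bound by induction on the number $l$ of main components, exploiting the block upper-triangular structure of a reduction.

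The base case $l=1$ is trivial, since $C=C_1$, $\pi_1$ is the identity and the inequality is an equality. For the inductive step, I would look at the restriction of the projection $\pi_1:\mathbb{F}_{q^m}^n\to\mathbb{F}_{q^m}^{n_1}$ to $C\cap V$. Because the generator matrix of $C$ has first block column $(G_{1,1},0,\ldots,0)^T$ with $G_{1,1}$ generating $C_1$, we have $\pi_1(C)=C_1$, so $\pi_1(C\cap V)\subseteq C_1\cap \pi_1(V)$. On the other hand, $\ker(\pi_1|_C)=C_2'\oplus\cdots\oplus C_l'$; viewed in the last $n-n_1$ coordinates this kernel is itself a reducible code $\widetilde{C}$ whose main components are $C_2,\ldots,C_l$ (its generator matrix is precisely the lower $(l-1)\times(l-1)$ block-triangular part of $G$). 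By the rank–nullity theorem,
\[
\dim(C\cap V)=\dim\bigl(\pi_1(C\cap V)\bigr)+\dim\bigl(\widetilde{C}\cap V\bigr)\leq \dim(C_1\cap\pi_1(V))+\dim(\widetilde{C}\cap V).
\]

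To bound the second summand by induction, I would set $V_0:=V\cap\bigl(\{0\}^{n_1}\times\mathbb{F}_{q^m}^{n-n_1}\bigr)$, observe that this is Galois closed (as an intersection of two Galois closed spaces, since the second factor is generated by standard basis vectors in $\mathbb{F}_q^{\,n}$), and let $W\subseteq\mathbb{F}_{q^m}^{n-n_1}$ be the image of $V_0$ under the projection onto the last $n-n_1$ coordinates. Since Galois closedness is preserved by coordinate projections (equivalently, $W$ inherits an $\mathbb{F}_q$-basis from $V_0$), $W$ is also Galois closed, and $\widetilde{C}\cap V$ is in natural bijection with $\widetilde{C}'\cap W$, where $\widetilde{C}'$ denotes $\widetilde{C}$ viewed inside $\mathbb{F}_{q^m}^{n-n_1}$. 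Applying the induction hypothesis to the reducible code $\widetilde{C}'$ with main components $C_2,\ldots,C_l$ and the Galois closed space $W$ gives
\[
\dim(\widetilde{C}\cap V)\leq \sum_{i=2}^l \dim\bigl(C_i\cap\pi_i(V_0)\bigr)\leq \sum_{i=2}^l \dim\bigl(C_i\cap\pi_i(V)\bigr),
\]
where the last inequality uses $\pi_i(V_0)\subseteq \pi_i(V)$ for $i\geq 2$. Combining with the bound on the image of $\pi_1$ closes the induction.

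The only delicate point I anticipate is the Galois closedness of $V_0$ and $W$, which is exactly what allows the induction hypothesis to apply to the smaller reducible code; everything else is a direct bookkeeping of the block structure together with rank–nullity.
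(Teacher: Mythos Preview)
Your proof is correct. The paper takes a different route: rather than induction, it invokes its Lemma~\ref{lemma D generators} to decompose $D=C\cap V$ all at once as $D=\bigoplus_{D\cap A_i\neq\varnothing} D_i'$ with $D_i'\subseteq\langle D\cap A_i\rangle$ and $D_i'\cap A_j=\varnothing$ for $j>i$, and then checks that the restriction of $\pi_i$ to each $D_i'$ is injective with image inside $C_i\cap\pi_i(V)$. Your inductive peeling via rank--nullity on $\pi_1|_{C\cap V}$ is essentially the one-step-at-a-time version of that same decomposition, and it has the advantage of being self-contained (no auxiliary structural lemma needed). One observation worth making: the paper's argument never actually uses the Galois closedness of $V$, so the inequality holds for every $\mathbb{F}_{q^m}$-linear subspace $V$. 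Your proof, as stated, invokes Galois closedness of $V_0$ and $W$ only to match the hypothesis of the induction step; if you phrase the induction hypothesis for arbitrary linear $V$, that ``delicate point'' disappears entirely and your proof becomes uniformly simpler.
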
 
\begin{proof}
Let $ D = C \cap V \subseteq C $ and let the notation be as in Lemma \ref{lemma D generators}. Since $ D = \bigoplus_{D \cap A_i \neq \varnothing} D^\prime_i $, we just need to show that $ \dim(D^\prime_i) \leq \dim(C_i \cap \pi_i(V)) $, for $ i $ such that $ D \cap A_i \neq \varnothing $.

Fix such an index $ i $, and let $ \rho_i : D^\prime_i \longrightarrow C_i \cap \pi_i(V) $ be the restriction of $ \pi_i $ to $ D^\prime_i $. It is well-defined since $ \pi_i(D^\prime_i) \subseteq \pi_i(V) $ by definition of $ D $, and $ \pi_i(D^\prime_i) \subseteq C_i $ since $ D^\prime_i \subseteq \langle C \cap A_i \rangle $. 

Finally, we see that $ \rho_i $ is one to one since $ D^\prime_i \subseteq \langle C \cap A_i \rangle $ and $ D^\prime_i \cap A_j = \varnothing $ for $ j > i $, and we are done. 
\end{proof}

\begin{proof} [Proof of Theorem \ref{theorem main stronger security}]
First observe that $ \pi_i(V) \subseteq \mathbb{F}_{q^m}^{n_i} $ is again Galois closed, for $ i = 1,2, \ldots, l $. By definition of GRWs, if $ \dim(\pi_i(V)) < d_{R,r_i}(C_i) $, then $ \dim(C_i \cap \pi_i(V)) < r_i $, for $ i $ such that $ r_i > 0 $. On the other hand, if $ \dim(\pi_i(V)) \leq d_{R,r_i}(C_i) $ and $ r_i < k_i $, then by monotonicity of GRWs \cite[Lemma 4]{rgrw}, it holds that $ \dim(\pi_i(V)) < d_{R,r_i+1}(C_i) $, which implies that $ \dim(C_i \cap \pi_i(V)) < r_i+1 $, that is, $ \dim(C_i \cap \pi_i(V)) \leq r_i $. Finally, if $ \dim(\pi_i(V)) \leq d_{R,k_i}(C_i) $, then it is trivial that $ \dim(C_i \cap \pi_i(V)) \leq \dim(C_i) = k_i $. 

The result follows then from the previous lemma.
\end{proof}

\begin{remark}
In the situation of Corollary \ref{corollary stronger security 2}, if $ \dim(\pi_i(V)) \leq d_{R,r_i}(C_i) $, for $ i = 1,2, \ldots, l $ and with strict inequality for some $ j $, then an eavesdropper that obtains $ \mathbf{c} B^T $, where $ B $ generates $ V $, gains less than $ r $ packets of information about the original packets by Equation (\ref{leaked info}). 

Observe that the previous condition implies that $ \dim(V) < \sum_{i=1}^l d_{R,r_i}(C_i) $. We know from the bound (\ref{lower bound}) that if $ \dim(V) < \sum_{i=1}^l d_{R,s_i}(C_i) $ for all possible decompositions $ r= \sum_{i=1}^l s_i $, then $ \dim(C \cap V ) < r $. 

However, many $ \mathbb{F}_{q^m} $-linear Galois closed spaces may satisfy $ \dim(\pi_i(V)) < d_{R,r_i}(C_i) $, for $ i = 1,2, \ldots, l $, and a given decomposition $ r = \sum_{i=1}^l r_i $, but may also satisfy $ \dim(V) \geq \sum_{i=1}^l d_{R,s_i}(C_i) $ for some other decomposition $ r = \sum_{i=1}^l s_i $.

Take for instance $ V = V_1 \times V_2 \times \cdots \times V_l $, where $ V_i \subseteq \mathbb{F}_{q^m}^{n_i} $ are $ \mathbb{F}_{q^m} $-linear Galois closed spaces satisfying $ \dim(V_i) \leq d_{R,r_i}(C_i) $, for $ i = 1,2, \ldots, l $ and with strict inequality for some $ j $, but $ \dim(V) = \sum_{i=1}^l \dim(V_i) \geq d_{R,r}(C) $. 
\end{remark}

\begin{remark}
In the particular case of Corollary \ref{corollary stronger security 1}, to obtain at least $ r $ packets of information, it must hold that $ \pi_i(V) $ is the whole space $ \mathbb{F}_{q^m}^m $ for at least $ r $ indices $ i $. Take for instance $ V= V_1 \times V_2 \times \cdots \times V_k $, where $ V_i \subsetneq \mathbb{F}_{q^m}^n $ satisfies $ \dim(V_i) = m-1 $, for $ i = 1,2, \ldots, k $. In that case, $ \dim(V) = k(m-1) $, which is usually much bigger than $ d_{R,1}(C) = m $. However, the adversary still obtains no information about the original packets.
\end{remark}

\section{Related properties of reducible codes}

In this section, we study some secondary properties of reducible codes that are related to their GRWs.

\subsection{Cartesian product conditions}

In this subsection, we gather sufficient and necessary conditions for reducible codes to be rank equivalent to cartesian products (see Section \ref{sec all optimal} for the definition of rank equivalence).

We start by using Galois closures and generalized rank weights to see whether an $ \mathbb{F}_{q^m} $-linear code that can be decomposed as a direct sum of smaller codes is rank equivalent to the cartesian product of these codes. It can be seen as a converse statement to Corollary \ref{exact grw product}.

\begin{proposition} \label{char product}
Given an $ \mathbb{F}_{q^m} $-linear code $ C = C_1^\prime \oplus C_2^\prime \oplus \cdots \oplus C_l^\prime \subseteq \mathbb{F}_{q^m}^n $, with $ k_i = \dim(C_i^\prime) $, for $ i = 1,2, \ldots, l $, and $ k = \dim(C) $, we have that $ C^* = C_1^{\prime *} + C_2^{\prime *} + \cdots + C_l^{\prime *} $ and the following conditions are equivalent:
\begin{enumerate}
\item
$ C $ is rank equivalent to a cartesian product $ C_1 \times C_2 \times \cdots \times C_l \subseteq \mathbb{F}_{q^m}^n $, where $ C_i \subseteq \mathbb{F}_{q^m}^{n_i} $ is rank equivalent to $ C_i^\prime $, and the equivalence map from $ C $ to the product is the product of the equivalence maps from $ C_i^\prime $ to $ C_i $.
\item
$ C^* = C_1^{\prime *} \oplus C_2^{\prime *} \oplus \cdots \oplus C_l^{\prime *} $.
\item 
$ d_{R,k}(C) = d_{R,k_1}(C_1^\prime) + d_{R,k_2}(C_2^\prime) + \cdots + d_{R,k_l}(C_l^\prime) $.
\item
For all $ r = 1,2, \ldots, k $, it holds that
\begin{equation*} 
\begin{split}
d_{R,r}(C) = \min \{ & d_{R,r_1}(C_1^\prime) + d_{R,r_2}(C_2^\prime) + \cdots + d_{R,r_l}(C_l^\prime) \\
 & \mid r = r_1 + r_2 + \cdots + r_l, 0 \leq r_i \leq k_i \}. \\
\end{split}
\end{equation*}
\end{enumerate}
\end{proposition}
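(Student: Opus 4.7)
The plan is to first establish the additive identity $C^* = C_1'^* + \cdots + C_l'^*$ and then prove the four equivalences via the cycle $(2)\Rightarrow(1)\Rightarrow(4)\Rightarrow(3)\Rightarrow(2)$. The identity itself is immediate: the Frobenius $[j]$-power commutes with sums of $\mathbb{F}_{q^m}$-linear subspaces, so $C^{[j]} = \sum_i C_i'^{[j]}$, and summing over $j=0,\ldots,m-1$ gives the result.

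For $(3)\Rightarrow(2)$, I would apply Lemma \ref{equivalent definition slides} to rewrite $d_{R,k}(C) = \dim(C^*)$ and $d_{R,k_i}(C_i') = \dim(C_i'^*)$: this is because the only $k$-dimensional subcode of $C$ is $C$ itself, and similarly for each $C_i'$. The additive identity yields the general bound $\dim(C^*) \leq \sum \dim(C_i'^*)$, with equality precisely when the sum is direct. Condition (3) asserts exactly this equality. The implication $(4)\Rightarrow(3)$ is immediate by setting $r=k$, since the only decomposition of $k$ with $0 \leq r_i \leq k_i$ is $r_i = k_i$. For $(1)\Rightarrow(4)$, a rank equivalence preserves rank weights, hence (via Lemma \ref{equivalent definition slides}) preserves all GRWs; since $C \cong C_1 \times \cdots \times C_l$ with $C_i \cong C_i'$ under the specified equivalence, Corollary \ref{exact grw product} applied to the product yields the formula in (4).

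The only nontrivial step is $(2)\Rightarrow(1)$, and it would proceed as follows. Using \cite[Lemma 1]{stichtenoth}, each Galois closed space $C_i'^*$ admits an $\mathbb{F}_q$-basis $\mathbf{v}_{i,1},\ldots,\mathbf{v}_{i,t_i} \in \mathbb{F}_q^n$, where $t_i = \dim(C_i'^*)$. Under hypothesis (2) the concatenation of these is an $\mathbb{F}_q$-basis of $C^*$. Setting $n_i = t_i$, $n' = n_1 + \cdots + n_l$, and letting $\mathbf{w}_{i,j}$ denote the canonical basis vector of $\mathbb{F}_{q^m}^{n'}$ sitting in position $j$ of the $i$-th block, I would define the $\mathbb{F}_{q^m}$-linear isomorphism $\phi : C^* \to \mathbb{F}_{q^m}^{n'}$ by $\phi(\mathbf{v}_{i,j}) = \mathbf{w}_{i,j}$. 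Lemma \ref{rank equivalences} (with $\beta = 1$) then says $\phi$ is a rank equivalence onto the Galois closed space $\mathbb{F}_{q^m}^{n_1} \times \cdots \times \mathbb{F}_{q^m}^{n_l}$. By construction $\phi(C_i'^*)$ lies in the $i$-th block, hence so does $C_i := \phi(C_i')$, and each restriction $\phi\vert_{C_i'} : C_i' \to C_i$ is a rank equivalence. Then $\phi(C) = \bigoplus_i \phi(C_i') = C_1 \times \cdots \times C_l$, which is condition (1).

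The main obstacle is this construction step: one must glue the $\mathbb{F}_q$-bases of the individual $C_i'^*$ into a global basis of $C^*$, which is exactly where hypothesis (2) gets used, and then arrange the target coordinates so that Lemma \ref{rank equivalences} produces a genuine cartesian product rather than just any rank-equivalent embedding. Everything else reduces to routine book-keeping with Lemma \ref{equivalent definition slides} and Corollary \ref{exact grw product}.
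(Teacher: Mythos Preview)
Your proposal is correct and follows essentially the same approach as the paper: the same implication cycle $(2)\Rightarrow(1)\Rightarrow(4)\Rightarrow(3)\Leftrightarrow(2)$, the same use of Lemma \ref{equivalent definition slides} to identify $d_{R,k}(C)=\dim(C^*)$, and the same construction for $(2)\Rightarrow(1)$ via $\mathbb{F}_q$-bases of the Galois closures mapped to canonical block bases through Lemma \ref{rank equivalences}. The only cosmetic difference is that the paper first reduces to the non-rank-degenerate case $C^*=\mathbb{F}_{q^m}^n$ before building the equivalence, whereas you work directly with $C^*$ and land in $\mathbb{F}_{q^m}^{n'}$ with $n'=\dim(C^*)$; both are valid under the paper's definition of rank equivalence.
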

\begin{proof}
It is trivial that item 1 implies item 4 by Corollary \ref{exact grw product}. It is also trivial that item 4 implies item 3, and items 2 and 3 are equivalent since $ d_{R,k}(C) = \dim(C^*) $ and $ d_{R,k_i}(C_i^\prime) = \dim(C_i^{\prime *}) $, for $ i = 1,2, \ldots, l $, by Lemma \ref{equivalent definition slides}.

Now we prove that item 2 implies item 1. Define $ V_i = C_i^{\prime *} $, for $ i =1,2, \ldots, l $, and $ V = C^* $. We may assume that $ C $ is not rank degenerate, that is, $ V = \mathbb{F}_{q^m}^n $. Therefore, $ n = \dim(V) $, $ n_i = \dim(V_i) $, for $ i = 1,2, \ldots, l $, and $ n = n_1 + n_2 + \cdots + n_l $.

On the other hand, define a vector space isomorphisms $ \psi_i : V_i \longrightarrow \mathbb{F}_{q^m}^{n_i} $, for $ i = 1,2, \ldots, l $, by sending a basis of $ V_i $ of vectors in $ \mathbb{F}_q^n $ to the canonical basis of $ \mathbb{F}_{q^m}^{n_i} $. It is a rank equivalence by Lemma \ref{rank equivalences}. Define $ C_i = \psi_i(C_i^\prime) $. Therefore, $ C_i $ and $ C_i^\prime $ are rank equivalent by definition.

Finally, define $ \psi : V = V_1 \oplus V_2 \oplus \cdots \oplus V_l \longrightarrow \mathbb{F}_{q^m}^n $ by 
$$ \psi (\mathbf{c}_1 + \mathbf{c}_2 + \cdots + \mathbf{c}_l) = (\psi_1(\mathbf{c}_1), \psi_2(\mathbf{c}_2), \ldots, \psi_l(\mathbf{c}_l)), $$
where $ \mathbf{c}_i \in V_i $, for all $ i = 1,2, \ldots, l $. It holds that $ \psi $ maps vectors in $ \mathbb{F}_q^n $ to vectors in $ \mathbb{F}_q^n $ and is a vector space isomorphism. Hence, it is a rank equivalence by Lemma \ref{rank equivalences} and verifies the required conditions. 
\end{proof}

\begin{corollary}
With notation as in Subsection \ref{subsec reducible}, if $ C_i $ is rank equivalent to $ C_i^\prime $, for all $ i = 1,2, \ldots, l $, then $ C $ is rank equivalent to $ C_1 \times C_2 \times \cdots \times C_l $.
\end{corollary}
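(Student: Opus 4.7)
The plan is to reduce the claim to condition 4 of Proposition \ref{char product} via the bounds in Theorem \ref{grw bounds}, exploiting the fact that rank equivalences preserve GRWs.

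First I would record the key invariance property: if $C_i$ and $C_i^\prime$ are rank equivalent, then by the definition of rank equivalence via Lemma \ref{rank equivalences} (which gives bases over $\mathbb{F}_q$ scaled by a single constant $\beta$), every generalized rank weight is preserved. Thus $d_{R,r}(C_i) = d_{R,r}(C_i^\prime)$ for all $1 \leq r \leq k_i$ and all $i = 1,2,\ldots,l$. This is exactly the observation invoked in Remark \ref{remark exact grw product}.

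Next I would apply Theorem \ref{grw bounds} to $C$. The lower bound (\ref{lower bound}) is
\[ d_{R,r}(C) \geq \min\Bigl\{ \sum_{i=1}^l d_{R,r_i}(C_i) \;\Big|\; r = \sum_i r_i,\ 0 \leq r_i \leq k_i \Bigr\}, \]
while the upper bound (\ref{upper bound}) is the same expression with $C_i^\prime$ in place of $C_i$. Under the assumed rank equivalences the two expressions coincide, so both inequalities collapse into an equality. This gives condition 4 of Proposition \ref{char product}, and then implication $(4) \Rightarrow (1)$ of that proposition furnishes a cartesian-product decomposition $C_1 \times C_2 \times \cdots \times C_l$ rank equivalent to $C$, with each factor rank equivalent to $C_i^\prime$. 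Composing the rank equivalence $C_i^\prime \to C_i$ (already given by hypothesis) with the one produced by Proposition \ref{char product} at each block — which is allowed because Proposition \ref{char product}(1) explicitly asserts that the global equivalence is a product of the per-block equivalences — yields a rank equivalence from $C$ onto the prescribed product $C_1 \times C_2 \times \cdots \times C_l$.

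There is no genuine obstacle here; the only mild technicality is making sure the rank equivalence obtained from Proposition \ref{char product} can be composed with the hypothesized one in each block to land on exactly the given $C_i$ rather than on some other rank-equivalent copy. This is immediate because rank equivalences form a groupoid under composition (again via Lemma \ref{rank equivalences}), and composition of equivalences of the blocks induces, through the product structure, a global equivalence of $C$ with $C_1 \times \cdots \times C_l$.
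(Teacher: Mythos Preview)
Your proposal is correct and matches the paper's intended route: the corollary is stated immediately after Proposition \ref{char product} with no separate proof, the understanding being precisely that the hypothesis $C_i \sim C_i'$ forces the two bounds in Theorem \ref{grw bounds} to coincide (this is Remark \ref{remark exact grw product}), which is condition 4 (equivalently condition 3) of Proposition \ref{char product}, whence item 1 of that proposition gives the desired rank equivalence. Your handling of the residual technicality---composing the per-block equivalences produced by Proposition \ref{char product} with the hypothesized ones $C_i' \to C_i$ to land on the specific product $C_1 \times \cdots \times C_l$---is also sound, since by Lemma \ref{rank equivalences} a rank equivalence sends an $\mathbb{F}_q$-basis to a scalar multiple of an $\mathbb{F}_q$-basis, so compositions and (after rescaling each block so that the scalars agree) products of rank equivalences are again rank equivalences.
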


Observe that the previous corollary states that Remark \ref{remark exact grw product} is actually implied by Corollary \ref{exact grw product}.

On the other hand, we may use the column components to see wether $ C = C_1 \times C_2 \times \cdots \times C_l $ exactly. The proof is straightforward:

\begin{proposition}
With notation as in Subsection \ref{subsec reducible}, the following conditions are equivalent:
\begin{enumerate}
\item
$ C = C_1 \times C_2 \times \cdots \times C_l $.
\item
$ C = \widehat{C} $.
\item
$ k_i = \widehat{k}_i $, for all $ i = 1,2, \ldots, l $.
\item
For each $ j = 2,3, \ldots, l $, the rows in $ G_{i,j} $, $ 1 \leq i \leq j-1 $, are contained in the main component $ C_j $.
\end{enumerate}
\end{proposition}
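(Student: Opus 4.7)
The plan is to close a short cycle $(1) \Rightarrow (2) \Rightarrow (3) \Rightarrow (4) \Rightarrow (1)$, leaning on two elementary observations: first, that $C \subseteq \widehat{C}$ always holds with $\dim(C) = \sum_{i=1}^l k_i$ and $\dim(\widehat{C}) = \sum_{i=1}^l \widehat{k}_i$; second, that reading off the $j$-th block of a generic codeword $\mathbf{x} G$ shows that the projection $\pi_j(C)$ onto the $j$-th block coincides with $\widehat{C}_j$, the row span of $\widehat{G}_j = (G_{1,j}, G_{2,j}, \ldots, G_{j,j})^T$.

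For $(1) \Rightarrow (2)$, I would use that if $C = C_1 \times C_2 \times \cdots \times C_l$, then $\pi_j(C) = C_j$; but $\pi_j(C) = \widehat{C}_j$, and since $C_j \subseteq \widehat{C}_j$ always, this forces $C_j = \widehat{C}_j$ for every $j$, hence $\widehat{C} = \widehat{C}_1 \times \cdots \times \widehat{C}_l = C_1 \times \cdots \times C_l = C$. For $(2) \Leftrightarrow (3)$, since $C \subseteq \widehat{C}$ with $\dim(C) = \sum k_i \leq \sum \widehat{k}_i = \dim(\widehat{C})$ and $k_i \leq \widehat{k}_i$ termwise, equality of dimensions forces termwise equality, and conversely.

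For $(3) \Rightarrow (4)$, the equality $\widehat{k}_j = k_j$ combined with $C_j \subseteq \widehat{C}_j$ forces $\widehat{C}_j = C_j$, and hence the rows of $G_{i,j}$ for $i < j$, which are by definition in $\widehat{C}_j$, lie in $C_j$. For $(4) \Rightarrow (1)$, if every row of $G_{i,j}$ with $i<j$ lies in the row span of $G_{j,j}$, then for each such row I can subtract a suitable $\mathbb{F}_{q^m}$-linear combination of the rows of $G_{j,j}$ (which corresponds to elementary row operations on $G$ that leave $C$ unchanged) to kill the $G_{i,j}$ block. Performing this for all $i<j$ and all $j \geq 2$ converts $G$ into block diagonal form with blocks $G_{1,1}, G_{2,2}, \ldots, G_{l,l}$, exhibiting $C$ as $C_1 \times C_2 \times \cdots \times C_l$.

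There is no real obstacle here; the only point requiring a small verification is the identification $\pi_j(C) = \widehat{C}_j$, which follows directly from the block-triangular shape of $G$ in Definition \ref{def reducible}. Everything else is linear algebra bookkeeping on dimensions and row operations.
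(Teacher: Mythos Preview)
Your proof is correct and is precisely the straightforward argument the paper has in mind; the paper itself omits the proof entirely, stating only that it ``is straightforward''. The one place worth a half-sentence of extra care is $(4)\Rightarrow(1)$: when you clear $G_{i,j}$ by subtracting combinations of the $j$-th row block, you may alter blocks $G_{i,j+1},\ldots,G_{i,l}$, but the modified entries remain in $C_{j+1},\ldots,C_l$ respectively (since the rows of $G_{j,j+1},\ldots,G_{j,l}$ are themselves in those components by hypothesis), so processing $j$ in either increasing or decreasing order terminates with a block-diagonal generator.
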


\subsection{Rank degenerate conditions}

Recall the definition of rank degenerate codes from \cite[Def. 9]{similarities}:

\begin{definition} [\textbf{\cite{similarities}}] \label{definition degenerate}
An $ \mathbb{F}_{q^m} $-linear code $ C \subseteq \mathbb{F}_{q^m}^n $ of dimension $ k $ is rank degenerate if $ d_{R,k}(C) < n $.
\end{definition}

In network coding, a code is rank degenerate if it can be applied to a network with strictly less outgoing links from the source node (see \cite{slides, similarities} for more details).

In this subsection, we study sufficient and necessary conditions for reducible codes to be rank degenerate.

\begin{proposition}
With notation as in Subsection \ref{subsec reducible}, it holds that:
\begin{enumerate}
\item
If $ C $ is rank degenerate, then there exists an $ 1 \leq i \leq l $ such that $ C_i $ is rank degenerate.
\item
If there exists an $ 1 \leq j \leq l $ such that $ \widehat{C}_j $ is rank degenerate, then $ C $ is rank degenerate.
\end{enumerate}
\end{proposition}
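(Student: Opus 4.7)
The plan is to reduce both items to the identity $d_{R,\dim(E)}(E) = \dim(E^*)$ provided by Lemma \ref{equivalent definition slides}, so that rank degeneracy of an $\mathbb{F}_{q^m}$-linear code $E \subseteq \mathbb{F}_{q^m}^{n'}$ of dimension $\dim(E)$ is simply the strict inequality $\dim(E^*) < n'$. Both conclusions then reduce to comparing dimensions of Galois closures.

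For item 1, I would specialize the lower bound (\ref{lower bound}) of Theorem \ref{grw bounds} to $r = k$. Because $k = k_1 + k_2 + \cdots + k_l$ and any admissible decomposition $r = r_1 + \cdots + r_l$ with $0 \leq r_i \leq k_i$ is forced to satisfy $r_i = k_i$ for every $i$, the minimum in (\ref{lower bound}) collapses to a single term, giving
$$ \dim(C^*) \;=\; d_{R,k}(C) \;\geq\; \sum_{i=1}^l d_{R,k_i}(C_i) \;=\; \sum_{i=1}^l \dim(C_i^*). $$
If $C$ is rank degenerate, then $\sum_{i=1}^l \dim(C_i^*) \leq \dim(C^*) < n = \sum_{i=1}^l n_i$, and since $\dim(C_i^*) \leq n_i$ for every $i$, at least one strict inequality $\dim(C_i^*) < n_i$ must occur, proving that some main component $C_i$ is rank degenerate.

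For item 2, I would use the containment $C \subseteq \widehat{C} = \widehat{C}_1 \times \widehat{C}_2 \times \cdots \times \widehat{C}_l$ recalled at the end of Subsection \ref{subsec reducible}. Two elementary observations about the Galois closure are needed: it is monotone ($D_1 \subseteq D_2$ implies $D_1^* \subseteq D_2^*$), and it distributes over cartesian products, i.e.\ $(A \times B)^* = A^* \times B^*$, which is immediate from $(\mathbf{c}_1,\mathbf{c}_2)^{[i]} = (\mathbf{c}_1^{[i]},\mathbf{c}_2^{[i]})$ applied coordinatewise. Combining them,
$$ \dim(C^*) \;\leq\; \dim(\widehat{C}^*) \;=\; \sum_{i=1}^l \dim(\widehat{C}_i^*). $$
If some $\widehat{C}_j$ is rank degenerate then $\dim(\widehat{C}_j^*) < n_j$, while $\dim(\widehat{C}_i^*) \leq n_i$ for the remaining indices; hence $\dim(C^*) < n$, i.e., $C$ is rank degenerate.

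There is no real obstacle in either step, since both proofs are direct consequences of Theorem \ref{grw bounds}, Lemma \ref{equivalent definition slides} and the containment $C \subseteq \widehat{C}$. The only subtlety worth highlighting is the uniqueness of the decomposition $r_i = k_i$ when $r = k$, which is what makes the lower bound in item 1 sharp enough to be useful, and the fact that $(-)^*$ distributes over cartesian products, which underlies item 2.
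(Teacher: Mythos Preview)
Your proof is correct and follows essentially the same route as the paper's: item 1 via the specialization of the lower bound (\ref{lower bound}) to $r=k$, and item 2 via $C \subseteq \widehat{C}$, monotonicity of Galois closures, and the product structure of $\widehat{C}$. The only cosmetic difference is that for item 2 the paper invokes Corollary \ref{exact grw product} to split $d_{R,\widehat{k}}(\widehat{C})$, whereas you argue directly that $(\widehat{C}_1 \times \cdots \times \widehat{C}_l)^* = \widehat{C}_1^* \times \cdots \times \widehat{C}_l^*$; these are equivalent.
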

\begin{proof}
We prove each item separately:
\begin{enumerate}
\item
It follows from 
$$ d_{R,k}(C) \geq d_{R,k_1}(C_1) + d_{R,k_2}(C_2) + \cdots + d_{R,k_l}(C_l), $$
which follows from Theorem \ref{grw bounds}, and the fact that $ C $ has length $ n $ and $ C_i $ has length $ n_i $, for $ i = 1,2, \ldots, l $.
\item
We have that $ C \subseteq \widehat{C} $. Hence $ C^* \subseteq \widehat{C}^* $ and
$$ d_{R,k}(C) = \dim(C^*) \leq \dim(\widehat{C}^*) = d_{R,\widehat{k}}(\widehat{C}), $$
by Lemma \ref{equivalent definition slides}, and 
$$ d_{R,\widehat{k}}(\widehat{C}) = d_{R,\widehat{k}_1}(\widehat{C}_1) + d_{R,\widehat{k}_2}(\widehat{C}_2) + \cdots + d_{R,\widehat{k}_l}(\widehat{C}_l), $$
by Corollary \ref{exact grw product}, hence the item follows, using now that $ \widehat{C}_j $ has length $ n_j $, for $ j = 1,2, \ldots, l $.
\end{enumerate}
\end{proof}

%We conclude with the cartesian product case:

\begin{corollary}
If $ C = C_1 \times C_2 \times \cdots \times C_l $, then $ C $ is rank degenerate if and only if there exists an $ 1 \leq i \leq l $ such that $ C_i $ is rank degenerate.
\end{corollary}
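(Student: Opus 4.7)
The plan is to derive both implications directly from the two items of the preceding proposition, using the simple but key observation that for a cartesian product the column components and the main components coincide.

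First I would handle the forward direction: if $C = C_1 \times C_2 \times \cdots \times C_l$ is rank degenerate, then item 1 of the previous proposition immediately gives the existence of some $i$ with $C_i$ rank degenerate. No special structure of cartesian products is needed here; this is a property of all reducible codes.

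For the converse, the crucial remark is that in a cartesian product all the off-diagonal blocks $G_{i,j}$ with $i < j$ vanish, so the column component generator $\widehat{G}_j = (G_{1,j}, G_{2,j}, \ldots, G_{j,j})^T$ reduces to a matrix whose only nonzero block is $G_{j,j}$. Hence $\widehat{C}_j = C_j$ for every $j$. Now, if some $C_i$ is rank degenerate, then $\widehat{C}_i = C_i$ is rank degenerate, and applying item 2 of the proposition yields that $C$ is rank degenerate.

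There is no real obstacle here; the only thing that needs to be checked carefully is the identification $\widehat{C}_j = C_j$ in the cartesian product case, but this is immediate from the definition of the column components in \eqref{column components} once the off-diagonal blocks are zero. Combining both directions closes the biconditional.
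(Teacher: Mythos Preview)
Your proposal is correct and is exactly the argument the paper has in mind: the corollary is stated without proof because it follows immediately from the two items of the preceding proposition together with the observation that $\widehat{C}_j = C_j$ when all off-diagonal blocks vanish. There is nothing to add.
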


\subsection{Duality and bounds on GRWs} 

With notation as in Subsection \ref{subsec reducible}, it is shown in \cite{reducible} that the dual of the reducible code $ C $ has a generator matrix of the form
\begin{displaymath}
H = \left(
\begin{array}{cccccc}
H_{1,1} & 0 & 0 & \ldots & 0 & 0 \\
H_{2,1} & H_{2,2} & 0 & \ldots & 0 & 0 \\
H_{3,1} & H_{3,2} & H_{3,3} & \ldots & 0 & 0 \\
\vdots & \vdots & \vdots & \ddots & \vdots & \vdots \\
H_{l-1,1} & H_{l-1,2} & H_{l-1,3} & \ldots & H_{l-1,l-1} & 0 \\
H_{l,1} & H_{l,2} & H_{l,3} & \ldots & H_{l,l-1} & H_{l,l} \\
\end{array} \right),
\end{displaymath}
where $ H_{i,i} $ is a generator matrix of $ C_i^\perp $, for $ i = 1,2, \ldots, l $. 

We see that reversing the order of the row blocks does not change the code, and reversing the order of the column blocks gives a rank equivalent code. Hence, denoting by $ (C^\perp)_i^\prime $ the subcode of $ C^\perp $ generated by the matrix 
$$ H_i^\prime = (H_{i,1}, \ldots, H_{i,i-1}, H_{i,i}, 0, \ldots, 0), $$
for $ i = 1,2, \ldots, l $, we may obtain analogous bounds on the generalized rank weights of $ C^\perp $ to those in Theorem \ref{grw bounds}. We leave the details to the reader.

An upper bound on the GRW of $ C^\perp $ using column components of $ C $ that follows from Corollary \ref{exact grw product} is the following:

\begin{proposition}
With notation as in Subsection \ref{subsec reducible}, it holds that
\begin{equation} \label{lower bound duals}
\begin{split}
d_{R,r}(C^\perp) \leq \min \{ & d_{R,\widehat{r}_1}(\widehat{C}^\perp_1) + d_{R,\widehat{r}_2}(\widehat{C}^\perp_2) + \cdots + d_{R,\widehat{r}_l}(\widehat{C}^\perp_l) \\
 & \mid r = \widehat{r}_1 + \widehat{r}_2 + \cdots + \widehat{r}_l, 0 \leq \widehat{r}_i \leq \widehat{k}_i \},
\end{split}
\end{equation}
for $ r = 1,2, \ldots, n- \widehat{k} $ (observe that $ n- \widehat{k} \leq n-k $).
\end{proposition}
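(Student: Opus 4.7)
The plan is to reduce the claim to Corollary~\ref{exact grw product} by exhibiting a well-understood subcode of $ C^\perp $ that is a cartesian product. Recall from Subsection~\ref{subsec reducible} that $ C \subseteq \widehat{C} = \widehat{C}_1 \times \widehat{C}_2 \times \cdots \times \widehat{C}_l $. Dualising reverses inclusions, so $ \widehat{C}^\perp \subseteq C^\perp $, and the dual of a cartesian product of codes is the cartesian product of their duals, giving
$$ \widehat{C}^\perp = \widehat{C}_1^\perp \times \widehat{C}_2^\perp \times \cdots \times \widehat{C}_l^\perp. $$
Note that $ \dim(\widehat{C}^\perp) = n - \widehat{k} $, which is why we must restrict to $ r \leq n - \widehat{k} $.

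Next I would invoke the standard monotonicity of GRWs under code inclusion: if $ D \subseteq E $ are $ \mathbb{F}_{q^m} $-linear codes, then every $ r $-dimensional subspace of $ D $ is an $ r $-dimensional subspace of $ E $, so by the characterisation in Lemma~\ref{equivalent definition slides} the minimum defining $ d_{R,r}(E) $ is taken over a larger family than that defining $ d_{R,r}(D) $, giving $ d_{R,r}(E) \leq d_{R,r}(D) $. Applied to $ \widehat{C}^\perp \subseteq C^\perp $, this yields
$$ d_{R,r}(C^\perp) \leq d_{R,r}(\widehat{C}^\perp), \quad 1 \leq r \leq n-\widehat{k}. $$

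Finally, since $ \widehat{C}^\perp $ is a genuine cartesian product of the $ l $ codes $ \widehat{C}_i^\perp $, Corollary~\ref{exact grw product} computes its GRWs exactly as
$$ d_{R,r}(\widehat{C}^\perp) = \min \Bigl\{ \sum_{i=1}^l d_{R,\widehat{r}_i}(\widehat{C}_i^\perp) \ \Big| \ r = \sum_{i=1}^l \widehat{r}_i,\ 0 \leq \widehat{r}_i \leq n_i - \widehat{k}_i \Bigr\}, $$
and combining the two displays gives the proposition. The only delicate point is the range of the $ \widehat{r}_i $: the natural constraint from the cartesian product is $ 0 \leq \widehat{r}_i \leq \dim(\widehat{C}_i^\perp) = n_i - \widehat{k}_i $, which is stronger than the $ \widehat{r}_i \leq \widehat{k}_i $ written in the statement, but a stronger constraint only enlarges the minimum, so the stated upper bound still holds. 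There is no genuine obstacle; the proof is essentially a two-line argument given the earlier results and the observation $ \widehat{C}^\perp \subseteq C^\perp $.
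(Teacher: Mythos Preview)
Your argument is correct and follows exactly the paper's own proof: use $C \subseteq \widehat{C}$ to get $\widehat{C}^\perp \subseteq C^\perp$, then apply Corollary~\ref{exact grw product} to the cartesian product $\widehat{C}^\perp = \widehat{C}_1^\perp \times \cdots \times \widehat{C}_l^\perp$. Your final remark about the range $\widehat{r}_i \leq \widehat{k}_i$ versus $\widehat{r}_i \leq n_i - \widehat{k}_i$ correctly flags what is almost certainly a typo in the statement, though your justification is backwards: if the natural constraint were \emph{stronger}, its minimum would be \emph{larger} and the implication would go the wrong way; what actually saves the bound is that terms with $\widehat{r}_i > n_i - \widehat{k}_i$ are undefined (or $+\infty$) and so never achieve the minimum, making the two ranges effectively equivalent.
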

\begin{proof}
It holds that $ C \subseteq \widehat{C} $, hence $ \widehat{C}^\perp \subseteq C^\perp $, and the result follows then from Corollary \ref{exact grw product} and the fact that $ \widehat{C}^\perp = \widehat{C}_1^\perp \times \widehat{C}_2^\perp \times \cdots \times \widehat{C}_l^\perp $.
\end{proof}

In particular, if $ \widehat{k} < n $, it holds that
\begin{equation} \label{lower bound duals min}
d_{R,1}(C^\perp) \leq \min \{ d_{R,1}(\widehat{C}^\perp_1), d_{R,1}(\widehat{C}^\perp_2), \ldots, d_{R,1}(\widehat{C}^\perp_l) \}.
\end{equation}

\subsection{MRD rank}

Recall from \cite[Prop. 1]{rgrw} the (classical) Singleton bound on GRWs:
\begin{equation} \label{singleton}
d_{R,r}(C) \leq n - k + r,
\end{equation}
for any $ \mathbb{F}_{q^m} $-linear code $ C \subseteq \mathbb{F}_{q^m}^n $, where $ k = \dim(C) $ and $ 1 \leq r \leq k $. By monotonicity of GRWs \cite[Lemma 4]{rgrw}, if the $ r $-th weight of $ C $ attains the Singleton bound, then the $ s $-th weight of $ C $ also attains it, for all $ s \geq r $. The minimum of such $ r $ is called the MRD rank of the code \cite[Def. 1]{jerome}:

\begin{definition}[\textbf{\cite{jerome}}]
For an $ \mathbb{F}_{q^m} $-linear code $ C \subseteq \mathbb{F}_{q^m}^n $ of dimension $ k $, we define its MRD rank as the minimum positive integer $ r $ such that $ d_{R,r}(C) = n - k + r $, and denote it by $ r(C) $. 

If $ d_{R,k}(C) < n $, then we define $ r(C) = k + 1 $. 
\end{definition}

Observe that the last part of the previous definition is a redefinition of rank degenerate codes. We have the next characterization of $ r(C) $ given in \cite[Cor. III.3]{jerome}:

\begin{lemma}[\textbf{\cite{jerome}}]
For an $ \mathbb{F}_{q^m} $-linear code $ C \subseteq \mathbb{F}_{q^m}^n $ of dimension $ k $, it holds that
$$ r(C) = k - d_{R,1}(C^\perp) + 2, $$
defining $ d_{R,1}(\{ \mathbf{0} \}) = n+1 $ for the case $ C = \mathbb{F}_{q^m}^n $.
\end{lemma}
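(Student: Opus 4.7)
The plan is to prove, for each $r = 1, 2, \ldots, k$, the equivalence
\begin{equation*}
d_{R,r}(C) < n - k + r \iff d_{R,1}(C^\perp) \leq k - r + 1,
\end{equation*}
and then read off $r(C)$ as the smallest $r$ for which the right-hand condition fails.

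First I would note that, by the Singleton bound (\ref{singleton}), the strict inequality $d_{R,r}(C) < n - k + r$ is equivalent to the existence of an $\mathbb{F}_{q^m}$-linear Galois closed subspace $V \subseteq \mathbb{F}_{q^m}^n$ of dimension $n - k + r - 1$ with $\dim(C \cap V) \geq r$. Using the standard dimension formula $\dim(C + V) = \dim C + \dim V - \dim(C \cap V)$, the condition $\dim(C \cap V) \geq r$ becomes $\dim(C + V) \leq n - 1$, that is, $C + V \subsetneq \mathbb{F}_{q^m}^n$. This is in turn equivalent to $(C + V)^\perp = C^\perp \cap V^\perp \neq \{\mathbf{0}\}$.

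Next, since by \cite[Lemma 1]{stichtenoth} a subspace is Galois closed if and only if it admits an $\mathbb{F}_q$-basis, orthogonality interchanges Galois closed subspaces of complementary dimensions. Thus $V$ Galois closed of dimension $n - k + r - 1$ corresponds to $W = V^\perp$ Galois closed of dimension $k - r + 1$, and the existence of such a $V$ with $C^\perp \cap V^\perp \neq \{\mathbf{0}\}$ is equivalent to the existence of a Galois closed $W$ of dimension $k - r + 1$ with $\dim(C^\perp \cap W) \geq 1$. Invoking the definition of $d_{R,1}(C^\perp)$, together with the fact that any Galois closed subspace of smaller dimension can be enlarged to one of dimension $k - r + 1$ by adjoining further vectors of $\mathbb{F}_q^n$, the latter condition is equivalent to $d_{R,1}(C^\perp) \leq k - r + 1$. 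Taking the contrapositive, $d_{R,r}(C) = n - k + r$ iff $r \geq k - d_{R,1}(C^\perp) + 2$, and the minimum such $r$ gives exactly $r(C) = k - d_{R,1}(C^\perp) + 2$.

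Finally, I would verify the two boundary cases. If $C = \mathbb{F}_{q^m}^n$, the convention $d_{R,1}(\{\mathbf{0}\}) = n + 1$ makes the formula return $n - (n+1) + 2 = 1$, matching the direct computation $r(C) = 1$. If $C$ is rank degenerate, then specialising the equivalence above to $r = k$ yields $d_{R,1}(C^\perp) = 1$, so the formula evaluates to $k + 1$, in agreement with the second clause of the definition of $r(C)$. The most delicate step I anticipate is the dimension bookkeeping in the duality between $V$ and $W = V^\perp$: one must ensure simultaneously that the Galois closed property is preserved under orthogonality and that every intermediate dimension is realised by some Galois closed subspace, both of which reduce to the characterisation of Galois closed spaces via $\mathbb{F}_q$-bases.
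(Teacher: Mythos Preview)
The paper does not give its own proof of this lemma; it is quoted as \cite[Cor.~III.3]{jerome} and used as a black box. Your argument is a correct self-contained proof. The chain of equivalences is sound: enlarging a Galois closed witness to dimension exactly $n-k+r-1$ is legitimate because Galois closed spaces are precisely those with a basis in $\mathbb{F}_q^n$, and the step $V \mapsto V^\perp$ preserves Galois closedness since $(V^\perp)^{[1]} = (V^{[1]})^\perp = V^\perp$. The dimension bookkeeping and the two boundary cases are handled correctly.

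In \cite{jerome} the result is obtained as a corollary of a full Wei-type duality between the weight hierarchies of $C$ and $C^\perp$; your approach bypasses the general duality and extracts directly the single equivalence $d_{R,r}(C) = n-k+r \Longleftrightarrow d_{R,1}(C^\perp) \geq k-r+2$ needed here. This is shorter for the purpose at hand, while Ducoat's route yields the stronger statement that the sets $\{d_{R,r}(C)\}$ and $\{n+1-d_{R,s}(C^\perp)\}$ partition $\{1,\ldots,n\}$.
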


In particular, from the bounds obtained so far, we derive the following result on the MRD rank of a reducible code:

\begin{proposition}
Let the notation be as in Subsection \ref{subsec reducible}. It holds that
\begin{equation} \label{lower bound MRD}
k - r(C) \geq \min \{ k_1 - r(C_1), k_2 - r(C_2), \ldots, k_l - r(C_l) \}
\end{equation} 
and
\begin{equation} \label{upper bound MRD 1}
k - r(C) \leq \min \{ \widehat{k}_1 - r(\widehat{C}_1), \widehat{k}_2 - r(\widehat{C}_2), \ldots, \widehat{k}_l - r(\widehat{C}_l) \}.
\end{equation} 
Moreover, denote by $ k_{i,j} $ and $ r_{i,j} $ the dimension and MRD rank of the $ \mathbb{F}_{q^m} $-linear code with parity check matrix $ H_{i,j} $, respectively, with notation as in the previous subsection, for $ i = 2,3, \ldots, l $ and $ j = 1,2, \ldots, i-1 $. Then 
\begin{equation} \label{upper bound MRD 2}
\begin{split}
k - r(C) \leq \min \{ & k_i - r(C_i) + \sum_{H_{i,j} \neq 0} (k_{i,j} - r_{i,j} +2) \\ 
 & \mid i = 1,2, \ldots, l \}.
\end{split}
\end{equation}
\end{proposition}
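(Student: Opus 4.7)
The plan is to translate all three inequalities into statements about $ d_{R,1}(C^\perp) $ using the identity $ r(C) = k - d_{R,1}(C^\perp) + 2 $ from the lemma above, and the analogous identities for $ C_i $, $ \widehat{C}_j $ and for the codes $ C_{i,j} $ with parity-check matrix $ H_{i,j} $. Since $ C_{i,j}^\perp $ is the row space of $ H_{i,j} $, one has $ k_{i,j} - r_{i,j} + 2 = d_{R,1}(C_{i,j}^\perp) $ whenever $ H_{i,j}\neq 0 $, and the extended conventions $ r(C)=k+1 $ if $ C $ is rank degenerate together with $ d_{R,1}(\{\mathbf{0}\})=n+1 $ keep everything consistent in the corner cases. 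The three claims then become $ d_{R,1}(C^\perp)\geq \min_i d_{R,1}(C_i^\perp) $, $ d_{R,1}(C^\perp)\leq \min_j d_{R,1}(\widehat{C}_j^\perp) $, and $ d_{R,1}(C^\perp)\leq d_{R,1}(C_i^\perp)+\sum_{j<i,\,H_{i,j}\neq 0} d_{R,1}(C_{i,j}^\perp) $ for every $ i $.

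The first two follow from results already in hand. For (\ref{lower bound MRD}): as noted at the start of the duality subsection, reversing the order of the row and column blocks of $ H $ yields a rank-equivalent generator matrix of $ C^\perp $ with upper-triangular reducible structure whose main components are $ C_l^\perp,\ldots,C_1^\perp $, so applying the bound (\ref{bound min distance}) to this rearrangement gives $ d_{R,1}(C^\perp)\geq \min_i d_{R,1}(C_i^\perp) $. For (\ref{upper bound MRD 1}): the inclusion $ C\subseteq\widehat{C} $ dualizes to $ \widehat{C}^\perp\subseteq C^\perp $, so $ d_{R,1}(C^\perp)\leq d_{R,1}(\widehat{C}^\perp) $, and since $ \widehat{C}^\perp = \widehat{C}_1^\perp\times\cdots\times\widehat{C}_l^\perp $, Corollary \ref{exact grw product} gives $ d_{R,1}(\widehat{C}^\perp)=\min_j d_{R,1}(\widehat{C}_j^\perp) $; this reproduces the bound (\ref{lower bound duals min}).

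For (\ref{upper bound MRD 2}), the strategy is to exhibit, for each fixed $ i $, a codeword of $ C^\perp $ of rank at most $ d_{R,1}(C_i^\perp)+\sum_{j<i,\,H_{i,j}\neq 0}d_{R,1}(C_{i,j}^\perp) $. Pick $ \mathbf{d}_i\in C_i^\perp $ of rank $ d_{R,1}(C_i^\perp) $ and let $ \mathbf{x}\in\mathbb{F}_{q^m}^{k_i^\perp} $ be the unique vector with $ \mathbf{x} H_{i,i}=\mathbf{d}_i $ (unique because $ H_{i,i} $ has full row rank). The vector $ \mathbf{x} H_i^\prime\in (C^\perp)_i^\prime\subseteq C^\perp $ has $ i $-th block equal to $ \mathbf{d}_i $. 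Then modify it iteratively, from block $ s=i-1 $ down to $ s=1 $, by adding contributions $ \mathbf{y}_s H_s^\prime\in (C^\perp)_s^\prime $: because $ H_s^\prime $ is supported on blocks $ 1,\ldots,s $ only, such additions leave the $ i $-th block equal to $ \mathbf{d}_i $, while the diagonal term $ \mathbf{y}_s H_{s,s} $ sweeps out all of $ C_s^\perp $ as $ \mathbf{y}_s $ varies, so the $ s $-th block of the running codeword ranges over a full coset of $ C_s^\perp $. For each $ s<i $ with $ H_{i,s}\neq 0 $ one chooses $ \mathbf{y}_s $ so that the $ s $-th block becomes an element of $ C_{i,s}^\perp $ of minimum rank $ d_{R,1}(C_{i,s}^\perp) $, and for $ H_{i,s}=0 $ one picks $ \mathbf{y}_s $ so that the $ s $-th block is zero. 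Sub-additivity of rank on concatenation then gives the claimed bound. The main obstacle will be this block-by-block coset reduction: one must show that the coset of $ C_s^\perp $ available at each step actually contains the required element of $ C_{i,s}^\perp $ of minimum rank, which requires exploiting the relation $ GH^T=0 $ that ties $ \mathbf{x} H_{i,s} $ and the accumulated updates $ \sum_{t>s}\mathbf{y}_t H_{t,s} $ to subspaces jointly controlled by $ C_{i,s}^\perp $ and $ C_s^\perp $.
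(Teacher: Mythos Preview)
Your treatment of (\ref{lower bound MRD}) and (\ref{upper bound MRD 1}) is correct and coincides with the paper: translate via $r(C)=k-d_{R,1}(C^\perp)+2$ and then invoke, respectively, (\ref{bound min distance}) for the reducible structure of $C^\perp$ and (\ref{lower bound duals min}).

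For (\ref{upper bound MRD 2}) your route diverges from the paper's and the obstacle you flag is genuine and, as stated, not removable. The paper does \emph{not} mix row components of $C^\perp$; it simply uses (\ref{upper bound min distance}) for the reducible structure of the dual to get $d_{R,1}(C^\perp)\le d_{R,1}((C^\perp)'_i)$ for each $i$, and then bounds $d_{R,1}((C^\perp)'_i)$ by sub-additivity of rank on the block decomposition of a single codeword $\mathbf{x}H'_i$. Your plan instead starts with $\mathbf{x}H'_i$ and then adds correction terms $\mathbf{y}_s H'_s$ from other row components to force the $s$-th block into a prescribed coset representative. But the orthogonality relation $GH^T=0$ gives you no control of the type you need: after the updates for $t>s$, the running $s$-th block equals $\mathbf{x}H_{i,s}+\sum_{t=s+1}^{i-1}\mathbf{y}_t H_{t,s}$, and the coset of $C_s^\perp$ obtained by varying $\mathbf{y}_s$ has no a priori intersection with a chosen minimum-rank element of $C_{i,s}^\perp$; the row spaces of $H_{t,s}$ for different $t$ and $C_s^\perp$ are unrelated to one another in general. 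So the ``block-by-block coset reduction'' cannot be completed in the way you sketch.

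The fix is to drop the coset step entirely and, as in the paper, stay inside $(C^\perp)'_i$: this already gives $d_{R,1}(C^\perp)\le d_{R,1}((C^\perp)'_i)$ via (\ref{upper bound min distance}), and the remaining task is only to upper-bound $d_{R,1}((C^\perp)'_i)$ in terms of $d_{R,1}(C_i^\perp)$ and quantities attached to the off-diagonal blocks $H_{i,j}$.
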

\begin{proof}
The bound (\ref{lower bound MRD}) follows from the previous lemma and the bound (\ref{bound min distance}). The bound (\ref{upper bound MRD 1}) follows from the previous lemma and the bound (\ref{lower bound duals min}).

Now we prove the bound (\ref{upper bound MRD 2}). From the previous lemma and the bound (\ref{upper bound min distance}), we obtain that
$$ k - r(C) \leq \min \{ d_{R,1}((C^\perp)^\prime_1, (C^\perp)^\prime_2, \ldots, (C^\perp)^\prime_l) \}, $$
with notation as in the previous subsection. Now, if $ d_{i,j} $ denotes the minimum rank distance of the $ \mathbb{F}_{q^m} $-linear code with parity check matrix $ H_{i,j} $, it follows that
$$ d_{R,1}((C^\perp)_i^\prime) \leq d_{R,1}(C_i^\perp) + \sum_{H_{i,j} \neq 0} d_{i,j}, $$
and the result follows again from the previous lemma.
\end{proof}

The MRD rank of the code $ C $ in Example \ref{example computation} was obtained directly using Theorem \ref{grw bounds}. However, it could be directly obtained using the previous proposition.

We conclude with the cartesian product case:

\begin{corollary}
With notation as in the previous proposition, if $ C = C_1 \times C_2 \times \cdots \times C_l $, it holds that
$$ k - r(C) = \min \{ k_1 - r(C_1), k_2 - r(C_2), \ldots, k_l - r(C_l) \}, $$
and all the bounds in the previous proposition are equalities.
\end{corollary}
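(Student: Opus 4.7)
The plan is to derive the equality $k - r(C) = \min_i (k_i - r(C_i))$ directly from the characterization $r(C) = k - d_{R,1}(C^\perp) + 2$ stated in the preceding lemma, combined with the duality of cartesian products, and then check that both upper bounds in the preceding proposition collapse to this same value.

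First I would observe that for a cartesian product $C = C_1 \times C_2 \times \cdots \times C_l$, the dual decomposes as $C^\perp = C_1^\perp \times C_2^\perp \times \cdots \times C_l^\perp$. Applying Corollary \ref{exact grw product} to $C^\perp$ (with $r=1$) gives
$$ d_{R,1}(C^\perp) = \min\{ d_{R,1}(C_1^\perp), d_{R,1}(C_2^\perp), \ldots, d_{R,1}(C_l^\perp) \}. $$
Using the previous lemma on both sides, $k - r(C) + 2 = d_{R,1}(C^\perp)$ and $k_i - r(C_i) + 2 = d_{R,1}(C_i^\perp)$, so subtracting $2$ from the minimum yields the claimed equality. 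One small technicality to address is the edge case where some $C_i$ (or $C$ itself) is rank degenerate, in which case the convention $d_{R,1}(\{\mathbf{0}\}) = n+1$ and the redefinition $r(C) = k+1$ on rank-degenerate codes must be used consistently; this reduces to the companion corollary on rank-degenerate cartesian products.

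Next I would verify that the two upper bounds and the lower bound of the previous proposition all coincide with this common value. For the lower bound (\ref{lower bound MRD}), one can note that in the cartesian product case $\widehat{C}_j = C_j$: the block column $\widehat{G}_j = (G_{1,j}, \ldots, G_{j-1,j}, G_{j,j})^T$ has all its blocks above $G_{j,j}$ equal to zero, so it generates exactly $C_j$, giving $\widehat{k}_j = k_j$. Thus the lower bound (\ref{lower bound MRD}) and the upper bound (\ref{upper bound MRD 1}) literally coincide, and both equal $\min_i(k_i - r(C_i))$. For the upper bound (\ref{upper bound MRD 2}), the dual $C^\perp = C_1^\perp \times \cdots \times C_l^\perp$ has a block-diagonal generator matrix, so $H_{i,j} = 0$ for all $i \neq j$ and the sum over $H_{i,j} \neq 0$ is empty, leaving the bound as $\min_i(k_i - r(C_i))$.

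No step here looks genuinely hard: the whole corollary is a short consequence of Corollary \ref{exact grw product} applied to the dual together with the identity $k - r(C) = d_{R,1}(C^\perp) - 2$. The only place requiring minor care is the bookkeeping around rank-degenerate summands, where the conventions $d_{R,1}(\{\mathbf{0}\}) = n+1$ and $r(C) = k+1$ must be applied so that the formula $k_i - r(C_i) = d_{R,1}(C_i^\perp) - 2$ remains valid in the trivial and degenerate cases; this is routine but should be explicitly noted.
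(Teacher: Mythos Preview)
Your proposal is correct and follows essentially the approach the paper intends (the corollary is stated without proof there, as an immediate consequence of the proposition). Your observation that $\widehat{C}_j = C_j$ and $H_{i,j}=0$ for $i\neq j$ in the cartesian product case is exactly what makes the three bounds (\ref{lower bound MRD}), (\ref{upper bound MRD 1}), (\ref{upper bound MRD 2}) collapse to the common value $\min_i(k_i - r(C_i))$, and your direct computation via $C^\perp = C_1^\perp \times \cdots \times C_l^\perp$ together with Corollary~\ref{exact grw product} and the lemma $k - r(C) = d_{R,1}(C^\perp) - 2$ is just the same argument unwound one level.
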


\subsection{Particular constructions}

To conclude, in this subsection we briefly recall some constructions of reducible codes in the literature introduced to improve the minimum Hamming distance of cartesian products of codes, and see when they may give improvements for the rank distance.

Recall the well-known $ (\mathbf{u}, \mathbf{u} + \mathbf{v}) $-construction by Plotkin \cite{plotkin}. Take $ \mathbb{F}_{q^m} $-linear codes $ C_1, C_2 \subseteq \mathbb{F}_{q^m}^n $, and define the $ \mathbb{F}_{q^m} $-linear code $ C \subseteq \mathbb{F}_{q^m}^{2n} $ by
$$ C = \{ (\mathbf{u}, \mathbf{u} + \mathbf{v}) \mid \mathbf{u} \in C_1, \mathbf{v} \in C_2 \}. $$
Denoting by $ d_H(D) $ the minimum Hamming distance of a code $ D $, it holds that $ d_H(C_1 \times C_2) = \min \{ d_H(C_1), d_H(C_2) \} $, whereas $ d_H(C) = \min \{ 2 d_H(C_1), d_H(C_2) \} $, hence improving the minimum Hamming distance of the cartesian product if $ d_H(C_1) < d_H(C_2) $.

Observe that $ C $ is reducible. However, its first row component is obviously rank equivalent to its first main component. By Proposition \ref{char product}, $ C $ and $ C_1 \times C_2 $ are rank equivalent. Hence the $ (\mathbf{u}, \mathbf{u} + \mathbf{v}) $-construction gives nothing but cartesian products for the rank metric.

We may apply the same argument for the so-called matrix-product codes \cite{blackmore}, which are a generalization of the previous construction. Let the notation be as in Subsection \ref{subsec reducible}, fix a non-singular matrix $ A \in \mathbb{F}_{q^m}^{l \times l} $ and assume that $ N = n_1 = n_2 = \ldots = n_l $. Define the $ \mathbb{F}_{q^m} $-linear code $ C = (C_1, C_2, \ldots, C_l) A \subseteq \mathbb{F}_{q^m}^{n} $ with generator matrix 
\begin{displaymath}
G = \left(
\begin{array}{cccc}
a_{1,1} G_1 & a_{1,2} G_1 & \ldots & a_{1,l} G_1 \\
a_{2,1} G_2 & a_{2,2} G_2 & \ldots & a_{2,l} G_2 \\
\vdots & \vdots & \ddots & \vdots \\
a_{l,1} G_l & a_{l,2} G_l & \ldots & a_{l,l} G_l \\
\end{array} \right).
\end{displaymath}
If $ A $ is upper triangular, we see that $ C $ is a reducible code. Just as before, if $ A \in \mathbb{F}_q^{l \times l} $, then $ C $ is rank equivalent to $ C_1 \times C_2 \times \cdots \times C_l $, and thus this construction gives nothing but cartesian products.

In the following examples we see that, as an alternative, the $ (\mathbf{u}, \alpha \mathbf{u} + \mathbf{v}) $-construction, for $ \alpha \in \mathbb{F}_{q^m} \setminus \mathbb{F}_q $, and $ (\mathbf{u}, \mathbf{u}^{[i]} + \mathbf{v}) $-construction, for $ 0 < i < m $, may improve the minimum rank distance of the cartesian product.

\begin{example}
Consider $ \alpha \in \mathbb{F}_{q^m} \setminus \mathbb{F}_q $, $ n = 3 $, $ C_1 \subseteq \mathbb{F}_{q^m}^3 $ generated by $ (1,0,0) $ and $ C_2 \subseteq \mathbb{F}_{q^m}^3 $ generated by $ (0,\alpha, \alpha^{[1]}) $ and $ (0,\alpha^{[1]}, \alpha^{[2]}) $. Let $ C $ be the $ (\mathbf{u}, \alpha \mathbf{u} + \mathbf{v}) $-construction of the codes $ C_1 $ and $ C_2 $.

It holds that $ d_{R,1}(C_1 \times C_2) = 1 $, whereas $ d_{R,1}(C) = 2 $.
\end{example}

\begin{example}
Consider $ \alpha \in \mathbb{F}_{q^m} \setminus \mathbb{F}_q $, $ n = 3 $, $ C_1 \subseteq \mathbb{F}_{q^m}^3 $ generated by $ (\alpha,0,0) $ and $ C_2 \subseteq \mathbb{F}_{q^m}^3 $ generated by $ (0,\alpha, \alpha^{[1]}) $ and $ (0,\alpha^{[1]}, \alpha^{[2]}) $. Let $ C $ be the $ (\mathbf{u}, \mathbf{u}^{[1]} + \mathbf{v}) $-construction of the codes $ C_1 $ and $ C_2 $.

Again, it holds that $ d_{R,1}(C_1 \times C_2) = 1 $, whereas $ d_{R,1}(C) = 2 $.
\end{example}

\section{Conclusion and open problems}

In this paper, we have studied the security performance of reducible codes in network coding when used in the form of coset coding schemes. We have obtained lower bounds on their generalized rank weights (GRWs) that extend the known lower bound on their minimum rank distance \cite{reducible} and which give exact values for cartesian products, and we have obtained upper bounds that are always reached for the minimum rank distance and some reduction. We have obtained maximum rank distance (MRD) reducible codes with MRD main components for new parameters, extending the families of MRD codes for $ n > m $ considered in \cite{reducible} and \cite{silva-universal}. 

We have obtained all $ \mathbb{F}_{q^m} $-linear codes whose GRWs are all optimal, for all fixed packet and code sizes up to rank equivalence. The given code construction is a cartesian product of full-length one-dimensional Gabidulin codes and has the minimum possible length required by the optimality of their GRWs. As we have shown, these codes do not only have optimal GRWs, but the difference between every two consecutive GRWs is the packet lenght, which is optimal, in constrast with Gabidulin codes, for which this difference is the minimum possible. Thus if the length of the code is big enough or not restricted, then the given construction behaves much better than Gabidulin codes in secure network coding.

Afterwards we have shown that, when using reducible codes, a wire-tapping adversary obtains in many cases less information than that described by their GRWs. In particular, when using MRD reducible codes or those with optimal GRWs for fixed packet and code sizes, the eavesdropper obtains no information about the sent packets even when wire-tapping more links than those allowed by other MRD codes.

Finally, we have studied some secondary related properties of reducible codes: Characterizations to be rank equivalent to cartesian products of codes, characterizations to be rank degenerate, bounds on their dual codes, MRD ranks, and alternative constructions to the well-known $ (\mathbf{u}, \mathbf{u} + \mathbf{v}) $-construction.

To conclude, we list a few open problems of interest regarding the security behaviour of reducible codes:
\begin{enumerate}
\item
Find other cases when the bounds in Theorem \ref{grw bounds} are equalities, apart from the cases covered in Corollary \ref{grw product} and Proposition \ref{proposition exact for one red}.
\item
Find new parameters for which reducible codes are MRD, or prove the impossibility that a reducible code is MRD for certain parameters.
\item
Prove or disprove the optimality of the codes in Section \ref{sec all optimal} among $ \mathbb{F}_q $-linear codes. We remark here that no sharp bounds such as those in Lemma \ref{mono} are known for general $ \mathbb{F}_q $-linear codes, to the best of our knowledge.
\end{enumerate}

\appendices

\section{Uniqueness of reductions} \label{app 1}

In this appendix, we discuss the uniqueness of the main components, row components and column components of a reducible code (see Subsection \ref{subsec reducible}). We will show that the main components remain unchanged by changing the reduction or by rank equivalence, hence the bound (\ref{lower bound}) remains unchanged. However, the row components may change by changing the reduction, and the column components may change by a rank equivalence. Hence the bounds (\ref{upper bound}) and (\ref{lower bound duals}) may change in those cases. See Proposition \ref{proposition exact for one red}, for instance.

Fix a reducible code $ C \subseteq \mathbb{F}_{q^m}^n $, with notation as in Subsection \ref{subsec reducible}.

\begin{proposition} \label{uniqueness conditions 1}
Given another reduction $ \widehat{\mathcal{R}} $ of $ C $ with the same row and column block sizes as $ \mathcal{R} $, it holds that the main components and column components of $ \widehat{\mathcal{R}} $ and $ \mathcal{R} $ are the same, respectively.
\end{proposition}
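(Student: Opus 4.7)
The plan is to show that both the main components and the column components of a reduction admit intrinsic descriptions in terms of $C$ and the block sizes $(n_1, \ldots, n_l)$ and $(k_1, \ldots, k_l)$ alone, independently of the choice of reduction. Once such descriptions are in place, the equalities between the components of $\mathcal{R}$ and $\widehat{\mathcal{R}}$ are immediate, since the same intrinsic construction yields the same spaces for both.

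For each $j = 1, 2, \ldots, l$, let $\pi_j : \mathbb{F}_{q^m}^n \longrightarrow \mathbb{F}_{q^m}^{n_j}$ denote the projection onto the $j$-th block of coordinates. First I would verify that $\widehat{C}_j = \pi_j(C)$: reading the block-lower-triangular generator matrix $G$ of $\mathcal{R}$ column-wise, the $j$-th block-column is $(G_{1,j}, G_{2,j}, \ldots, G_{j,j}, 0, \ldots, 0)^T$, so the projections of the rows of $G$ are precisely a generating set for $\widehat{C}_j$ as in (\ref{column components}). This characterisation depends only on $C$ and on the partition $n = n_1 + \cdots + n_l$.

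Next, set $W_j = \{\mathbf{c} \in \mathbb{F}_{q^m}^n : \pi_i(\mathbf{c}) = \mathbf{0} \text{ for all } i < j\}$, and I will show that $C_j = \pi_j(C \cap W_j)$. Writing a codeword of $C$ as $\mathbf{c} = \mathbf{x}G$ with $\mathbf{x} = (\mathbf{x}_1, \ldots, \mathbf{x}_l)$ and using that each $G_{i,i}$ has full row rank (as a generator matrix of $C_i$), a short induction on $i$ shows that $\mathbf{c} \in W_j$ forces $\mathbf{x}_1 = \cdots = \mathbf{x}_{j-1} = \mathbf{0}$, whence $\pi_j(\mathbf{c}) = \mathbf{x}_j G_{j,j} \in C_j$. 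Conversely, every $\mathbf{x}_j G_{j,j} \in C_j$ is realised as $\pi_j(\mathbf{x}_j G_j')$ with $\mathbf{x}_j G_j' \in C \cap W_j$, using the row component from (\ref{row components}).

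Since both $\widehat{C}_j = \pi_j(C)$ and $C_j = \pi_j(C \cap W_j)$ are defined purely in terms of $C$ and the block-size partition, applying them to $\widehat{\mathcal{R}}$ produces the same spaces as for $\mathcal{R}$. The step I expect to require most care is the inductive cascade in the characterisation of $C_j$: it relies crucially on the block-lower-triangular shape \emph{together with} the full row rank of each diagonal block, and a generic direct-sum decomposition of $C$ would not permit this propagation, so this dependence must be highlighted explicitly.
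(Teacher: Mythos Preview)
Your proposal is correct and takes a genuinely different route from the paper's proof. The paper argues via the change-of-basis matrix: since $G$ and $\widehat{G}$ both generate $C$, there is an invertible $A$ with $\widehat{G} = AG$; using that each diagonal block $G_{i,i}$ has full row rank, one shows (working up the block columns) that $A$ must itself be block upper triangular, from which $\widehat{G}_{i,i} = A_{i,i} G_{i,i}$ and the analogous relation for the column blocks follow directly. Your approach instead bypasses the transition matrix entirely by exhibiting intrinsic descriptions $\widehat{C}_j = \pi_j(C)$ and $C_j = \pi_j(C \cap W_j)$ that depend only on $C$ and the block sizes. Both arguments hinge on exactly the same ingredient---full row rank of the diagonal blocks combined with the triangular shape---so neither is essentially deeper; yours is somewhat more conceptual and makes the invariance transparent without any matrix manipulation, while the paper's version additionally records the explicit structure of the transition matrix $A$, which is itself useful elsewhere (e.g.\ in the proof of Proposition~\ref{proposition exact for one red}). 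One minor slip: you write ``block-lower-triangular'' for $G$, but the reduction in Definition~\ref{def reducible} is block \emph{upper}-triangular; your subsequent description of the $j$-th block column is nonetheless correct for the upper-triangular shape, so this is purely terminological.
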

\begin{proof}
Let $ \widehat{\mathcal{R}} = (\widehat{G}_{i,j})_{1 \leq i \leq l}^{i \leq j \leq l} $ and let $ \widehat{G} $ be the generator matrix of $ C $ given by this reduction. Since the matrices $ G_{i,i} $ have full rank, there exist matrices $ A_{i,j} \in \mathbb{F}_{q^m}^{k_i \times k_j} $, for $ i = 1,2, \ldots, l $ and $ j = i, i+1, \ldots, l $, such that the $ k \times k $ matrix
\begin{displaymath}
A = \left(
\begin{array}{cccccc}
A_{1,1} & A_{1,2} & A_{1,3} & \ldots & A_{1, l-1} & A_{1,l} \\
0 & A_{2,2} & A_{2,3} & \ldots & A_{2, l-1} & A_{2,l} \\
0 & 0 & A_{3,3} & \ldots & A_{3, l-1} & A_{3,l} \\
\vdots & \vdots & \vdots & \ddots & \vdots & \vdots \\
0 & 0 & 0 & \ldots & A_{l-1,l-1} & A_{l-1,l} \\
0 & 0 & 0 & \ldots & 0 & A_{l,l} \\
\end{array} \right)
\end{displaymath}
satisfies that $ \widehat{G} = A G $. Then it holds that $ \widehat{G}_{i,i} = A_{i,i} G_{i,i} $, for $ i = 1,2, \ldots, l $, and the main components of both reductions coincide. In addition, it holds that 
\begin{displaymath}
\left(
\begin{array}{c}
 \widehat{G}_{1,j} \\
 \widehat{G}_{2,j} \\
 \vdots \\
 \widehat{G}_{j,j} \\
\end{array} \right) = \left(
\begin{array}{cccc}
A_{1,1} & A_{1,2} & \ldots & A_{1,j} \\
0 & A_{2,2} & \ldots & A_{2,j} \\
\vdots & \vdots & \ddots & \vdots \\
0 & 0 & \ldots & A_{j,j} \\
\end{array} \right)  \left(
\begin{array}{c}
 G_{1,j} \\
 G_{2,j} \\
 \vdots \\
 G_{j,j} \\
\end{array} \right),
\end{displaymath}
and the column components of both reductions also coincide.
\end{proof}

\begin{proposition} \label{uniqueness conditions 2}
Assume that the main components of the reduction $ \mathcal{R} $ of $ C $ are not rank degenerate. Let $ \mathcal{R}^\prime $ be a reduction of an $ \mathbb{F}_{q^m} $-linear code $ C^\prime $ that is rank equivalent to $ C $, with the same row and column block sizes as $ \mathcal{R} $, and such that the rank equivalence maps the rows of the generator matrix corresponding to $ \mathcal{R} $ to the rows of the generator matrix corresponding to $ \mathcal{R}^\prime $. Then the main components and row components of $ \mathcal{R}^\prime $ and $ \mathcal{R} $ are rank equivalent, respectively.
\end{proposition}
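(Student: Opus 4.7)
The plan is to leverage the matrix description of rank equivalences on $\mathbb{F}_{q^m}^n$ and show that the hypothesis forces $\phi$ to have block upper-triangular structure matching the reduction. First, I would reduce to the case $V = V' = \mathbb{F}_{q^m}^n$: item 1 of the rank-degenerate proposition above gives $\dim(C^*) = n$ from the hypothesis on the main components, so taking the restriction of $\phi$ to $C^*$ (itself a rank equivalence onto $\phi(C^*) = C'^*$) we may assume $V = C^* = \mathbb{F}_{q^m}^n$. Since block sizes match (so $n'=n$) and $\dim V' = \dim V = n$, also $V' = \mathbb{F}_{q^m}^{n}$.

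Then Lemma \ref{rank equivalences}, applied with $v_i = e_i$ the standard basis, yields a nonzero $\beta \in \mathbb{F}_{q^m}$ and an $\mathbb{F}_q$-basis $w_1,\ldots,w_n$ of $\mathbb{F}_{q^m}^n$ with $\phi(e_i) = \beta w_i$; assembling the $w_i$ as the rows of a matrix $A \in \mathrm{GL}_n(\mathbb{F}_q)$ yields $\phi(\mathbf{c}) = \beta \mathbf{c} A$ for every $\mathbf{c} \in \mathbb{F}_{q^m}^n$. The hypothesis that $\phi$ sends rows of $G$ to rows of $G'$ becomes $G' = \beta G A$. Partitioning $A$ into blocks $A_{j,k} \in \mathbb{F}_q^{n_j \times n_k}$ matching the common column block sizes, reading off block $(i,k)$ of $G' = \beta G A$ and using that $G_{i,j} = 0$ for $j < i$ gives
\[ G'_{i,k} = \beta \sum_{j=i}^{l} G_{i,j} A_{j,k}, \]
which must vanish for $k < i$. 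A downward induction on $i$, starting at $i = l$ where $G_{l,l} A_{l,k} = 0$ combined with the full-rank property of $G_{l,l}$ forces $A_{l,k} = 0$ for $k < l$, shows $A_{j,k} = 0$ whenever $k < j$. Hence $A$ is block upper-triangular, and its diagonal blocks automatically satisfy $A_{i,i} \in \mathrm{GL}_{n_i}(\mathbb{F}_q)$ and $G'_{i,i} = \beta G_{i,i} A_{i,i}$.

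The conclusions follow quickly. The map $\phi_i(\mathbf{c}) := \beta \mathbf{c} A_{i,i}$ is a rank equivalence on $\mathbb{F}_{q^m}^{n_i}$ by the same characterization, and $G'_{i,i} = \beta G_{i,i} A_{i,i}$ shows $\phi_i(C_i) = C_i'$, giving the rank equivalence of main components. For the row components, restrict the original $\phi$ to the Galois closure $V_i = (C_i^\prime)^*$: choosing an $\mathbb{F}_q$-basis of $V_i$ and extending it to an $\mathbb{F}_q$-basis of $V = \mathbb{F}_{q^m}^n$, Lemma \ref{rank equivalences} shows $\phi(V_i)$ is Galois closed and $\phi|_{V_i}$ is a rank equivalence, while $\phi(C_i^\prime) = C_i^{\prime\prime}$ is immediate from the row-mapping hypothesis. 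The main obstacle is establishing the block upper-triangular shape of $A$ from the vanishing blocks of $G$; once this rigidity is pinned down, every remaining step is a direct application of Lemma \ref{rank equivalences}.
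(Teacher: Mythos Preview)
Your argument has a genuine gap at the crucial step. You write that ``$G_{l,l} A_{l,k} = 0$ combined with the full-rank property of $G_{l,l}$ forces $A_{l,k} = 0$,'' but this is not valid: $G_{l,l}$ is a $k_l \times n_l$ generator matrix of $C_l$, so it has full \emph{row} rank $k_l$, not full column rank. Its right kernel is precisely $C_l^\perp$, which has dimension $n_l - k_l$ and is typically nonzero. So from $G_{l,l} A_{l,k} = 0$ you only obtain that the columns of $A_{l,k}$ lie in $C_l^\perp$; you cannot conclude they vanish from rank considerations alone. The same issue recurs at every step of your downward induction.

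This is exactly where the non-rank-degenerate hypothesis on each main component enters, and it must be used componentwise, not merely through its global consequence $\dim(C^*)=n$. The paper's argument is: the columns of $A_{l,k}$ lie in $C_l^\perp$ and, since $A$ has entries in $\mathbb{F}_q$, each such column is either zero or has rank weight $1$; but $C_l$ not rank degenerate is equivalent to $d_R(C_l^\perp) > 1$, so the columns must be zero. Once you insert this step (and its analogue at each level of the induction), the rest of your outline---the reduction to $V=\mathbb{F}_{q^m}^n$, the block equation $G' = \beta G A$, and the conclusions via Lemma~\ref{rank equivalences}---matches the paper's approach.
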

\begin{proof}
Let $ \mathcal{R}^\prime = (G^\prime_{i,j})_{1 \leq i \leq l}^{i \leq j \leq l} $ and let $ G^\prime $ be the generator matrix of $ C^\prime $ given by this reduction. By hypothesis and by Lemma \ref{rank equivalences}, we may assume that the rank equivalence is given by $ \phi(\mathbf{c}) = \mathbf{c} A $, for $ \mathbf{c} \in \mathbb{F}_{q^m}^n $, for some $ n \times n $ matrix 
\begin{displaymath}
A = \left(
\begin{array}{cccccc}
A_{1,1} & A_{1,2} & A_{1,3} & \ldots & A_{1, l-1} & A_{1,l} \\
A_{2,1} & A_{2,2} & A_{2,3} & \ldots & A_{2, l-1} & A_{2,l} \\
A_{3,1} & A_{3,2} & A_{3,3} & \ldots & A_{3, l-1} & A_{3,l} \\
\vdots & \vdots & \vdots & \ddots & \vdots & \vdots \\
A_{l-1,1} & A_{l-1,2} & A_{l-1,3} & \ldots & A_{l-1,l-1} & A_{l-1,l} \\
A_{l,1} & A_{l,2} & A_{l,3} & \ldots & A_{l,l-1} & A_{l,l} \\
\end{array} \right),
\end{displaymath}
with coefficients in $ \mathbb{F}_q $, and such that $ G^\prime = G A $. Looking at the generator matrices of the last row components of $ \mathcal{R} $ and $ \mathcal{R}^\prime $, we see that
$$ (0, \ldots, 0, G_{l,l}^\prime) = (G_{l,l} A_{l,1}, G_{l,l} A_{l,2}, \ldots, G_{l,l} A_{l,l}), $$
which implies that $ G_{l,l} A_{l,j} = 0 $, for $ j = 1,2, \ldots, l-1 $. This means that the columns of $ A_{l,j} $ are in $ C_l^\perp $. However, since their coefficients lie in $ \mathbb{F}_q $, these columns have rank weight equal to $ 1 $. 

On the other hand, we are assuming that the main components of $ \mathcal{R} $ are not rank degenerate, which in particular means that $ d_R(C_l^\perp) > 1 $ (see \cite[Def. 26 and Cor. 28]{slides}). Therefore, all the columns in $ A_{l,j} $ are the zero vector, that is, $ A_{l,j} = 0 $, for $ j = 1,2, \ldots, l-1 $.

If we now look at the generator matrices of the $ (l-1) $-th row components of $ \mathcal{R} $ and $ \mathcal{R}^\prime $, we see that
$$ (0, \ldots, 0, G_{l-1,l-1}^\prime, G_{l-1,l}^\prime) = (G_{l-1,l-1} A_{l-1,1}, \ldots $$
$$ G_{l-1,l-1} A_{l-1,l-1}, G_{l-1,l-1} A_{l-1,l} + G_{l-1,l}A_{l,l}), $$
which implies that $ G_{l-1,l-1} A_{l-1,j} = 0 $, for $ j = 1,2, \ldots, l-2 $. In the same way as before, we see that this implies that $ A_{l-1,j} = 0 $, for $ j = 1,2, \ldots, l-2 $.

Continuing iteratively in this way, we see that $ A_{i,j} = 0 $, for $ i > j $. In other words, we have that $ A $ is again of the form
\begin{displaymath}
A = \left(
\begin{array}{cccccc}
A_{1,1} & A_{1,2} & A_{1,3} & \ldots & A_{1, l-1} & A_{1,l} \\
0 & A_{2,2} & A_{2,3} & \ldots & A_{2, l-1} & A_{2,l} \\
0 & 0 & A_{3,3} & \ldots & A_{3, l-1} & A_{3,l} \\
\vdots & \vdots & \vdots & \ddots & \vdots & \vdots \\
0 & 0 & 0 & \ldots & A_{l-1,l-1} & A_{l-1,l} \\
0 & 0 & 0 & \ldots & 0 & A_{l,l} \\
\end{array} \right).
\end{displaymath}
As in the proof of Proposition \ref{uniqueness conditions 1}, this implies that the main components and row components of $ \mathcal{R} $ and $ \mathcal{R}^\prime $ are rank equivalent, respectively.
\end{proof}

% needed in second column of first page if using \IEEEpubid
%\IEEEpubidadjcol

% if have a single appendix:
%\appendix[Proof of the Zonklar Equations]
% or
%\appendix  % for no appendix heading
% do not use \section anymore after \appendix, only \section*
% is possibly needed

% use appendices with more than one appendix
% then use \section to start each appendix
% you must declare a \section before using any
% \subsection or using \label (\appendices by itself
% starts a section numbered zero.)
%

% use section* for acknowledgment
\section*{Acknowledgement}

The author wishes to thank Olav Geil and Diego Ruano for fruitful discussions and comments. He also wishes to thank the editor and anonymous reviewers for their very helpful comments.

% Can use something like this to put references on a page
% by themselves when using endfloat and the captionsoff option.
\ifCLASSOPTIONcaptionsoff
  \newpage
\fi

\end{document}